\newcommand{\CT}{\mathsf{CT}}
\newcommand{\RT}{\mathsf{RT}}
\newcommand{\IT}{\mathsf{IT}}
\renewcommand{\paragraph}[1]{\smallskip \noindent {\bf #1}}
\newcommand{\OUT}{\textsf{OUT}}
\newcommand{\OUTOFF}{\textsf{OUT}}
\newcommand{\OUTS}{\textsf{OUT}}
\newcommand{\C}{\mathcal{C}}
\newcommand{\T}{\mathcal{T}}
\newcommand{\I}{\mathcal{I}}
\renewcommand{\S}{\mathcal{S}}
\newcommand{\polylog}{\mathop{\mathrm{polylog}}}
\renewcommand{\O}{\Tilde{O}}
\newcommand{\eat}[1]{}
\newcommand{\change}[1]{{\color{black} #1}}
\newcommand{\newchange}[1]{{\color{black} #1}}
\DeclareMathOperator*{\argmax}{arg\,max}
\def\mparagraph#1{\par\smallskip\noindent{\textbf{#1.}}\quad\parindent 1.5em}
\newcommand{\eps}{\ensuremath{\varepsilon}}
\renewcommand{\Re}{\ensuremath{\mathbb{R}}}
\newcommand{\DTR}{\ensuremath{\mathcal{D}_R}}
\newcommand{\DTRQ}{\ensuremath{\mathcal{Q}_R}}
\newcommand{\DTC}{\ensuremath{\mathcal{D}}}
\newcommand{\DTCQ}{\mathsf{durableBallQ}}
\newcommand{\DTCE}{\ensuremath{\mathcal{D}'}}
\newcommand{\DTCEQ}{\mathsf{durableBallQ}'}
\newcommand{\dist}{\ensuremath{\phi}}
\newcommand{\ddim}{\ensuremath{\rho}}
\newcommand{\problemone}{\textsf{DurableTriangle}}
\newcommand{\problemtwo}{\textsf{IncrDurableTriangle}}
\newcommand{\problemthree}{\textsf{AggDurablePair}}
\newcommand{\SUM}{\textsf{SUM}}
\newcommand{\UNION}{\textsf{UNION}}
\newcommand{\rep}{\textsf{Rep}}
\renewcommand{\S}{\ensuremath{\mathcal{S}}}
\newcommand{\ITSUM}{\ensuremath{\IT^\Sigma}}
\newcommand{\DTCSUM}{\ensuremath{\smash{\DTC^\Sigma}}}
\newcommand{\NewerC}{\ensuremath{\Lambda}}
\newcommand{\OlderC}{\ensuremath{\smash{\overline{\Lambda}}}}
\newcommand{\Smax}{\ensuremath{\mathcal{S}_\alpha}}
\newcommand{\Scur}{\ensuremath{\mathcal{S}_\beta}}
\newcommand{\ITUNION}{\ensuremath{\IT^\cup}}
\newcommand{\DTCUNION}{\ensuremath{\smash{\DTC^\cup}}}
\begin{document}
\title{On Reporting Durable Patterns in Temporal Proximity Graphs}

\author{Pankaj K. Agarwal}
\affiliation{%
  \institution{Duke University}
  \city{Durham}
  \country{USA}}
\email{pankaj@cs.duke.edu}

\author{Xiao Hu}
\affiliation{%
  \institution{University of Waterloo}
  \city{Waterloo}
  \country{Canada}}
\email{xiaohu@uwaterloo.ca}

\author{Stavros Sintos}
\affiliation{%
  \institution{University of Illinois at Chicago}
  \city{Chicago}
  \country{USA}}
\email{stavros@uic.edu}

\author{Jun Yang}
\affiliation{%
  \institution{Duke University}
  \city{Durham}
  \country{USA}}
\email{junyang@cs.duke.edu}

\setcopyright{acmlicensed}
\acmJournal{PACMMOD}
\acmYear{2024} \acmVolume{2} \acmNumber{2 (PODS)} \acmArticle{81} \acmMonth{5}\acmDOI{10.1145/3651144}

\begin{abstract}
Finding patterns in graphs is a fundamental problem in databases and data mining.
In many applications, graphs are {\em temporal} and evolve over time,
so we are interested in finding {\em durable patterns}, such as triangles and paths, which persist over a long time.
While there has been work on finding durable simple patterns, existing algorithms do not have provable guarantees and run in strictly super-linear time.
The paper leverages the observation that many graphs arising in practice are naturally \emph{proximity graphs} or can be approximated as such,
where nodes are embedded as points in some high-dimensional space, and two nodes are connected by an edge if they are close to each other.
We work with an implicit representation of the proximity graph, where nodes are additionally annotated by time intervals,
and design near-linear-time algorithms for finding (approximately) durable patterns above a given durability threshold.
We also consider an interactive setting where a client experiments with different durability thresholds in a sequence of queries;
we show how to compute incremental changes to result patterns efficiently in time near-linear to the size of the changes.
\end{abstract}
\begin{CCSXML}
<ccs2012>
   <concept>
       <concept_id>10003752.10010070.10010111.10011710</concept_id>
       <concept_desc>Theory of computation~Data structures and algorithms for data management</concept_desc>
       <concept_significance>500</concept_significance>
       </concept>
 </ccs2012>
\end{CCSXML}

\ccsdesc[500]{Theory of computation~Data structures and algorithms for data management}

%\settopmatter{printacmref=false}

%\pagestyle{plain}
\keywords{temporal graph, proximity graph, durability, durable pattern, doubling dimension, cover tree}

\received{December 2023}
\received[revised]{February 2024}
\received[accepted]{March 2024}

\maketitle

\section{Introduction}

Finding patterns in large graphs is a fundamental problem in databases and data mining.
In many practical applications, graphs evolve over time, and we are often more interested in patterns that are ``durable,'' i.e., persisting over a long time.
Here are two examples of finding durable patterns in temporal graphs.

\begin{example}
\label{exmpl:1}
Consider an online forum with social networking features, where users with similar profiles are connected as friends.
Each user may be active on the forum only for a period of time during the day.
We are interested in finding cliques of connected users who are simultaneously active for a sufficiently long time period.
Such queries are useful %for example, 
to forum administrators who want to understand how the social network influences user interactions and leverage this knowledge to promote more interactions.
\end{example}

\begin{example}
\label{exmpl:2}
Consider a co-authorship graph where two researchers are connected if they have written at least one paper together.
Further suppose that researchers are each associated with a time period when they remain active in research.
Besides researchers with direct co-authorship, we might be interested in pairs who co-authored with a set of common researchers over a long period of time.
\newchange{We would not be interested in researchers with a common co-author if the respective collaborations happened at distant times.}
%For example, there might be two professors who have over the years collaborated with multiple students in common, even though these students may be research-active at different times.
\end{example}

The problem of finding (durable) patterns\change{, such as triangles,} in general graphs is \change{challenging}: known (conditional) lower bounds suggest that it is unlikely to have near-linear algorithms~\cite{abboud2014popular, alon2008testing, patrascu2010towards}.
However, many graphs that arise in practice are naturally \emph{proximity graphs}, or can be approximated as such.
In proximity graphs, nodes are embedded as points in some high-dimensional space, and two nodes are connected by an edge if they are close to each other (i.e., their distance is within some threshold).
For example, social networks such as %the graph in 
Example~\ref{exmpl:1} can be embedded (with small error) in the space of user profiles with a low intrinsic dimension~\cite{verbeek2014metric}.
Similarly, in the co-authorship graph, where two authors nodes are connected if they have written at least $m$ papers together for some $m\geq 1$,
the nodes can also be embedded in a space with low intrinsic dimension~\cite{yang2012defining}.
%because the co-authorship relationships are ``nicely'' structured in practice in that the authors can be efficiently clustered or grouped based on these relationships.
Generally, for many graphs arising in a wide range of applications (e.g. social network, transportation network, Internet), there exist appropriate node embeddings that preserve the structures and shortest paths in the original graphs~\cite{verbeek2014metric, zhao2010orion, zhao2011efficient}.
This observation enables us to leverage the properties of proximity graphs to develop efficient algorithms for finding patterns in such graphs, which overcome the hardness of the problem on arbitrary graphs.

This paper hence tackles the problem of finding durable patterns in temporal \emph{proximity} graphs, for which we are not aware of efficient algorithms.
For simplicity, we assume in this paper that the embedding of the graph is given -- there are efficient algorithms for computing graph embeddings~\cite{verbeek2014metric, chen1997algorithmic, kutuzov2020making, cai2018comprehensive, zeng2019accurate}.
We work with an implicit representation of the proximity graphs -- nodes represented as points and edges defined between pairs of points within a threshold distance in the embedding space. We never construct the graph itself explicitly.
We design efficient algorithms whose running time depend on the number of nodes and the intrinsic dimension (\emph{doubling dimension}) of the data.
Our approach extends naturally to other classes of graphs including interval graphs, permutation graphs, and grid graphs.
% A challenging question, which is a good future direction and beyond the scope of this work, is whether we can extend our approach to design algorithms for finding durable patterns on general graphs, without first computing an embedding, which can run in near-linear time if the underlying graph happens to be a proximity graph.
%
Next, we formally define the problems we study.
Our notation is summarized in Table~\ref{Table:Notation}.

\subsection{Problem Definitions}
\label{sec:intro:defs}

Let $(P,\dist)$ be a metric space over a set of $n$ points $P\subset \Re^d$, for some $d\geq 1$, and a metric $\dist$.
For a parameter $r>0$, let $G_\dist(P,r)=(P,E)$ where $E=\{(p,q) \mid \dist(p,q)\leq r\}$ be a {\em proximity graph}, also called a {\em unit disk} graph.
For simplicity, we assume $r=1$ and let $G_\dist(P)=G_\dist(P,1)$.
Suppose a function $I$ assigns each point $p\in P$ to a time interval called its \emph{lifespan}, denoted $I_p=[I_p^-, I_p^+]$.
We can interpret the lifespan of $p$ as inserting $p$ at time-stamp $I_p^-$ and deleting it at time-stamp $I_p^+$.
We use $(P,\dist,I)$ to refer to the
%(node-labeled)
\emph{temporal} proximity graph,
or the underlying metric space with points annotated with interval lifespans.
%\footnote{We note that there are many different ways to define temporal graphs. As shown in~\cite{hu2022computing}, the formulation of lifespan can capture many variants of other definitions and our algorithms can be applied correspondingly.}
For simplicity, all defined problems assume that the query pattern is triangle. As we point out in Section~\ref{sec:intro:results}, all techniques are extended to more general patterns.
\change{For an interval $I$, we define $|I|$ as the length of $I$. If $I$ is a set of intervals then $|I|$ is defined as the length of the union of intervals in $I$}.

\mparagraph{Durable triangles}
%
%Motivated by Example~\ref{exmpl:1}, we define {\em $\tau$-durable triangles} below.
A triplet $(p_1, p_2, p_3) \in P \times P \times P$ forms a {\em triangle} in $G_\dist(P)$ if $\dist(p_1, p_2)$, $\dist(p_2, p_3)$, $\dist(p_1,p_3)\le 1$.
We also introduce an approximate notion of triangles:
for a parameter $\eps>0$, a triplet $(p_1, p_2, p_3)$ \newchange{forms an {\em $(1+\eps)$-approximate triangle}, or {\em $\eps$-triangle} for brevity}, if $\dist(p_1, p_2)$, $\dist(p_2, p_3)$, $\dist(p_1,p_3)\leq 1+\eps$.
The {\em lifespan} of $(p_1, p_2, p_3)$ is defined as $I(p_1, p_2, p_3) = I_{p_1}\cap I_{p_2}\cap I_{p_3}$.
For a durability parameter $\tau > 0$, $(p_1, p_2, p_3)$ is \emph{$\tau$-durable} if $|I(p_1, p_2, p_3)|\ge \tau$.
Let $T_{\tau}, T_{\tau}^{\eps}$ be the set of $\tau$-durable triangles, and $\tau$-durable $\eps$-triangles respectively. Note that $T_{\tau}\subseteq T_{\tau}^{\eps}$.
%\newchange{ We say that a point $p$ \emph{anchors} a triangle $(p,q,s)$ if and only if, $(p,q,s)$ is a $\tau$-durable triangle, such that $I_p^-\geq \max\{I_q^-, I_s^i\}$, i.e., $p$ is the point whose lifespan starts the latest among the three.}
\newchange{Given a $\tau$-durable triangle with three points, the point that \emph{anchors} the triangle is the one whose lifespan starts the latest among the three. By convention, we will list the anchor first in the triplet; i.e., in a $\tau$-durable triangle $(p, q, s)$, we have $I_p^-\geq \max\{I_q^-, I_s^i\}$.}

\begin{table}[t]
\centering
 \begin{tabular}{c|c} 
 \hline
 $P$ & point set\\ \hline
 $n$ & $|P|$\\ \hline
 $\phi$ & distance function\\ \hline
 $\ddim$ & doubling dimension\\ \hline
 $\eps$ & distance approximation \\ \hline
  $\tau$ & durability parameter\\ \hline
 $I_p=[I_p^-,I_p^+]$ & lifespan (interval) of point $p$\\ \hline
 $T_\tau$ & $\tau$-durable triangles\\ \hline
 $T_\tau^\eps$ & $\tau$-durable $\eps$-triangles\\ \hline
 $K_\tau^\eps$ & $\tau$-SUM durable $\eps$-pairs\\ \hline
 $K_{\tau, \kappa}^\eps$ & $(\tau,\kappa)$-UNION durable $\eps$-pairs\\ \hline
 $\OUTOFF$ & Output size\\ \hline
\end{tabular}
\vspace{1em}
\caption{Table of Notations}
\label{Table:Notation}
\end{table}

%\medskip
%\textbf{\textit{Problem 1: \problemone.}}
\begin{definition}[\textbf{\problemone}]
    Given $(P,\dist, I)$ and $\tau \ge 0$, it asks to report all $\tau$-durable triangles (or $\eps$-triangles).   
\end{definition}
%
%\vspace{0.5em}

%For example, 
Suppose we have embedded the social network in Example~\ref{exmpl:1} as a proximity graph where nodes represent users.
The goal is to find triplets (or generally cliques) of users who are simultaneously active on the forum.%
\footnote{While for simplicity of exposition we assume that each node has a single-interval lifespan, it is straightforward to extend our temporal model consider multiple-interval lifespans, with the complexities of our solutions in the following sections increased by a factor equal to the maximum number of intervals per lifespan.}

In some use cases, we do not have a clear choice of the durability parameter $\tau$ in mind, and we may want to explore with different settings.
Supporting this mode of querying motivates the problem of incrementally reporting $\tau$-durable triangles.
Here, queries arrive in an online fashion, each specifying a different durability parameter $\tau_1, \tau_2, \ldots$.
Instead of computing each query $\tau_{i+1}$ from scratch, we want to leverage the previous query result $T_{\tau_i}$ and only incrementally compute what is new.
Note that every $\tau$-durable triangle must also be $\tau'$-durable for every $\tau' \le \tau$.
Therefore, if $\tau_{i+1} \ge \tau_i$, we have $T_{\tau_{i+1}} \subseteq T_{\tau_i}$ so we simply need to filter the old results to obtain new ones (assuming we remember results together with their lifespans). The more interesting case is when $\tau_{i+1} < \tau_i$, so $T_{\tau_{i+1}} \supseteq T_{\tau_i}$, and we need to incrementally report new results.
%(while all old results remain valid).
%This leads to the following problem definition.

\newcommand{\tauprev}{\ensuremath{{\tau_\prec}}}
%\vspace{0.5em}
%\textbf{\textit{Problem 2: \problemtwo.}}
%
\begin{definition}[\textbf{\problemtwo}]
    Given $(P,\dist, I)$ and $\tauprev > \tau > 0$, it asks to report all $\tau$-durable triangles (or $\eps$-triangles) that are not $\tauprev$-durable, along with their lifespans.    
\end{definition}

\mparagraph{Aggregate-durable pairs}
%
%Motivated by Example~\ref{exmpl:2}, we also investigate the notion of {\em aggregated durability} of node connectivity in temporal graphs.
Given a pair $(p_1,p_2) \in P \times P$, we consider the set $U$ of nodes incident to both $p_1$ and $p_2$ and aggregate the lifespans of triplets $(u, p_1, p_2)$.
We call $U$ the {\em witness} of $(p_1, p_2)$.
There are two natural ways of aggregating over $U$: \SUM\ and \UNION.
For \SUM, we aggregate by summing up the durabilities of triplet lifespans, i.e.,
$\textsf{AGG}(p_1,p_2,U) = \sum_{u\in U} \left|I(u, {p_1} ,{p_2})\right|$.
For \UNION, we aggregate by first taking the union of the triplet lifespans and then considering its length, i.e.,
$\textsf{AGG}(p_1,p_2,U) = \left|\bigcup_{u\in U} I(u, {p_1} ,{p_2})\right|$.
Intuitively, \SUM\ gives higher weights to time periods when multiple simultaneous connections exist,
while \UNION\ only cares about whether a period is covered at all by any connection.
Given durability parameter $\tau > 0$, a pair $(p_1, p_2)\in P \times P$ is {\em $\tau$-aggregate-durable}
if $\dist(p_1, p_2) \leq 1$ and $\textsf{AGG}(p_1,p_2,U) \ge \tau$ for $U = \{ u \in P \mid \dist(p_1,u), \dist(p_2, u) \le 1 \}$.
We also define \emph{$\tau$-aggregate-durable $\eps$-pairs}
by relaxing the distance thresholds for $\dist(p_1, p_2)$, $\dist(p_1,u)$, and $\dist(p_2, u)$ from $1$ to $1+\eps$.
Let $K_{\tau},K_{\tau}^{\eps}$ be the set of all $\tau$-aggregate-durable pairs, $\eps$-pairs respectively.
%and $K_{\tau}^{\eps}$ be the set of all $\tau$-aggregate-durable $\eps$-pairs.
Notice that $K_{\tau}\subseteq K_{\tau}^{\eps}$.

\begin{definition}[\textbf{\problemthree}]
    Given $(P,\dist, I)$ and $\tau \ge 0$,
it asks to report all $\tau$-aggregate-durable pairs (or $\eps$-pairs).
\end{definition}
%\vspace{0.5em}
%\textbf{\textit{Problem 3: \textbf{\problemthree.}}}
%Given $(P,\dist, I)$ and $\tau \ge 0$,
%it asks to report all $\tau$-aggregate-durable pairs (or $\eps$-pairs).
%\vspace{0.5em}

%For example, 
Suppose we have embedded the co-authorship graph in Example~\ref{exmpl:2} as a proximity graph where nodes represent authors.
The goal is to find pairs of coauthors $p_i, p_j$ who have collaborated sufficiently with various others,
either in terms of total time over all collaborators (\SUM),
or over a large portion of $p_i$ and $p_j$'s shared active lifespan (\UNION).

\begin{table}[t]
    \centering
    \begin{tabular}{c|c}
     \toprule
    Problem & Time complexity in $\O(\cdot)$\\\hline
    \problemone & $n\eps^{-O(\ddim)}+\OUT$\\\hline
    \problemtwo & $\eps^{-O(\ddim)} \cdot \OUT$\\\hline
    \problemthree--\SUM\ & $\eps^{-O(\ddim)} \cdot (n+\OUT)$  \\\hline
    \problemthree--\UNION\ & $\kappa\eps^{-O(\ddim)} \cdot (n+\OUT)$  \\
    \bottomrule
    \end{tabular}
    \vspace{1em}
    \caption{Summary of our main results.
    %see Section~\ref{sec:intro:results} for assumptions and details.
    Here, $n$ is the input size, i.e., the number of points in $P$; $\ddim$ is the doubling dimension of $(P,\dist)$; $\tau$ is the durability parameter; $\OUT$ is the output size for the respective problem (different for each problem); $\eps$ is the approximation ratio; and $k$ is the parameter used for $(\tau,\kappa)$-\UNION\ durability. In the complexities reported above, $\O(\cdot)$ hides a $\polylog n$ factor, and the hidden constants in $O(\cdot)$ and $\O(\cdot)$ may depend on $\ddim$, which is assumed to be a constant.
    }
    \label{table:results}
\end{table}

\subsection{Our Results and Approach}
\label{sec:intro:results}

We present algorithms for $\eps$-approximate versions of all three problems, whose time complexity are summarized in Table~\ref{table:results}.
In all cases, we report all durable triangles along with some durable $\eps$-triangles.
The running time is always near-linear in terms of the input and output size, which is almost the best one could hope for. Our solutions leverage the observation that proximity graphs in practice often have bounded \emph{spread} $\Delta$ and \emph{doubling dimension} $\ddim$ ---
Section~\ref{sec:preliminary} further reviews these concepts and the associated assumptions.
The complexities in Table~\ref{table:results} assume spread to be $n^{O(1)}$ (hence $\Delta$ is omitted) and doubling dimension to be constant,
but our algorithms also work for more general cases.

For \textbf{\problemone\ (Section~\ref{sec:offline})}, our main approach is to construct a hierarchical space decomposition consisting of a canonical set of balls, via a \emph{cover tree}.
We use the canonical set of balls to obtain a compact, implicit representation of points within unit distance from each point, and then use an {\em interval tree} along with auxiliary data structures to report durable triangles in linear time.
The algorithm runs in $\O(n\eps^{-O(\ddim)}+\OUT)$ time, where $\OUT \in [|T_{\tau}|,|T_{\tau}^\eps|]$ is the result size. (The $\O(\cdot)$ notation hides polylogarithmic factors).
For the $\ell_\infty$ metric, this result can be improved to $\O(n+|T_\tau|)$.
Moreover, our data structures can be extended to support delay-guaranteed enumeration as well as dynamic settings where nodes are inserted or deleted according to their lifespan.

For \textbf{\problemtwo\ (Section~\ref{sec:online})}, to support incremental computation of queries arriving in an online setting, we additionally maintain an activation threshold for each point with respect to different durability parameters.
In more detail, for each durability parameter $\tau$, we design an oracle that can efficiently find the largest value $\beta < \tau$ such that $p$ participates in a $\beta$-durable triangle that is not $\tau$-durable, which is key to achieve near-linear time complexity.
Our algorithm constructs an $\O(n)$-size data structure in $\O(n\eps^{-O(\ddim)})$ time,
such that given the previous query parameter $\tauprev$ and current query parameter $\tau < \tauprev$,
it can report the delta results in $\O(\eps^{-O(\ddim)}\cdot \OUT)$ time, where $\OUT \in \left[|T_{\tau} - T_{\tauprev}|, |T_{\tau}^\eps -T_{\tauprev}^\eps|\right]$ is the delta result size.
Specifically, for the $\ell_\infty$ metric, \problemtwo\ can be solved exactly in $\O(|T_{\tau_{i+1}}\setminus T_{\tau_i}|)$ time.

\newchange{
For \textbf{\problemthree\ (Section~\ref{sec:aggregate})}, recall that the problem requires aggregating lifespans over witness set $U$.
We build auxiliary data structures to compute the sum or union of intervals intersecting any given interval.
Additionally, we identify a special ordering of $P$ such that only a bounded number of pairs that are not aggregate-durable will be visited, so the linear-time complexity can be guaranteed.
For the \SUM\ version of the problem, we present an $\O((n+\OUT)\cdot\eps^{-O(\ddim)})$-time algorithm, where $\OUT \in \left[K_\tau, K^\eps_\tau\right]$ is the output size.
The \UNION\ version is more challenging because of the inherent hardness of computing the union of intervals that intersect a query interval.
However, as shown in Section~\ref{sec:aggregate:union}, we can still get an near-linear-time and output-sensitive algorithm that reports all $\tau$-\UNION-durable pairs along with some $(1-1/e)\tau$-\UNION-durable $\eps$-pairs.
}

%This also implies a $\min\left\{\O(n^3),\O((n+\OUT)\cdot n\eps^{-O(\ddim)})\right\}$-time algorithm without assuming a bounded $\kappa$.

\paragraph{Extensions.}
%
%Although we focus on the $\eps$-approximate versions of the problems, we can easily implement the exact versions simply by verifying, for each triplet $(p,q,s)$ considered, whether $\dist(p,q)$, $\dist(p,s)$, and $\dist(q,s)$ are indeed within $1$ in $O(1)$ time. The asymptotic running time remains unchanged.
Our algorithms also work for every $\ell_\alpha$-metric%
\footnote{If $\dist$ is the \emph{$\ell_\alpha$-metric} then $\dist(p,q)=\left(\sum_{j=1}^d|p_j-q_j|^\alpha\right)^{1/\alpha}$,
where $p_j, q_j$ are the $j$-th coordinates of points $p$ and $q$, respectively.}
or metric with bounded expansion constant%
\footnote{A metric space $(P,\dist)$ has \emph{expansion constant} $D$
if $D$ is the smallest value such that for every $p\in P$ and $r>0$ $|P\cap \mathcal{B}(p,2r)|\leq D\cdot|P\cap \mathcal{B}(p,r)|$,
where $\mathcal{B}(p,r)$ is the ball with center $p$ and radius $r$.}.
%with roughly the same complexity.
Moreover, all our results for reporting triangles can be extended to reporting cliques, paths, and star patterns of constant size.
See details in Appendix~\ref{appndx:extensions}.

\newchange{
\mparagraph{Connection with triangle listing algorithms in general graphs} Consider simple directed or undirected graphs with $n$ vertices and $m$ edges. The trivial algorithm by listing all triples of vertices runs in $O(n^3)$ time. This is worst-case optimal in terms of $n$, since a dense graph may contain $\Theta(n^3)$ triangles.
A graph with $m$ edges contains $O(m^{3/2})$ triangles.
It has been shown that all triangles in a graph of $m$ edges can be enumerated in $\O(m^{3/2})$ time~\cite{itai1977finding, ngo2018worst, veldhuizen2014leapfrog}. This is also worst-case optimal, since a graph of $m$ edges may contain $\Theta(m^{3/2})$ triangles. Later, output-sensitive algorithms for listing triangles were developed using fast matrix multiplication, which run in $\O(n^\omega+ n^{\frac{3(\omega-1)}{5-\omega}} \cdot \OUT^{\frac{2(3-\omega)}{5-\omega}})$ or $\O(m^{\frac{2\omega}{\omega+1}}+ m^{\frac{3(\omega-1)}{\omega+1}}\cdot \OUT^{\frac{3-\omega}{\omega+1}})$ time, where $ O(n^\omega)$ is the running time of $n\times n$ matrix multiplication and $\OUT$ is the number of triangles in the graph~\cite{bjorklund2014listing}. In contrast, it has been shown~\cite{patrascu2010towards} that listing $m$ triangles in a graph of $m$ edges requires $\Omega(m^{4/3-o(1)})$ time, assuming the 3SUM conjecture\footnote{The 3SUM conjecture states that: Given three sets $A,B,C$ of $n$ elements, any algorithm requires $\Omega(n^{2-o(1)})$ time to determine whether there exists a triple $(a,b,c) \in A \times B \times C$ such that $a + b+c =0$.}. A careful inspection of this lower bound construction reveals that listing $n^{3/2}$ triangles in a graph of $n$ vertices requires $\Omega(n^{2-o(1)})$ time, assuming the 3SUM conjecture. These lower bounds together rule out the possibility of listing triangles in general graphs within $O(m+n+\OUT)$ time, unless the 3SUM conjecture is refuted.

Existing techniques for listing triangles in general graphs do not yield efficient algorithms for our setting and several new ideas are needed to obtain the results of this paper. First, most traditional techniques for listing triangles do not handle temporal constraints on vertices or edges. Recently, efficient algorithms for durable--join with temporal constraints on edges are proposed in~\cite{hu2022computing}. However, their algorithm requires $\Omega(m^{3/2})$ time for listing durable triangles, even if the number of durable triangles is much smaller. Even without temporal constraints, all the worst-case optimal join algorithms run in super-linear time, in terms of $n$ and $\OUT$, for listing triangles in proximity graphs. Furthermore, in our setting, the input is an implicit representation of a proximity graph $(P,\dist)$. To feed $(P,\dist)$ as input to these algorithm, the number of edges $m$ can be quadratic in terms of $|P|$, which already requires $\Omega(n^2)$ time for processing the input, not to mention the time for identifying triangles.
Our algorithms use novel geometric data structures to identify all triangles that a point belongs to in time which is linear (ignoring $\log n$ factors) to both input size $n$ and output size $\OUT$.
Finally, the known algorithms do not handle the incremental or the aggregate versions of our problem, and we need completely new techniques to further exploit the structure of proximity graphs. 
%However, even if the proximity graph is sparse with $O(n)$ edges, worst-case optimal join algorithms need $O(n^{3/2})$ time to report all triangles, which is super-linear with respect to the size of $P$. Fundamentally, the worst-case optimal join algorithms do not take advantage of the \emph{geometry} of the space. Geometrically, all triangles that a point $p$ participates lie in the region around $p$.
 
}
\section{Preliminaries}
\label{sec:preliminary}

We start by reviewing some basic concepts and data structures.
Building on the basic data structures, we introduce an oracle that will be frequently used by our algorithms in the ensuing sections.

\subsection{Basic concepts and data structures}
\label{subsec:structures}

\noindent {\bf Spread.} The \emph{spread} of a set $P$ under distance metric $\dist$ is the ratio of the maximum and minimum pairwise distance in $P$.
For many data sets that arise in practice, the spread is polynomially bounded in $n$, and this assumption is commonly made in machine learning and data analysis~\cite{beygelzimer2006cover, cunningham2021k, kumar2008good, borassi2019better}.

\paragraph{Doubling dimension.}
For $x\in \Re^d$ and $r\geq 0$,
let $\mathcal{B}(x,r)=\{y\in \Re^d\mid \dist(x,y)\leq r\}$ denote the ball (under the metric $\dist$) centered at point $x$ with radius $r$.
A metric space $(P,\dist)$ has \emph{doubling dimension} $\ddim$ if
for every $p\in P$ and $r>0$, $\mathcal{B}(p,r) \cap P$ can be covered by the union of at most $2^\ddim$ balls of radius $r/2$.
For every $\alpha>0$, let $\ell_\alpha$ be the $\alpha$ norm.
The metric space $(P,\ell_\alpha)$ has doubling dimension $d$ for every $P\subset\Re^d$,
but for specific $P\subset\Re^d$, the doubling dimension can be much smaller---e.g.,
points in $3$d lying on a $2$-dimensional plane or sphere has doubling dimension $2$.
Doubling dimensions and their variants are popular approaches for measuring the intrinsic dimension of a data set in high dimension;
see, e.g.,~\cite{DBLP:conf/iclr/PopeZAGG21, facco2017estimating, bshouty2009using, DBLP:conf/iclr/PopeZAGG21}.
%Many results in machine learning, database systems, data mining, and theory analyze algorithm complexities in terms of doubling dimension.
It has been widely shown that graphs arising in practice
%(e.g. citation graphs, internet latency graphs, social networks)
have low doubling dimension~\cite{ng2002predicting, tenenbaum2000global, feldmann2020parameterized, damian2006distributed, zhoudoubling, verbeek2014metric}.
\newchange{Empirical studies in these papers and other sources (e.g., \cite{linkDdim}) show that the doubling dimension of router graphs, internet latency graphs, citation graphs, and movie database graphs are less than $15$.}

\newchange{
\paragraph{Interval tree.}
Let $\mathcal{I}$ be a set of intervals.
An interval tree~\cite{mark2008computational} is a tree-based data structure that can find intersections of a query interval $I$ with the set of intervals $\mathcal{I}$ stored in the interval tree.
For example, it can report or count the number of intervals in $\mathcal{I}$ intersected by $I$ visiting only $O(\log n)$ nodes. It has $O(n)$ space and it can be constructed in $O(n\log n)$ time.}

\paragraph{Cover tree.}
A cover tree $\mathcal{T}$ is a tree-based data structure where
each node $u$ of $\mathcal{T}$ is associated with a representative point $\rep_u \in P$ and a ball $\mathcal{B}_u$.
Each node belongs to an integer-numbered level; if a node $u$ is at level $i$ then its children are at level $i-1$.
Let $C_i$ be the set of balls associated with nodes at level $i$.
The radius of each ball $\mathcal{B}_u$ at level $i$ is $2^i$ (notice that our definition allows level numbers to be positive or negative).
Each point $p\in P$ is stored in one of the leaf nodes.
The root consists of a ball that covers the entire data set \change{and its representative point is any point in $P$}.
A cover tree satisfies the following constraints:
\begin{itemize}[leftmargin=*]
    \item {\bf ({Nesting})} If there is a node $u$ at level $i$ with a representative point $\rep_u\in P$,
    then $\rep_u$ is also a representative point in a node at level $i-1$.
    \item {\bf ({Covering})} For every representative point $\rep_u$ at level $i-1$,
    there exists at least one representative $\rep_v$ at level $i$ such that $\dist(\rep_v,\rep_u)<2^i$.
    We designate $v$ as the parent of $u$.%\footnote{If there are multiple such nodes, we choose one arbitrarily.}
    \item {\bf ({Separation})} For every $u, v$ at level $i$, $\dist(\rep_u,\rep_v)>2^i$.
\end{itemize}
Traditionally, a cover tree is used mostly for approximate nearest-neighbor queries~\cite{beygelzimer2006cover, har2005fast}.
\newchange{We modify the construction of the cover tree to use it for \emph{ball-reporting queries} in bounded doubling spaces.
Given a point $p$, let $\mathcal{B}(p,r)=\{x\in\Re^d\mid \dist(x,p)\leq r\}$ be a ball with radius $r$ centered at $p$, and let $\mathcal{B}(p):=\mathcal{B}(p,1)$. Given a query point $p$, the goal is to report $B(p)\cap P$ efficiently. We modify the cover tree to answer ball-reporting queries approximately.
%, i.e., we report all points in $\mathcal{B}(p)\cap P$ but we may also return some points in $\mathcal{B}(p,1+\eps)\cap P$, where $\eps\in(0,1)$, in $O(\log n + \eps^{O(\ddim)}+\OUT)$ time, where $\OUT$ is the number of points we report.
}
\newchange{In each node $u$ of the cover tree, we (implicitly) store $P_u$, i.e., the points that lie in the leaf nodes of the subtree rooted at $u$.
%This modification, increases the the storage of a cover tree from $O(n)$ to $O(n\log n)$. However, it 
Let $p$ be a query point. We find a set of nodes in the cover tree whose associated balls entirely cover $\mathcal{B}(p)$ and might cover some region outside $\mathcal{B}(p)$ within distance $(1+\eps)$ from the center of $\mathcal{B}$. The set of nodes we find in the query procedure are called \emph{canonical nodes}, their corresponding balls are called \emph{canonical balls}, and the subsets of points stored in the canonical balls are called \emph{canonical subsets}. In the end, we report all points stored in the canonical nodes.}
\newchange{More formally, in Appendix~\ref{appndx:covertree}, we show how to construct a data structure with space $O(n)$ in $O(n\log n)$ time,
while achieving the following guarantees when the spread is bounded.
 For a query point $q\in\Re^d$, in $O(\log n + \eps^{-O(\ddim)})$ time, 
    it returns a set of $O(\eps^{-O(\ddim)})$ canonical balls (corresponding to nodes in the modified cover tree) of diameter no more than $\eps$,
    possibly intersecting,
    such that each point of $\mathcal{B}(q)\cap P$ belongs to a unique canonical ball.
    Each canonical ball may contain some points of $\mathcal{B}(q, 1+\eps) \cap P$.
}
%
%\begin{lemma}
 %   For a query point $q\in\Re^d$, in $O(\log n + \eps^{-O(\ddim)})$ time,  it returns a set of $O(\eps^{-O(\ddim)})$ canonical balls (corresponding to nodes in the modified cover tree) of diameter no more than $\eps$, possibly intersecting, such that each point of $\mathcal{B}(q)\cap P$ belongs to a unique canonical ball. Each canonical ball may contain some points of $\mathcal{B}(q, 1+\eps) \cap P$.
%\end{lemma}

%\paragraph{Ball-reporting queries.}
%Given a point $p$, let $\mathcal{B}(p)=\{x\in\Re^d\mid \dist(x,p)\leq 1\}$ be the unit ball centered at $p$. \newchange{Answering such a query exactly is inherently expensive even in the Euclidean space, since a near-linear space data structure has $O(n^{1-1/d}\log n + \OUT)$ query time~\cite{chan2012optimal}, where $\OUT$ is the output size.} \newchange{We describe a data structure based on the \emph{cover tree}~\cite{beygelzimer2006cover, har2005fast} for answering ball-reporting queries approximately in a space with bounded doubling dimension,} i.e., we report all pairs of $\mathcal{B}(p) \cap P$ but may also return some points of $\mathcal{B}(p, 1+\eps) \cap P$, \newchange{where $\eps\in(0,1)$}. We will use this data structure to answer $\tau$-durable ball queries in the next subsection.

\subsection{Durable ball query}
\label{sec:tau-durable-ball}
\newchange{In this subscection, we describe an extension of the ball-reporting query that will be frequently used by our algorithms.
Given $(P,\dist, I)$, $\tau > 0$, and a point $p$ with interval $I_p$,
a \emph{$\tau$-durable ball query} finds all points $q \in P$ such that $\dist(p,q) \le 1$, $|I_p \cap I_q| \ge \tau$, and $I_p^- \in I_q$.
Answering such a query exactly is inherently expensive even in the Euclidean space, since a near-linear space data structure has $\Omega(n^{1-1/d} + \OUT)$ query time~\cite{chan2012optimal}, where $\OUT$ is the output size.
If we use such a data structure for our problem in metrics with bounded doubling dimension, it would lead to a near-quadratic time algorithm for the \problemone problem. Instead, we consider the following relaxed version:} %with an approximation on the distance requirement: 

\begin{definition}[$\eps$-approximate $\tau$-durable ball query]
    \label{def:def1}
    Given $(P,\dist, I)$, $\tau \ge 0$, \change{$\eps\in(0,1)$}, and a point $p$ with interval $I_p$,
    find a subset $Q \subseteq P$ of points such that
    $\mathcal{B}(p) \cap P \subseteq Q \subseteq \mathcal{B}(p, 1+\eps) \cap P$,
    and for every $q \in Q$, $|I_p \cap I_q| \ge \tau$ and $I_p^- \in I_q$.
\end{definition}
\newchange{We note that the condition $I_p^- \in I_q$ is needed to avoid reporting duplicate results, as we will see in the next sections.}

\paragraph{Data structure.} \newchange{Intuitively, we use a multi-level data structure $\DTC$ to handle this query.
At the first level, we construct a cover tree $\CT$ on $P$ to find a small number of canonical nodes that contain the points of $\mathcal{B}(p) \cap P$ (but may also contain some point of $\mathcal{B}(p, 1+\eps) \cap P$).
At each node $u$ of the cover tree, we construct an interval tree $\IT_u$ over the temporal intervals of $P_u$.
Using the cover tree along with interval trees, we can find a set of $O(\eps^{-d})$ canonical nodes that contain all points $q$ within distance $1$ from $p$ and $I_p^-\in I_q$, but they may also contain points within distance $1+\eps$ from $p$.
$\DTC$ uses $O(n\log n)$ space and can be constructed in $O(n\log^2{n})$ time.
}
%More specifically, we construct a cover tree $\CT$ over $P$.
%For each node $u$ of $\CT$, we further construct an interval tree $\IT_u$ over the intervals \change{of $P_u$}.
%Recall that each node $u_i$ of $\CT$ is associated with a representative point $\rep_{u_i}\in P$. \newchange{For simplicity, in the next sections we denote the representative point of a node $u_i$ by $\rep_i$.}
%Let $\DTC$ denote the data structure built for $P$.
%$\DTC$ uses $O(n\log n)$ space and can be constructed in $O(n\log^2{n})$ time.

Let $\DTCQ(p, \tau, \eps)$ denote the query procedure of $\DTC$ with parameters $p, \tau, \eps$.
It answers the query as follows:
\begin{itemize}[leftmargin=*]
    \item {\bf (step 1)} We query $\CT$ with point $p$ and radius $1$,
    and obtain a set of canonical nodes $\mathcal{C}=\{u_1, u_2, \ldots, u_k\}$ for $k = O(\eps^{-O(\ddim)})$.
    From Appendix~\ref{appndx:covertree}, each node in $\mathcal{C}$ corresponds to a ball with diameter no more than $\eps$.
    For each $u_j\in \mathcal{C}$,  $\dist(p,\rep_j)\le 1+\eps/2$.\footnote{\newchange{For simplicity, we denote the representative point of a node $u_i$ by $\rep_i$.}}
    \item {\bf (step 2)} For each canonical node $u_j \in \C$, we query $\IT_{u_j}$ with $I^-_p$,
    and obtain all $q \in \IT_u$ such that $I_q^- +\tau\leq I_p^- +\tau\leq I_q^+$.
\end{itemize}
In the end, $\DTCQ$ returns $O(k)$ disjoint result point sets, whose union is the answer to the $\eps$-approximate $\tau$-durable ball query.
The grouping of result points into subsets and the implicit representation of these subsets is an important feature of $\DTCQ$
that we shall exploit in later sections.
Note that $\DTCQ$ might return a point $q$ such that $\dist(p,q) > 1$, but $\dist(p,q) \leq 1+\eps$ always holds.
Together, we obtain:

\begin{lemma}
\label{lem:helper-2}
    Given a set $P$ of $n$ points, a data structure can be built in $O(n\log^2 n)$ time with $O(n\log n)$ space,
that supports an $\eps$-approximate $\tau$-durable ball query, computing a family of $O(\eps^{-d})$ canonical subsets in $O(\eps^{-d}\log n)$ time.
\end{lemma}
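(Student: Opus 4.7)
The plan is to build a two-level data structure. At the top level, I would construct the modified cover tree $\CT$ of Appendix~\ref{appndx:covertree} on $P$, which by the stated guarantees takes $O(n\log n)$ time and $O(n)$ space under the bounded-spread assumption. At each node $u\in\CT$, I would augment the structure by storing an interval tree $\IT_u$ built over the lifespans $\{I_q : q\in P_u\}$, where $P_u$ denotes the set of points in the subtree rooted at $u$. The intended query at $\IT_u$ asks for intervals $I_q$ satisfying $I_q^-\le I_p^-$ and $I_q^+\ge I_p^-+\tau$, i.e., those containing the segment $[I_p^-,\,I_p^-+\tau]$. Since a plain interval tree is geared toward stabbing, I would either augment it or equivalently view each $I_q$ as the point $(I_q^-, I_q^+)$ and use a priority search tree on these pairs, so that the query returns the answer implicitly as $O(\log n)$ disjoint canonical subsets in $O(\log n)$ time with linear space.

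For the space and preprocessing bounds, I would first observe that under bounded spread the cover tree has $O(\log n)$ levels, and each point of $P$ belongs to $P_u$ at most once per level. Hence $\sum_u |P_u| = O(n\log n)$. Since each secondary structure on $m$ intervals uses $O(m)$ space and takes $O(m\log m)$ construction time, summing over all cover-tree nodes yields total space $O(n\log n)$ and total construction time $O(n\log^2 n)$, which absorbs the $O(n\log n)$ cost of the cover tree itself.

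For the query, I would first run step~1 of $\DTCQ$: invoking the cover tree with center $p$ and radius $1$ returns a family of $O(\eps^{-d})$ canonical nodes $u_1,\dots,u_k$ in $O(\log n + \eps^{-d})$ time, with the guarantee that every $q\in\mathcal{B}(p)\cap P$ lies in exactly one $P_{u_j}$ and every $q\in P_{u_j}$ lies in $\mathcal{B}(p,1+\eps)\cap P$. Then I would issue one query at each $\IT_{u_j}$ in $O(\log n)$ time, aggregating the output into a family of $O(\eps^{-d})$ implicitly represented canonical subsets; the total query time is therefore $O(\eps^{-d}\log n)$. Correctness of the combined set $Q$ follows directly from the cover-tree guarantee, which gives $\mathcal{B}(p)\cap P\subseteq Q\subseteq\mathcal{B}(p,1+\eps)\cap P$, together with the secondary query, which ensures that every returned $q$ satisfies $I_p^-\in I_q$ and $|I_p\cap I_q|\ge\tau$.

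The main obstacle is designing the secondary structure so that the two interval-endpoint constraints, with the query-dependent shift $\tau$, can be answered in $O(\log n)$ time while keeping linear space per node. The priority-search-tree viewpoint above resolves this via three-sided orthogonal range reporting on the endpoint pairs, preserving all the claimed complexities.
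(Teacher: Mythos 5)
Your proposal is correct and follows essentially the same route as the paper: a cover tree at the top level with a per-node secondary structure over the lifespans of $P_u$, the $\sum_u|P_u|=O(n\log n)$ accounting over the $O(\log n)$ levels, and a query that combines the $O(\eps^{-d})$ canonical cover-tree nodes with one $O(\log n)$-time endpoint query each. Your only (welcome) refinement is being explicit that the two-sided endpoint condition $I_q^-\le I_p^-$, $I_q^+\ge I_p^-+\tau$ is a dominance query on the pairs $(I_q^-,I_q^+)$ handled via a priority search tree, where the paper simply says it queries an interval tree with $I_p^-$.
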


% All our algorithms consider each point $q\in P$ one by one and uses such a data structure to report all $\tau$-durable patterns
% such that $q$ is the newest point with respect to the associated time intervals ($I_q^-\in I_p$ for each $p$).
% In this way we make sure that each durable pattern is reported once.

\smallskip \noindent {\bf Extended data structure with refined result partitioning.}
\newchange{
We define the more involved query procedure $\DTCEQ(p,\tau,\tau',\eps)$, which will be used by our algorithms for \problemtwo\ in Section~\ref{sec:online}. %and \problemthree\ in Section~\ref{sec:aggregate}.
The goal is to return a subset $Q \subseteq P$ such that
$\mathcal{B}(q) \cap P \subseteq Q \subseteq \mathcal{B}(q, 1+\eps) \cap P$, and for every $q\in Q$, $I_q^-+\tau\leq I_p^-+\tau\leq I_q^+$ (just as for $\DTCQ$), with the additional constraint that $I_q^+\geq I_p^-+\tau'$. Let $\DTCE$ be the extended version of $\DTC$ to answer $\DTCEQ$. It consists of a cover tree along with two levels of interval trees, one to handle the first linear constraint, and the second to handle the additional linear constraint. %This is a standard constructions in geometric data structures~\cite{}.
The space, construction time and query time of $\DTCE$ are increased only by a $\log n$ factor compared with $\DTC$.}

%Let $\DTCE$ be the extended version of $\DTC$, containing a cover tree and then interval trees in the next levels. The query procedure $\DTCEQ(p,\tau,\tau',\eps)$, where $\tau'>\tau$, returns the subset $Q \subseteq P$ of points such that $\mathcal{B}(q) \cap P \subseteq Q \subseteq \mathcal{B}(q, 1+\eps) \cap P$, and for every $q \in Q$, $|I_p \cap I_q| \ge \tau$ and $I_p^- \in I_q$, just as for $\DTCQ$; however, the results are further partitioned by (whether they satisfy) the condition $I_q^+ \ge I_p^- + \tau'$.The space, construction time and query time of $\DTCE$ are increased only by a $\log n$-factor compared with $\DTC$.
\section{Reporting Durable Triangles}
\label{sec:offline}
This section describes our near-linear time algorithm for the $\eps$-approximate \problemone\ problem.
As mentioned, our algorithm works for every general metric with constant doubling dimension.
In Appendix~\ref{appndx:linfty}, we show how to solve the problem exactly for $\ell_\infty$ metric.

\mparagraph{High-level Idea}
We visit each point $p\in P$ with $|I_p| \ge \tau$ (a prerequisite for $p$ to be in a $\tau$-durable triangle), and report all $\tau$-durable triangles that $p$ anchors. 
%$\tau$-durable triangles $(p,q,s)$ where $I_p^-\geq \max\{I_q^-,I_s^-\}$; i.e., $p$ is the point whose lifespan starts the latest among the three. We say that $p$ \emph{anchors} such triangles.
To find all $\tau$-durable triangles anchored by $p$,
we run a $\tau$-durable ball query around $p$ on $\DTC$ (Section~\ref{sec:tau-durable-ball})
to get an implicit representation (as a bounded number of canonical subsets) of all points within distance $1$ from $p$,
where $I_p^-$ is the largest left endpoint among their lifespans \newchange{($p$ should be the newest point among the three points of each triangle we report)}.
Recall that each canonical subset returned consists of points within a ball of a small diameter,
so we can approximate inter-ball distances among points by the distances among the ball centers.
For every pair of balls, if their centers are within distance $1$ (plus some slack),
we report all $\tau$-durable triangles consisting of $p$ and the Cartesian product of points in the two balls.

\begin{algorithm}[t]
\caption{{\sc ReportTriangle}$(\DTC, p, \tau, \eps)$}
\label{alg:offline-2}

$\C_{p}: \{\C_{p,1}, \C_{p,2}, \cdots, \C_{p,k}\} \gets \DTCQ(p,\tau,\eps/2)$,
    with $\rep_i$ denoting the representative point of the ball for $\C_{p,i}$\;
\ForEach{$j \in [k]$}{
    \ForEach{$q, s \in \C_{p,j}$ where $q$ precedes $s$}{
        \KwSty{report} $(p,q,s)$\;\label{alg:offline-2:same-ball}
    }
}
\ForEach{$i,j \in [k]$ where $i < j$}{
    \If{$\dist(\rep_i, \rep_j) \le 1 + \frac{\eps}{2}$}{
        \ForEach{$(q,s) \in \C_{p,i} \times \C_{p,j}$}{
            \KwSty{report} $(p,q,s)$\;\label{alg:offline-2:diff-ball}
        }
    }
}
\end{algorithm}
\paragraph{Algorithm.}
As a preprocessing step, we construct the data structure $\DTC$ as described in Section~\ref{sec:tau-durable-ball} over $P$.
Our algorithm invokes \textsc{ReportTriangle} (Algorithm~\ref{alg:offline-2}) for each point $p \in P$.
\textsc{ReportTriangle} runs a $\tau$-durable ball query $\DTCQ(p,\tau,\eps/2)$%
---note the use of $\eps/2$ here for technical reasons---%
and obtains a family of disjoint result point sets $\mathcal{C}_p=\{\C_{p,1}, \C_{p,2}, \ldots, \C_{p,k}\}$ for some $k = O(\eps^{-\ddim})$
(see also Figure~\ref{fig:offline-2}).
Each $\C_{p,j}$ is covered by a cover tree ball in $\DTC$ with diameter of no more than $\eps/2$,
and contains all points therein whose intervals ``sufficiently intersect'' $I_p$,
i.e., any $q$ in the ball satisfying $I_q^-+\tau \le I_p^-+\tau \le I_q^+$,
as explained in Section~\ref{sec:tau-durable-ball}.
\begin{figure}
    \centering
    \includegraphics[scale=0.19]{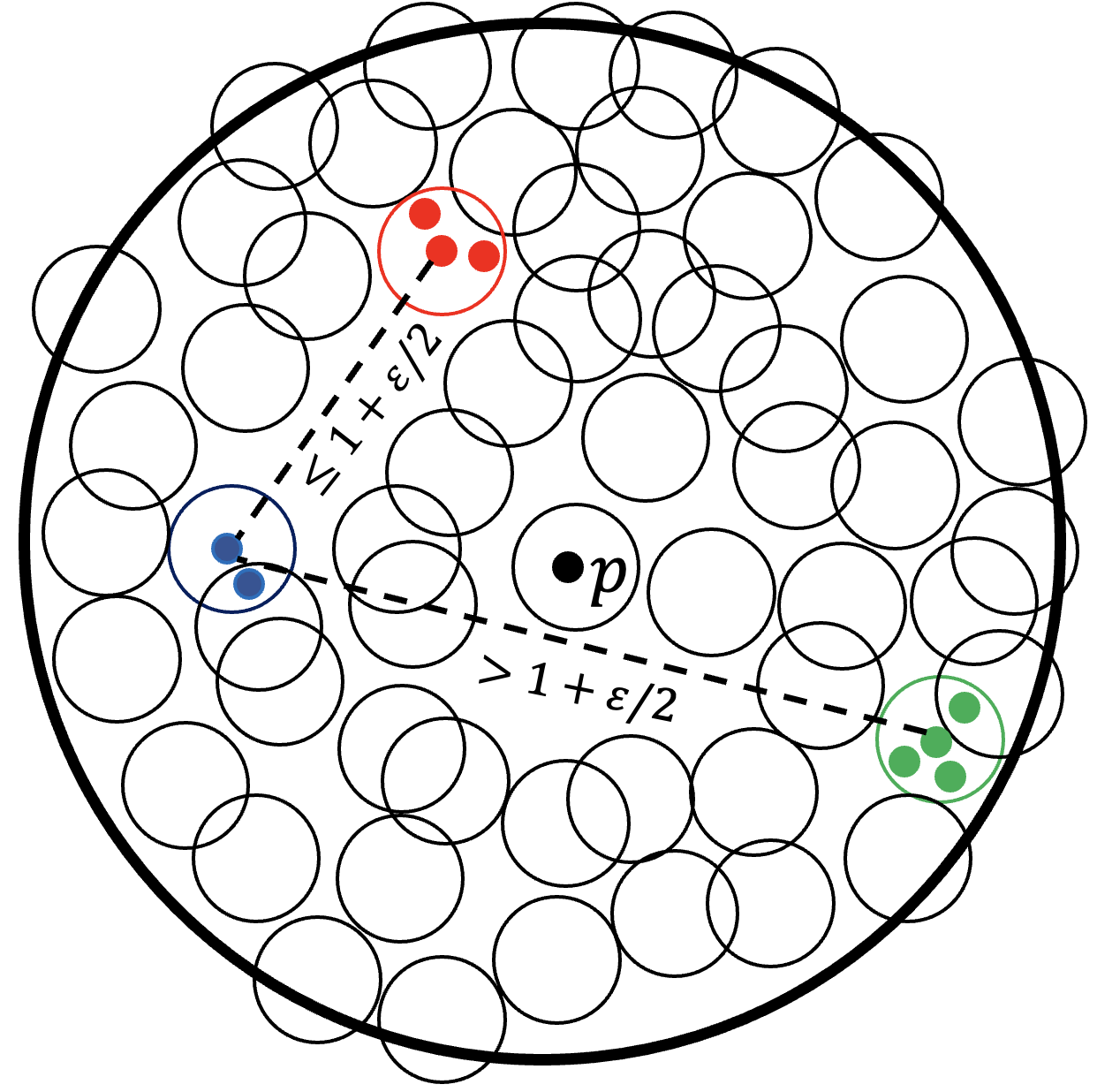}
    \caption{Illustration of Algorithm~\ref{alg:offline-2}: $p$ is visited.
    The small (possibly overlapping) balls represent the canonical nodes returned from $\DTC$.
    Each point within distance $1$ from $p$ lies in exactly one such ball.
    We report the triangles formed by $p$ and the points in red and blue balls that satisfy the durability constraint.
    We do not report triangles formed by $p$ and the points in blue and green balls because they are well separated.}
    \label{fig:offline-2}
\end{figure}

Given $p$, all $\tau$-durable triangles $(p,q,s)$ anchored by $p$ can be classified into two types:
(1)~$q$ and $s$ belong to the same result point set $\C_{p,j}$ for some $j$; and
(2)~$q$ and $s$ belong to different sets $\C_{p,i}$ and $\C_{p,j}$ (where $i \ne j$) that are sufficiently close.
To report triangles of the first type, we simply enumerate all pairs of $q$ and $s$ within $\C_{p,j}$, for each $j$.
We avoid duplicate reporting of $(p,q,s)$ and $(p,s,q)$ by always picking $q$ as the point with the smaller index in $\C_{p,j}$.
To report triangles of the second type, we consider $(i,j)$ pairs where $\rep_i$ and $\rep_j$, the representative points of the balls containing $\C_{p,i}$ and $\C_{p,j}$, are within distance $1+\eps/2$.
We simply enumerate the Cartesian product of $\C_{p,i}$ and $\C_{p,j}$.
We avoid duplicate reporting of $(p,q,s)$ and $(p,s,q)$ by imposing the order $i < j$.

\newchange{
\mparagraph{Correctness}
Let $(p,q,s)$ be a triangle reported by our algorithm. We show that $(p,q,s)$ is a $\tau$-durable $\eps$-triangle. From Section~\ref{sec:tau-durable-ball}, we know that $\dist(p,q)\leq 1+\eps/2$ and $\dist(p,s)\leq 1+\eps/2$, because $q$ and $s$ belong in one or two canonical subsets in $\C_p$. If $q$ and $s$ belong to the same canonical subset $\C_{p,i}$ then by definition $\dist(q,s)\leq \eps/2$. If $q\in \C_{p,i}$ and $s\in \C_{p,j}$ for $i\neq j$, then $\dist(q,s)\leq \dist(q,\rep_i)+\dist(s,\rep_j)+\dist(\rep_i,\rep_j)\leq \eps/4+\eps/4+(1+\eps/2)\leq 1+\eps$. In every case, it is true that $\dist(p,q), \dist(p,s), \dist(s,q)\leq 1+\eps$. Hence, $(p,q,s)$ is an $\eps$-triangle. Next, we show that $|I_p\cap I_q\cap I_s|\geq \tau$. Recall that by definition, $|I_p|\geq \tau$. Using the $\tau$-durable ball query $\DTCQ(p,\tau,\eps/2)$, we have that $|I_p\cap I_q|\geq \tau$, $|I_p\cap I_s|\geq \tau$, $I_p^-\in I_q$, and $I_p^-\in I_s$ (see also Definition~\ref{def:def1}). We can rewrite these inequalities as $I_q^- +\tau\leq I_p^-+\tau\leq I_q^+$ and $I_s^- +\tau\leq I_p^-+\tau\leq I_s^+$. Hence, $I_p^-+\tau\leq \min\{I_q^+, I_s^+\}$, concluding that $|I_p\cap I_q\cap I_s|\geq \tau$. Each triangle $(p,q,s)$ is reported only once, in a specific vertex order:
the temporal conditions ensure that $p$ anchors the triangle,
and the ordering of $q$ and $s$ is consistently enforced by \textsc{ReportTriangle}.
Overall, we showed that if $(p,q,s)$ is reported, then it is a $\tau$-durable $\eps$-triangle and it is reported exactly once.

Next, we prove that we do not miss any $\tau$-durable triangle. Let $(p,q,s)$ be a $\tau$-durable triangle. Without loss of generality, assume that $I_p^-\geq \max\{I_q^-,I_s^-\}$. By definition, $\dist(p,q)\leq 1$, $\dist(p,s)\leq 1$, and $\dist(q,s)\leq 1$. Hence, after visiting $p$, by the definition of the $\tau$-durable ball query $\DTCQ(p,\tau,\eps/2)$, there exist indexes $i, j$ such that $q\in \C_{p,i}$ and $s\in \C_{p,j}$.
If $i=j$, then $(p,q,s)$ must be reported by Line~\ref{alg:offline-2:same-ball} of \textsc{ReportTriangle}.
If $i\neq j$, note that $\dist(\rep_i,\rep_j) \leq \dist(\rep_i,q)+\dist(q,s)+\dist(s,\rep_j) \leq \eps/4+1+\eps/4\leq 1+\eps/2$;
therefore $(p,q,s)$ must be reported by Line~\ref{alg:offline-2:diff-ball}. Overall, we showed that every $\tau$-durable triangle is reported by Algorithm~\ref{alg:offline-2}.

%We conclude, that Algorithm~\ref{alg:offline-2} is correct.
}

\mparagraph{Time complexity}
By Lemma~\ref{lem:helper-2}, we can construct $\DTC$ in time $O(n\log^2 n)$.
For each $p \in P$, we run a $\tau$-durable ball query on $\DTC$ in $O(\eps^{-O(\ddim)}\log n)$ time.
Moreover, $|\C_p|=O(\eps^{-O(\ddim)})$.
Checking pairs in $\mathcal{C}$ in which to search for triangles of Type~(2) takes additional $O(\eps^{-2\cdot O(\ddim)})$ time.
All pairs of points examined by \textsc{ReportTriangle} are indeed returned,
and together they correspond to all $\tau$-durable triangles plus some $\eps$-triangles involving $p$.
Hence, the overall additional time incurred is $O(\OUT)$ where $|T_{\tau}|\leq \OUT \leq |T_{\tau}^\eps|$.

\begin{theorem}
    \label{thm:offline-gen}
    Given $(P,\phi,I)$, $\tau > 0$, and $\eps > 0$,
    $\eps$-approximate \problemone\ can be solved in
    $O\left(n(\eps^{-O(\ddim)}\log n + \log^2 n) + \OUT\right)$ time,
    where $n = |P|$, $\ddim$ is the doubling dimension of $P$, and $\OUT$ is the number of triangles reported.
\end{theorem}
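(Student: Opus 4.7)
The plan is to prove the time bound by combining the one-shot preprocessing cost of $\DTC$ with a per-anchor enumeration whose non-query work is amortized against the triangles it emits. The algorithm visits every $p \in P$ with $|I_p| \ge \tau$, treats $p$ as the anchor (the latest-starting vertex in every triangle it will report), queries $\DTC$ to get an implicit description of the neighbors of $p$ whose lifespan contains $I_p^-$ and extends at least $\tau$ beyond it, and then enumerates candidate pairs $(q,s)$ using the canonical-ball structure returned by $\DTCQ$.

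First I would invoke Lemma~\ref{lem:helper-2} to build $\DTC$ on $P$ in $O(n\log^2 n)$ time; this is the only global preprocessing cost. Next, for each eligible $p$, I would call $\DTCQ(p,\tau,\eps/2)$ and obtain a family $\C_p=\{\C_{p,1},\ldots,\C_{p,k}\}$ with $k=O(\eps^{-O(\ddim)})$ disjoint canonical subsets, each sitting in a ball of diameter at most $\eps/2$; the query itself costs $O(\eps^{-O(\ddim)}\log n)$. I would then examine all $O(k^2)=O(\eps^{-O(\ddim)})$ ordered pairs of canonical balls, keep those whose representatives satisfy $\dist(\rep_i,\rep_j)\le 1+\eps/2$, and for each kept pair (or for $i=j$) emit $p$ together with the Cartesian product of the two subsets. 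Orientation of the anchor plus the $i\le j$ convention prevents duplicates.

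Correctness splits into two halves. For completeness, any $\tau$-durable triangle $(p,q,s)$ anchored at $p$ satisfies $\dist(p,q),\dist(p,s)\le 1$ and the required lifespan overlap, so by the $\DTCQ$ guarantee $q\in \C_{p,i}$ and $s\in \C_{p,j}$ for some indices; if $i\ne j$, the triangle inequality $\dist(\rep_i,\rep_j)\le \dist(\rep_i,q)+\dist(q,s)+\dist(s,\rep_j)\le \eps/4+1+\eps/4$ shows the pair survives the filter. For soundness within $T_\tau^\eps$, points in different retained balls are within $1+\eps$ of each other by another triangle-inequality application, and the durability test baked into $\DTCQ$ certifies $|I_p\cap I_q\cap I_s|\ge\tau$.

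The delicate step, and the one that drives the stated bound, is charging the enumeration. Preprocessing costs $O(n\log^2 n)$; over all $p$, the queries and the $k^2$ pair-of-ball checks contribute $O(n\,\eps^{-O(\ddim)}\log n)$. The remaining Cartesian-product work must be absorbed into $\OUT$: every triple actually written out is a $\tau$-durable $\eps$-triangle reported exactly once by the anchor/ordering convention, so the total enumeration cost across all anchors is $O(\OUT)$ with $|T_\tau|\le \OUT\le |T_\tau^\eps|$. Summing these three contributions yields $O\!\left(n(\eps^{-O(\ddim)}\log n+\log^2 n)+\OUT\right)$, as claimed. The main obstacle I anticipate is ensuring that no ``wasted'' pair, examined but not reported, is enumerated; this is exactly why the canonical-ball representation must be kept implicit and why the pair-of-balls filter is applied before expanding the Cartesian product.
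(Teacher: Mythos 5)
Your proposal is correct and follows essentially the same route as the paper: build $\DTC$ once via Lemma~\ref{lem:helper-2}, query $\DTCQ(p,\tau,\eps/2)$ at each anchor, filter pairs of canonical balls by the representative-distance test $\dist(\rep_i,\rep_j)\le 1+\eps/2$, charge the Cartesian-product enumeration to $\OUT$, and use the same triangle-inequality bounds for both soundness and completeness. The only place you are slightly terser than the paper is in certifying $|I_p\cap I_q\cap I_s|\ge\tau$, which the paper spells out by noting that $I_p^-+\tau\le\min\{I_q^+,I_s^+\}$ together with $|I_p|\ge\tau$ forces the triple overlap; your argument is the same in substance.
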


\smallskip \noindent {\bf Remark 1.}
For every $\ell_\alpha$ norm in $\Re^d$, we can simplify the data structure $\DTC$ using a quadtree instead of a cover tree.
The running time and approximation with respect to the overall number of reported triangles remain the same.

\smallskip \noindent {\bf Remark 2.}
Our algorithm can be extended to support \emph{delay-guaran-}{\em teed enumeration}~\cite{berkholz2017answering, idris2017dynamic, agarwal2021dynamic} of durable patterns,
i.e., the time between reporting two consecutive patterns is bounded.
After spending $O(n(\eps^{-O(\ddim)}\log n + \log^2 n))$ preprocessing time,
we can support $O(\eps^{-O(\ddim)}\log n)$-delay enumeration for $\eps$-approximate \problemone.

\smallskip \noindent {\bf Remark 3.}
Using a dynamic cover tree, we can extend our algorithm to the dynamic setting where we do not have all points upfront.
If points are inserted or deleted according to their lifespans, we support $O(\log^3 n)$ amortized update time.
After inserting a point $p$, we can report the new (if any) triangles that $p$ participates in using Algorithm~\ref{alg:offline-2})
in time near linear to the number of new triangles reported.
We show the details in Appendix~\ref{appndx:dynamic}.
\section{Incremental Reporting When Varying $\pmb\tau$}
\label{sec:online}

We next consider reporting durable triangles when queries with different durability parameters arrive in an online fashion.
As discussed in Section~\ref{sec:intro:defs}, the problem, \problemtwo, boils down to reporting any new result triangles in $T_\tau \setminus T_\tauprev$,
where $\tauprev > \tau$ are the previous and current durability parameters, respectively.

As a starter, we can proceed similarly as in Section~\ref{sec:offline},
reporting durable triangles for each anchor point $p$,
but taking care to ensure that we report only $\tau$-durable triangles that are not $\tauprev$-durable.
Doing so entails retrieving candidate pairs $(q, s)$ as in \textsc{ReportTriangle},
but additionally guaranteeing that \emph{at least one} of $q$ and $s$ ends between $I_p^- + \tau$ and $I_p^- + \tauprev$,
which leads to $I(p,q,s)$ having durability between $\tau$ and $\tauprev$.
This additional search condition necessitates the modified data structure $\DTCE$ discussed in Section~\ref{sec:preliminary}.

However, the naive approach above has the following problem.
It is possible that we carry out the search on $\DTCE$ for $p$, only to realize that in the end no new result triangle needs to be reported.
Ideally, we instead want an \emph{output-sensitive} algorithm whose running time depends only on the output size.
To this end, we need an efficient way to test whether $p$ should be \emph{activated} for output; i.e., there is at least one triangle in $T_\tau \setminus T_\tauprev$ anchored by $p$.
This test motivates the idea of activation thresholds below.

\begin{definition}[Activation threshold]
\label{def:delta}
Given $(P,\dist, I)$ and $\tau > 0$, the {\em activation threshold} of $p \in P$ with respect to $\tau$ is defined as:
\begin{align*}
    \beta^{\tau}_p = \max\{ \tau' < &\tau \mid  \exists q,s \in P: I_q^- \le I_p^-,\;I_s^- \le I_p^-,\;\text{and}
    \text{$(p,q,s)$ is $\tau'$-durable but not $\tau$-durable}\}.
\end{align*}
We set $\beta^{\tau}_p = -\infty$ if no such $\tau'$ exists.
We call $\beta^{+\infty}_p$ the \emph{maximum activation threshold} of $p$.
\end{definition}

With activation thresholds, we can easily determine whether to activate $p$:
the condition is precisely $\beta^\tauprev_p \ge \tau$.
If $\beta^\tauprev_p < \tau$, by definition of $\beta^\tauprev$,
any $\tau$-durable triangle anchored by $p$ is already $\tauprev$-durable and hence does not need to be reported;
otherwise, we need to at least report $\beta^\tauprev$-durable triangles anchored by $p$.

In the following subsections,
we first describe the algorithm for processing each activated point (Section~\ref{sec:online:report}),
and then address the problem of computing activation thresholds efficiently (Section~\ref{sec:online:thresholds}),
which requires maintaining additional data structures across queries to help future queries.
Finally, we summarize our solution and discuss its complexity (Section~\ref{sec:online:complexity}).
Appendix~\ref{sec:linfty-reporting} describes our specialized solution for the $\ell_\infty$-metric.

\subsection{Reporting for each activated point}
\label{sec:online:report}

\begin{algorithm}[t]
\caption{{\sc ReportDeltaTriangle}$(\DTCE, p, \tau, \tauprev, \eps)$}
\label{alg:report-delta-general}
$\C_{p}: \{\C_{p,1}, \C_{p,2}, \cdots, \C_{p,k}\} \gets \DTCEQ(p,\tau,\tauprev,\eps/2)$,
    with $\rep_i$ as the representative point of the ball for $\C_{p,i}$
    and $\C_{p,i} = \NewerC_{p,i} \cup \OlderC_{p,i}$\;
\ForEach{$j \in [k]$}{
    \ForEach{$q, s \in \NewerC_{p,j}$ where $q$ precedes $s$}{
        \KwSty{report} $(p,q,s)$\;
    }
    \lForEach{$(q, s) \in \NewerC_{p,j} \times \OlderC_{p,j}$}{
        \KwSty{report} $(p,q,s)$
    }
}
\ForEach{$i,j \in [k]$ where $i < j$}{
    \If{$\dist(\rep_i, \rep_j) \le 1 + \frac{\eps}{2}$}{
        \lForEach{$(q,s) \in \NewerC_{p,i} \times \NewerC_{p,j}$}{
            \KwSty{report} $(p,q,s)$
        }
        \lForEach{$(q,s) \in \NewerC_{p,i} \times \OlderC_{p,j}$}{
            \KwSty{report} $(p,q,s)$
        }
        \lForEach{$(q,s) \in \OlderC_{p,i} \times \NewerC_{p,j}$}{
            \KwSty{report} $(p,q,s)$\label{alg:report-delta-general:OldNew}
        }
    }
}
\end{algorithm}

Given an activated point $p \in P$, for which we have already determined that $\beta^\tauprev_p \ge \tau$,
we report all $\tau$-durable triangles anchored by $p$ that are not $\tauprev$-durable,
using \textsc{ReportDeltaTriangle} (Algorithm~\ref{alg:report-delta-general}) explained further below.

As discussed at the beginning of this section,
reporting triangles $(p,q,s)$ that are $\tau$-durable but not $\tauprev$-durable entails
ensuring that at least one of $q$ and $s$ ends between $I_p^- + \tau$ and $I_p^- + \tauprev$.
To this end, we use the modified data structure $\DTCE$ discussed in Section~\ref{sec:preliminary}.
We query $\DTCE$ using $\DTCEQ(p,\tau,\tauprev,\eps/2)$
to get $k=O(\eps^{-O(\ddim)})$ canonical balls of the cover tree in $\DTCE$
with representative points $\rep_1, \rep_2, \ldots, \rep_k$;
$\DTCEQ$ further partitions the result point set $\C_{p,j}$ associated with each ball centered at $\rep_j$ into two subsets
$$\NewerC_{p,j} = \smash{\left\{ q \in \C_{p,j} \mid I_q^+ < I_p^- + \tauprev \right\}} \text{, }
\OlderC_{p,j} = \smash{\left\{ q \in \C_{p,j} \mid I_q^+ \ge I_p^- + \tauprev \right\}}.$$
By definition, if $q\in \NewerC_{p,j}$, then $I_q^-\leq I_p^-$ and $I_p^- +\tau\leq I_q^+<I_p^+ +\tauprev$, while if $q\in \OlderC_{p,j}$ then $I_q^-\leq I_p^-$ and $I_q^+\geq I_p^- +\tauprev$.
See Figure~\ref{fig:updateddim} for an illustration.
\begin{figure}[t]
    \centering
    \includegraphics[scale=0.8]{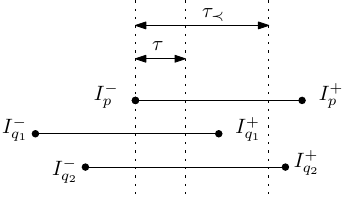}
    \caption{An illustration of $\C_{p,j} = \NewerC_{p,j} \cup \OlderC_{p,j}$.
    Here $q_1 \in \NewerC_{p,j}$ and $q_2 \in \OlderC_{p,j}$. %{\color{red} TODO: change in PDF $\tau$ to $\tauprev$, and $\tau'$ to $\tau$}
    }
    \label{fig:updateddim}
\end{figure}
Recall that $\DTCEQ$ does not explicitly construct $\NewerC_{p,j}$ and $\OlderC_{p,j}$;
instead, these subsets correspond to canonical subsets of nodes in the interval trees within $\DTCE$.

For $(p,q,s)$ to be \emph{not} $\tauprev$-durable,
at least one of $q$ and $s$ must belong to some $\NewerC_{p,j}$ instead of $\OlderC_{p,j}$.
Therefore, we can divide all triangles $(p,q,s)$ that are $\tau$-durable but not $\tauprev$-durable into four types,
which can be computed with the help of the above partitioning:
(1)~$q,s \in \NewerC_{p,j}$ for some $j$;
(2)~$q \in \NewerC_{p,j}$ and $s \in \OlderC_{p,j}$ for some $j$;
(3)~$q \in \NewerC_{p_i}$ and $s \in \NewerC_{p,j}$ for some $i \neq j$ where $\rep_i$ and $\rep_j$ are sufficiently close;
(4)~$q \in \NewerC_{p_i}$ and $s \in \OlderC_{p,j}$ for some $i \neq j$ where $\rep_i$ and $\rep_j$ are sufficiently close.
\textsc{ReportDeltaTriangle} (Algorithm~\ref{alg:report-delta-general}) covers all these cases.
As with \textsc{ReportTriangle} in Section~\ref{sec:offline},
we enforce an ordering between $q$ and $s$ to ensure that only one of $(p,q,s)$ and $(p,s,q)$ is reported.
Thanks to the implicit representation of $\NewerC_{p,j}$'s and $\OlderC_{p,j}$'s,
\textsc{ReportDeltaTriangle} avoids enumerating points in a subset if they do not contribute to any result triangle.
For example, if $\OlderC_{p,i} = \emptyset$ (line~\ref{alg:report-delta-general:OldNew} of Algorithm~\ref{alg:report-delta-general}), we short-circuit the computation and avoid enumerating $\NewerC_{p,j}$.

\paragraph{Remark.}
Note that \textsc{ReportTriangle} (Section~\ref{sec:offline}) can be seen as as special case of \textsc{ReportDeltaTriangle} whenever $\tauprev>\max_{p\in P}\beta^{+\infty}_p$.

\subsection{Computing %and Maintaining 
activation thresholds}
\label{sec:online:thresholds}

We turn to the question of how to compute the activation threshold $\beta^\tauprev$ given $p$ and $\tauprev$,
required for determining whether to activate $p$.
Naively, we can find all triangles anchored by $p$ (regardless of durability) and build a map of all activation thresholds for $p$.
However, we have a more efficient solution that builds on two ideas.
First, consider a sequence of queries with durability parameters $\tau_1, \tau_2, \ldots$.
After answering the current query, say $\tau_i$, we compute and remember $\beta^{\tau_i}_p$ for each (relevant) $p$,
so they are available to help the next query $\tau_{i+1}$.
Second, we use a binary search procedure to look for activation thresholds within a desired range,
by exploiting the extended data structure $\DTCE$
to quickly test existence of thresholds in a range without enumerating result triangles therein.
%\vspace{-0.5em}
\begin{algorithm}[h]
\caption{{\sc ComputeActivation}$(\DTCE, p, \tau, \eps)$}
\label{alg:detect-general}
%$I^+ \gets \{I_q^+\mid q\in P\}$\;%be an array storing the right endpoints of $I$\;
$I^+ \gets \{I_q^+\mid q\in P\}$, $R \gets n-1$, $L \gets 0$, $\tau_{\mathsf{ret}} \gets -\infty$\;
\While{$L\leq R$}{
$m \gets \lfloor (L+R)/2 \rfloor$\;
\lIf{$I^+[m]>I_p^-+\tau$}{$R \gets m-1$%\KwSty{Continue}\;
}\lElseIf{$I^+[m]<I_p^-$}{$L \gets m+1$%\KwSty{Continue}\;
}
$\tau' \gets I^+[m]-I_p^-$\;
$B \gets \textsc{DetectTriangle}(\tau', \tau)$\;
\lIf{$B=\KwSty{true}$}{$\tau_{\mathsf{ret}} \gets \tau'$, $L \gets m+1$}
\lElse{$R \gets m-1$}
}
\Return $\tau_{\mathsf{ret}}$\;

\KwSty{Subroutine} $\textsc{DetectTriangle}(\tau_1, \tau_2)$ \Begin{
$\C_{p}: \{\C_{p,1}, \C_{p,2}, \cdots, \C_{p,k}\} \gets \DTCEQ(p,\tau_1,\tau_2,\eps/2)$,
    with $\rep_i$ denoting the representative point of the ball for $\C_{p,i}$
    and $\C_{p,i} = \NewerC_{p,i} \cup \OlderC_{p,i}$\;
\ForEach{$j \in [k]$}{
    \lIf{$|\NewerC_{p,j}| \ge 2$}{
        \Return \KwSty{true}%
    }
    \lIf{$|\NewerC_{p,j}| \ge 1$ and $|\OlderC_{p,j}| \ge 1$}{
        \Return \KwSty{true}%
    }
}
\ForEach{$i,j \in [k]$ where $i < j$}{
    \If{$\dist(\rep_i, \rep_j) \le 1 + \frac{\eps}{2}$}{
        \lIf{$|\NewerC_{p,i}| \ge 1$ and $|\NewerC_{p,j}| \ge 1$}{
            \Return \KwSty{true}%
        }
        \lIf{$|\NewerC_{p,i}| \ge 1$ and $|\OlderC_{p,j}| \ge 1$}{
            \Return \KwSty{true}%
        }
        \lIf{$|\OlderC_{p,i}| \ge 1$ and $|\NewerC_{p,j}| \ge 1$}{
            \Return \KwSty{true}%
        }
    }
}
\Return \KwSty{false}\;
}
\end{algorithm}
The second idea is implemented by \textsc{ComputeActivation} (Algorithm~\ref{alg:detect-general}).
%in Appendix~\ref{appndx:corrMaintain}).
It runs a binary search making guesses for the value of $\beta^\tau_p$. For each guess $\tau'$ of $\beta^\tau_p$,
we use a primitive called \textsc{DetectTriangle} to test
whether there exists any triangle anchored by $p$ that is $\tau'$-durable but not $\tau$-durable%
---in other words, whether $\beta^\tau_p \in [\tau', \tau)$. 
\textsc{DetectTriangle} mirrors Algorithm~\ref{alg:report-delta-general},
except that it merely checks the existence of triangles for each type returning true or false instead of reporting them.
%(for completeness, we show it in Appendix~\ref{appndx:corrMaintain}).
Using $\DTCEQ$ for returning implicit representations for $\NewerC_{p,j}$'s and $\OlderC_{p,j}$'s,
it is quick to check whether their combinations yield a non-empty result set.
Given $p$, the search space of activation thresholds has only $O(n)$ possibilities:
The lifespan of every triangle $(p,q,s)$ anchored by $p$ is either in $[I^-_p, I^+_q]$ or $[I^-_p, I^+_s]$, thus the durability of any triangle anchored by $p$ falls into the set $\{I_q^+-I_p^-\mid q\in P, I_q^+\geq I_p^-\}$.
The number of steps in the binary search and the number of invocations of \textsc{DetectTriangle} is $O(\log n)$.

We are now ready to put together the data structures and procedure for computing and maintaining activation thresholds.
We use two simple binary search trees \Smax\ and \Scur.
\Smax\ indexes all points $p \in P$ by their maximum activation thresholds $\beta^{+\infty}_p$.
We precompute \Smax\ by calling \textsc{ComputeActivation} for each $p \in P$.
Once constructed, \Smax\ remains unchanged across queries.

\Scur\ indexes points by their activation thresholds with respect to the durability parameter.
Suppose the current query parameter is $\tau$ and the previous one is $\tauprev$.
Before executing the current query, \Scur\ indexes each point $p$ by $\beta^\tauprev_p$,
so the current query can use \Scur\ to find $p$'s with $\beta^\tauprev_p \ge \tau$ to activate.
After completing the current query, we update \Scur\ for the next round:
as long as there exists a $\tau$-durable triangle anchored by $p$, \Scur\ indexes $p$ by the value of $\beta^\tau_p$.
Initially, \Scur\ starts out as an empty tree, which can be interpreted as having completed an initial query with durability parameter $+\infty$.

Maintenance of \Scur\ has two cases depending on the current query.
First, consider the more interesting case of $\tauprev > \tau$, where we need to potentially report new result triangles.
For each $p$ activated, i.e., $\beta^\tauprev \ge \tau$,
we call $\textsc{ComputeActivation}(\DTCE, p, \tau, \eps)$ to obtain $\beta^\tau_p$ and update $p$'s entry in \Scur.
This is all we need to do to maintain \Scur\ because,
if $p$ were not activated for the current query, we would have $\beta^\tauprev < \tau$, and therefore $\beta^\tau_p = \beta^\tauprev_p$.

In the less interesting case of $\tau \ge \tauprev$, there are no new result triangles to report, but some old ones may need to be invalidated.
Strategies for maintaining \Scur\ differ depending on the usage scenario.
In the first scenario, suppose that the client issuing the query sequence
incrementally maintains the query result as lists of triangles grouped by anchor points,
and triangles within each list are sorted by durability.
When $\tau \ge \tauprev$, the client can simply trim its lists according to $\tau$.
During this process, it can easy obtain and pass information to the server for updating \Scur:
for each anchor $p$, $\beta^\tau_p$ simply takes on the highest durability value removed from $p$'s list,
or it remains unchanged if no triangle is removed.
In the alternative (and less likely) scenario where the client does not remember anything,
the server can simply rebuild \Scur\ by running \textsc{ComputeActivation} for each $p \in \Smax$ with maximum activation threshold no less than $\tau$.

\paragraph{Correctness.}
We first show that the values $\beta_p^{\tau}$ are updated correctly in Algorithm~\ref{alg:detect-general}.
Let $\tau'$ be the parameter in the binary search that we checked in Algorithm~\ref{alg:detect-general}. 
Point $p$ can only form a $\tau'$-durable $\eps$-triangle with points $q$ whose intervals $I_q$ intersect $I_p^-$ and either $I_q^+<I_p^-+\tau$ or $I_q^+\geq I_p^-+\tau$. $\bigcup_{j}\NewerC_{p,j}$ is the set of points satisfying the first inequality, and $\bigcup_j\OlderC_{p,j}$ the set of points in the second inequality. For every pair $q,s\in \OlderC_{p,j}$, we do not activate point $p$ with durability $\tau'$. If indeed $\dist(q,s)\leq 1$ and $q,s\in \OlderC_{p,j}$, then $(p,q,s)$ is a $\tau$-durable triangle. So our algorithm does not activate a point $p$ because of a previously reported $\tau$-durable triangle.
By definition, it is also straightforward to see that $p$ should be activated at durability $\tau'$
if there is a pair of points $q, s\in \NewerC_{p,j}\cup \OlderC_{p,j}$ such as either $q$ or $s$ belongs in $\NewerC_{p,j}$. This is because either $I_q$ or $I_s$ does not overlap with $I_p$ for more than $\tau$ and overlaps more than $\tau'$, so $(p,q,s)$ was not a $\tau$-durable $\eps'$-triangle for every $\eps'>0$.
Next, let $p$ be a point that is activated because of a triangle $(p,q,s)$.
We show that any $(p,q,s)$ is a $\tau'$-durable $\eps$-triangle.
As we mentioned, either $I_q$ or $I_s$ does not intersect $I_p$ for more than $\tau$ but intersects $I_p$ for more than $\tau'$ so it remains to show that $\dist(p,q), \dist(p,s), \dist(q,s)\leq 1+\eps$.
By the definition of $\DTCE$ we have that $\dist(p,q)\leq 1+\eps/2$ and $\dist(p,s)\leq 1+\eps/2$. If $q,s$ belong in the same subset $\C_{p,j}$ then it also follows that $\dist(q,s)\leq \eps/4\leq 1+\eps$. If $q\in \C_{p,j}$ and $s\in \C_{p,i}$ for $i<j$ then in Algorithm~\ref{alg:detect-general} we only consider this triangle if and only if $\dist(\rep_j, \rep_i)\leq 1+\eps/2$. We have $\dist(q,s)\leq \dist(q,\rep_j)+\dist(\rep_j,\rep_i)+\dist(\rep_i,s)\leq 1+\eps/2+\eps/4+\eps/4=1+\eps$. So $(p,q,s)$ is a $\tau'$-durable $\eps$-triangle that is not $\tau$-durable.

The correctness of Algorithm~\ref{alg:report-delta-general} follows from
the same arguments we used to prove the correctness of Algorithm~\ref{alg:detect-general}. Overall, Algorithm~\ref{alg:report-delta-general} reports all $\tau_{i+1}$-durable triangles along with some $\tau_{i+1}$-durable $\eps$-triangles, that are not $\tau_i$-durable $\eps$-triangles. Hence,  $|T_{\tau_{i+1}} \setminus T_{\tau_i}|\leq\OUT\leq |T_{\tau_{i+1}}^{\eps} \setminus T_{\tau_i}^\eps|$.

\subsection{Solution summary and complexity}
\label{sec:online:complexity}

In summary, we build the data structure $\DTCE$ as described in Section~\ref{sec:tau-durable-ball};
its size is $O(n\log^2 n)$, and it can be constructed in $O(n\log^3 n)$ time.
We also build the index \Smax\ of maximum activation thresholds, which has size $O(n)$.
To construct \Smax, as mentioned, we perform at most $O(\log n)$ guesses for each point,
and each guess invokes \textsc{DetectTriangle} once, which takes $O(\eps^{-O(\ddim)}\log^2 n)$ time;
therefore, the total construction time for \Smax\ is $O(n\eps^{-O(\ddim)}\log^3 n)$.
Finally, we maintain the index \Scur\ of activation thresholds for the current durability parameter;
its size is $O(n)$, its initial construction time is $O(1)$, and its maintenance time will be further discussed below.

To report new result triangles when the durability parameter changes from $\tauprev$ to $\tau$,
we use \Scur\ to search for points $p$ with $\beta^\tauprev_p \ge \tau$ to activate.
Each activated point $p$ requires $O(\OUT_p+\eps^{-O(\ddim)}\log^{2}n)$ time for \textsc{ReportDeltaTriangle} to report all new durable triangles anchored by $p$,  where $\OUT_p$ denotes the number of them.
Then, to maintain \Scur, we need $O(\eps^{-O(\ddim)}\log^{3}n)$ time for \textsc{ComputeActivation},
and $O(\log n)$ time to update \Scur\ for each point $p$ activated.
For each activated $p$, at least one new durable triangle is reported, so the number of calls to  $\textsc{ComputeActivation}$ is bounded by the output size.
Overall, we spend $O\left(\OUT + \eps^{-O(\ddim)}\log^2 n \right)$ time for reporting and $O(\OUT\cdot\eps^{-O(\ddim)}\log^3 n)$ time for maintenance,
where $\OUT$ is the number of results reported.
%Appendix~\ref{appndx:corrMaintain} provides the correctness analysis. We also show that 
%$|T_{\tau_{i+1}}\setminus T_{\tau_i}|\leq \OUT\leq |T_{\tau_{i+1}}^\eps\setminus T_{\tau_i}^\eps|$.

\begin{theorem}
\label{thm:online}
  Given $(P,\dist, I)$, and $\eps > 0$, 
  a data structure of size $O(n\log^{2}n)$ can be constructed in $O(n\eps^{-O(\ddim)}\log^{3}n)$ time such that, 
  the $\eps$-approximate \problemtwo\ problem can be solved in
  $O\left(\OUT \cdot \eps^{-O(\ddim)}\log^3 n\right)$ time,
  where $n = |P|$, $\ddim$ is the doubling dimension of $P$, and $\OUT$ is the number of results reported.
\end{theorem}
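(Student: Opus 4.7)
The plan is to assemble the three ingredients developed in Sections~\ref{sec:online:report} and~\ref{sec:online:thresholds}: the modified data structure $\DTCE$ from Section~\ref{sec:preliminary}, the precomputed index \Smax\ of maximum activation thresholds, and the dynamically maintained index \Scur\ tracking each point's activation threshold relative to the most recent durability parameter. At preprocessing I would build $\DTCE$ and then invoke $\textsc{ComputeActivation}(\DTCE,p,+\infty,\eps)$ for every $p\in P$ to populate \Smax; \Scur\ starts empty (which corresponds to a virtual previous query with parameter $+\infty$). At query time, given the previous parameter $\tauprev$ and the new parameter $\tau<\tauprev$, I would use \Scur\ to enumerate exactly the activated points $p$ (those with $\beta^{\tauprev}_p\ge\tau$), run $\textsc{ReportDeltaTriangle}(\DTCE,p,\tau,\tauprev,\eps)$ to emit all new triangles anchored at $p$, and then run $\textsc{ComputeActivation}(\DTCE,p,\tau,\eps)$ to update $p$'s entry in \Scur\ to $\beta^{\tau}_p$. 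The non-activated points need no update since, by Definition~\ref{def:delta}, $\beta^{\tauprev}_p<\tau$ implies $\beta^{\tau}_p=\beta^{\tauprev}_p$.

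\textbf{Preprocessing bounds.} From the discussion of $\DTCE$ in Section~\ref{sec:tau-durable-ball}, $\DTCE$ uses $O(n\log^2 n)$ space and is built in $O(n\log^3 n)$ time (adding one interval-tree level on top of $\DTC$ from Lemma~\ref{lem:helper-2}). Each call to \textsc{ComputeActivation} performs a binary search of depth $O(\log n)$ over the $O(n)$ candidate durabilities in $\{I_q^+-I_p^-:q\in P\}$, and every guess triggers one call to \textsc{DetectTriangle}, which issues a single $\DTCEQ$ query and then examines $O(\eps^{-O(\ddim)})$ ordered pairs of canonical subsets, doing only $O(1)$ emptiness tests on the implicit $\NewerC_{p,j}$ and $\OlderC_{p,j}$ per pair; the $\DTCEQ$ call dominates, giving $O(\eps^{-O(\ddim)}\log^2 n)$ per \textsc{DetectTriangle} and $O(\eps^{-O(\ddim)}\log^3 n)$ per \textsc{ComputeActivation}. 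Running this once per point and inserting into \Smax\ yields the claimed $O(n\eps^{-O(\ddim)}\log^3 n)$ construction time and $O(n\log^2 n)$ total space.

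\textbf{Per-query bound and main obstacle.} The key observation that makes the query cost output-sensitive is that a point is activated precisely when $\beta^{\tauprev}_p\ge\tau$, which by Definition~\ref{def:delta} certifies the existence of at least one triangle in $T_\tau\setminus T_{\tauprev}$ anchored at $p$; hence the number of activated points is at most $\OUT$. For each activated $p$, \textsc{ReportDeltaTriangle} costs $O(\OUT_p+\eps^{-O(\ddim)}\log^2 n)$ and the subsequent refresh through \textsc{ComputeActivation} plus \Scur-update costs $O(\eps^{-O(\ddim)}\log^3 n)$; summing over the at most $\OUT$ activations and using $\sum_p \OUT_p=\OUT$ yields the target $O(\OUT\cdot\eps^{-O(\ddim)}\log^3 n)$ bound. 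The main obstacle, already discharged by the earlier subsections, is maintaining the invariant that \Scur\ faithfully stores $\{\beta^{\tauprev}_p\}$ at the start of each query: this requires (i) the monotonicity argument that $\textsc{DetectTriangle}(\tau',\tau)$ returns true iff $\beta^{\tau}_p\ge\tau'$, which justifies the binary search in \textsc{ComputeActivation}, and (ii) handling the easier case $\tau\ge\tauprev$ by letting the client trim its stored result lists and feed the highest removed durability back as $\beta^{\tau}_p$, or alternatively rebuilding the affected entries from \Smax; neither path perturbs the stated query-time bound. Combining the three bounds gives Theorem~\ref{thm:online}.
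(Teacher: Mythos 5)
Your proposal assembles exactly the same ingredients as the paper's own argument in Section~\ref{sec:online:complexity}: build $\DTCE$ and \Smax\ in preprocessing, use \Scur\ to identify activated points, charge each activation (and its \textsc{ComputeActivation} refresh at $O(\eps^{-O(\ddim)}\log^3 n)$) to the at least one new triangle it certifies, and sum $\OUT_p$ over activated anchors. The decomposition, the output-sensitivity argument, and all the stated bounds match the paper's proof, so the proposal is correct and essentially identical in approach.
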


%\vspace{-1em}
\section{Reporting Aggregate-Durable Pairs}
\label{sec:aggregate}

\subsection{\SUM}
\label{sec:aggregate:sum}

We start by describing a data structure that allows us to efficiently compute
the total length of all intersections between a query interval with a given set of intervals.
Then we show how to use this primitive to report all $\tau$-\SUM-durable pairs for \problemthree-\SUM,
along with some $\tau$-\SUM-durable $\eps$-pairs (but no other pairs).

\paragraph{Interval-\SUM-durability.}
Given a set of intervals $\mathcal{I}$, we want a primitive that can efficiently decide,
given any query interval $J$ and $\tau >0$, whether $\sum_{I \in \mathcal{I}} |I \cap J| \geq \tau$.
To this end, we construct a data structure \ITSUM\ over $\I$, which is a variant of an interval tree
where each tree node $v$ is annotated with the following information: 
\begin{itemize}[leftmargin=*]
    \item $|v|$, the total number of intervals stored at $v$;
    \item $\sum_{I \in v} |I|$, the total length of intervals stored at $v$;
    \item $\sum_{I \in v} I^+$, the sum of right endpoints of intervals stored at $v$;
    \item $\sum_{I \in v} I^-$, the sum of left endpoints of intervals stored at $v$.
\end{itemize}
Given a query interval $J$, we obtain $O(\log^2 n)$ canonical set of nodes in \ITSUM,
where each node $v$ falls into: %has one of the following types:
(1)~every $I \in v$ completely covers $J$;
(2)~$J$ completely covers every $I \in v$;
(3)~every $I \in v$ partially intersects $J$ with $I^+ \in J$;
(4)~every $I \in v$ partially intersects $J$ with $I^- \in J$.
Then, we can rewrite the \SUM-durability of intervals with respect to $J$ as follows:
\begin{displaymath}
\sum_{I \in \I} |I \cap J|
= \sum_v \sum_{I \in v} |I \cap J|
= \sum_v \left\{
\begin{array}{ll}
    |v| \cdot |J| & \textrm{if (1);}\\
    \sum_{I \in v} |I| & \textrm{if (2);}\\
    \sum_{I \in v} I^+ - |v| \cdot J^- & \textrm{if (3);} \\
    |v| \cdot J^+ - \sum_{I \in v} I^- & \textrm{if (4).}
\end{array}
\right.
\end{displaymath}
Note that \ITSUM\ can be constructed in $O(n\log^2 n)$ time and uses $O(n\log n)$ space.
This way, we have a procedure \textsc{ComputeSumD} which, given \ITSUM\ and $J$, returns $\sum_{I\in \I}|I\cap J|$ in $O(\log^2 n)$ time. The interval tree $\ITSUM$ can also be used to find $\C_p$.% as defined in the previous sections.

\paragraph{Data structure.}
While \ITSUM\ makes it efficient to sum durabilities over a set of intervals given $I_p \cap I_q$ for a candidate pair $(p,q)$,
we cannot afford to check all possible pairs,
and we have not yet addressed the challenge of obtaining the intervals of interest
(which must come from witness points incident to both $p$ and $q$) in the first place.
The high-level idea is to leverage the same space decomposition from the previous sections
to efficiently obtain canonical subsets of witness points, in their implicit representation.
These canonical subsets of intervals serve as the basis for building \ITSUM\ structures.
In more detail, we construct \DTCSUM\ in a similar way as \DTC\ in Section~\ref{sec:tau-durable-ball}.
Like \DTC, \DTCSUM\ is a two-level data structure consisting of a cover tree and an interval tree variant (as described above) for every node of the cover tree.
For each cover tree node $u$, let $\C_u$ denote the subset of points in $P$ within the ball of $u$ centered at $\rep_u$.
We build $\ITSUM_u$ over $\C_u$ with \SUM\ annotations, and
for each node in $\ITSUM_u$, we also store points in decreasing order of their right interval endpoints.
Overall, we can construct \DTCSUM\ in $O(n\log^3 n)$ time having $O(n\log^2 n)$ space.

\begin{algorithm}[t]
\caption{{\sc ReportSUMPair}$(\DTCSUM, p, \tau, \eps)$}
\label{alg:sum}

    $\C_{p}: \{\C_{p,1}, \C_{p,2}, \cdots, \C_{p,k}\} \gets \DTCQ(p,\tau,\eps/2)$,
    with $\rep_i$ as the representative point of the cover tree node for $\C_{p,i}$,
    and $\ITSUM_{p,i}$ as the annotated interval tree for the cover tree node\;
\ForEach{$j \in [k]$}{
    \ForEach{$q \in \C_{p,j}$ in descending order of $I_q^+$}{
        $t \gets 0$\;
        \ForEach{$i \in [k]$}{
            \If{$\dist(\rep_i, \rep_j) \le 1 + \frac{\eps}{2}$}{
                $t \gets t + \textsc{ComputeSumD}(\ITSUM_{p,i}, I_p \cap I_q)$\label{line:SUMQuery}\;
            }
        }
        \lIf{$t\geq \tau + 2\cdot|I_p\cap I_q|$}{
            \KwSty{report} $(p,q)$
        }\lElse{
            \KwSty{break}%
        }
    }
}
\end{algorithm}

\paragraph{Algorithm.}
We report all $\tau$-\SUM-durable pairs $(p,q)$ where $I_p^- \ge I_q^-$ (to avoid duplicates); we say $p$ \emph{anchors} the pair.
For each $p \in P$, we invoke \textsc{ReportSUMPair} (Algorithm~\ref{alg:sum}) to report $\tau$-\SUM-durable $\eps$-pairs $(p,q)$ anchored by $p$.
To this end, \textsc{ReportSUMPair} runs the $\tau$-durable ball query $\DTCQ(p,\tau,\eps/2)$ over \DTCSUM,
and obtain a family of result point sets $\C_{p,1}, \C_{p,2}, \ldots, \C_{p,k}$ for some $k = O(\eps^{-\ddim})$.
Each $\C_{p,j}$ is covered by a cover tree ball in $\DTCSUM$ with diameter of no more than $\eps/2$,
and contains all points $q$ within the ball where $I_q^-+\tau \le I_p^-+\tau \le I_q^+$, as explained in Section~\ref{sec:tau-durable-ball}, sorted with respect to $I_q^+$.
Let $\ITSUM_{p,j}$ denote the interval tree for the cover tree node corresponding to $\C_{p,j}$.
For each $j$, we go through each point $q \in \C_{p,j}$ in decreasing order of right endpoints to check whether $(p,q)$ is $\tau$-\SUM-durable.
To do this check, we consider witnesses from point sets $\C_{p,1}, \C_{p,2}, \ldots, \C_{p,k}$.
We can skip an entire set $\C_{p,i}$ if its ball center $\rep_i$ is too far from $\rep_j$, because all points in $\C_{p,i}$ would be too far from $q$.
Otherwise, we query $\ITSUM_{p,i}$ using interval $I_p \cap I_q$ to obtain the sum of durabilities over all witnesses in $\C_{p,i}$.
We compute these partial sums together and compare the total with $\tau + 2\cdot|I_p\cap I_q|$
(note that the second term accounts for the fact that the partial sums include the contributions of $p$ and $q$ themselves,
which should be discounted).
If the total passes the threshold, we report $(p,q)$.
If not, we stop consider any remaining point $q' \in \C_{p,j}$ (which has $I^+_{q'} < I^+_q$),
since $I_p \cap I_{q'} \subset I_p \cap I_q$ and will surely yield a lower total durability. This is the key for output-sensitive time.

%Appendix~\ref{appendix:aggregate} provides the correctness and complexity analysis.
%We also show $|K_\tau| \leq \OUT \leq |K_{\tau}^{\eps}|$.
\paragraph{Correctness.} First, each pair $(p,q)$ is reported at most once, as $(p,q)$ is reported if $I_p^-\geq I_q^-$. Next, we show that every pair reported must be a $\tau $-SUM-durable $\eps$-triangle. Consider a pair $(p,q)$ that is reported. Note that our algorithm considers a node $u_i$ from \DTCSUM\ with radius $\eps/4$ if and only if $\dist(p, \rep_i)\leq 1+\eps/4$.
If $q\in u_i$,  we have $\dist(p,q)\leq \dist(p, \rep_i)+\dist(\rep_i,q)\leq 1+\eps/2$. Hence, in any pair $(p,q)$ we return it holds that $\dist(p,q)\leq 1+\eps$.
Then we only consider points within distance $1+\eps$ from both $p,q$ to find the sum of their corresponding intervals. Indeed, we only consider the pairs $\C_{p,i}, \C_{p,j}$ with $\dist(\rep_i, \rep_j)\leq 1+\eps/2$. Let $q'$ be any point from $\C_{p,i}$. We have $\dist(p,q')\leq 1+\eps/2$, and $\dist(q,q')\leq \dist(\rep_i, q)+\dist(\rep_i, \rep_j)+\dist(\rep_j, q')\leq 1+\eps$. Overall, by showing i) $\dist(p,q)\leq 1$, ii) that we only take the sum of intervals in $I_p\cap I_q$ among points (witness points) within distance $1+\eps$ from both $p, q$, and iii) the correctness of the \ITSUM\ data structure, we conclude that the reporting pair $(p,q)$ is a $\tau$-durable $\eps$-pair.

Finally, we show that every $\tau$-SUM-durable pair will be reported. Let $(p,q)$ be an arbitrary $\tau$-SUM-durable pair. Suppose $q\in u_j$, where $u_j$ is a node of \DTCSUM\ of radius at most $\eps/4$, with representative point $\rep_j$. Since $\dist(p, \rep_j)\leq \dist(p,q)+\dist(q,\rep_j)\leq 1+\eps/4$, we have that $u_j \in \mathcal{C}_p$. Without loss of generality, assume that $q\in \C_{p,j}$. Next, we show that for point $q' \in P$, if $\dist(p,q') \leq 1$ and $\dist(q,q')\leq 1$, we always consider $q'$ in the witness set.
%We recall that we consider the nodes $C_j$ in $\mathcal{C}$ such that $\dist(rep_i,rep_j)\leq 1+\eps/2$.
Since $\dist(p,q') \leq 1$ we have that $q'\in \C_p$.
Without loss of generality, assume that $q'\in \C_{p,i}$.
In this case,  $\dist(\rep_j,\rep_i)\leq \dist(\rep_j,q)+\dist(q,q')+\dist(q',\rep_i)\leq 1+\eps/2$, so $q'$ is included in $\ITSUM_{p,i}$ considered in line~\ref{line:SUMQuery} of Algorithm~\ref{alg:sum}.  It remains to show that if $(p,q)$ is a $\tau$-durable pair, $q$ must be visited during the traversal of points in $\C_{p,j}$. 
We prove it by contradiction. Let $w \in \C_{p,j}$ be a point such that $I_w^-\leq I_p^- \le I_q^+ \le I_w^+$. Suppose after visiting $w$, the traversal of points in $\C_{p,j}$ stops. Implied by the stopping condition, $(p,w)$ is not a $\tau$-SUM-durable pair. Meanwhile, as $(p,q)$ is a $\tau$-SUM-durable pair, $(p,w)$ must also be a $\tau$-SUM-durable pair, implied by $I_q\cap I_p \subseteq I_w\cap I_p$, and the fact that we run the $\textsc{ComputeSumD}(\ITSUM_{p,i}, I_p \cap I_q)$ query on the same sets $\C_{p,i}$,
coming to a contradiction. Thus, every $\tau$-SUM-durable pair must be reported.
\begin{comment}
\begin{lemma} 
\label{lem:sum-1}
    Given an input point $p$, and two other points $w, q$ such that $I^-_p \in I_w, I_q$, if $I^+_w > I^+_q$, then $I_p \cap I_q \subseteq I_p \cap I_w$.
\end{lemma}

\begin{lemma}
\label{lem:sum-2}
    Given two intervals $I_1,I_2$ such that $I^-_1 = I^-_2 \le I^+_1 \le I^+_2$, for every set of intervals $U$, there must be $\sum_{I \in U} |I \cap I_1| \le \sum_{I \in U} |I \cap I_2|$.
\end{lemma}
\end{comment}
 
\paragraph{Time Complexity.} The construction time of $\ITSUM$ is $O(n\log^2 n)$, so it takes $O(n\log^3 n)$ time to construct \DTCSUM.
For each $p$, it takes $O(\eps^{-O(\ddim)}+\log n)$ time to derive the canonical set of nodes $\mathcal{C}_p$ and $O(\eps^{-O(\ddim)}\log n)$ time to derive the sorted intervals in every node of $\mathcal{C}_p$. For each (sorted) interval $I_q$ in $\C_{p,i}$ we visit $O(\eps^{-O(\ddim)})$ other nodes $\C_{p,j}$ and we run a $O(\log^2 n)$ time query to find the sum using \ITSUM. When we find out that $q$ does not form a $\tau$-SUM-durable pair with $p$ we skip the rest points in $\C_{p,i}$ so the running time is output-sensitive. Overall, the running time is bounded by $O(n\log^3 n + (n+\OUT) \cdot \eps^{-O(\ddim)}\log^2 n)$,
where $K_\tau\leq \OUTS \leq K_{\tau}^{\eps}$.
\begin{theorem}
    \label{thm:offlineSUM}
    Given $(P,\dist, I)$, $\tau > 0$ and $\eps > 0$,
    the $\eps$-approximate \problemthree-\SUM\ problem can be solved in
    $O(n\log^3 n+(n+\OUT)\cdot \eps^{-O(\ddim)}\log^2 n)$ time,
    where $n = |P|$, $\ddim$ is the doubling dimension of $P$, and $\OUT$ is the number of pairs reported.
\end{theorem}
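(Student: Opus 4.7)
The plan is to break the proof into three pieces matching the three parameters in the claimed running time: (i) the preprocessing time of $\DTCSUM$, (ii) the base per-anchor cost incurred whether or not output is produced, and (iii) the per-output cost. The correctness of the algorithm naturally splits into soundness (every reported pair is a $\tau$-\SUM-durable $\eps$-pair) and completeness (every exact $\tau$-\SUM-durable pair is reported), and both are needed to justify the sandwich $K_\tau \le \OUT \le K_\tau^\eps$.

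For preprocessing, I would first verify $\ITSUM$: by storing the four annotations (cardinality, total length, and sums of right/left endpoints) at each node, and using the interval-tree query to split $\I$ into $O(\log^2 n)$ canonical nodes falling into the four intersection types listed in the section, the closed-form expression for $\sum_{I\in\I}|I\cap J|$ follows by linearity in each case. A single $\ITSUM_u$ thus takes $O(|\C_u|\log|\C_u|)$ time; summing over all cover-tree nodes (each point participates in $O(\log n)$ nodes) yields $O(n\log^3 n)$ for $\DTCSUM$ with $O(n\log^2 n)$ space. For soundness, if $(p,q)$ is reported, then $q\in\C_{p,j}$ with $\dist(\rep_j,q)\le\eps/4$ and $\dist(p,\rep_j)\le 1+\eps/4$, giving $\dist(p,q)\le 1+\eps/2$; and every witness $w$ contributing to the running sum lies in some $\C_{p,i}$ with $\dist(\rep_i,\rep_j)\le 1+\eps/2$, so $\dist(q,w),\dist(p,w)\le 1+\eps$ via two applications of the triangle inequality. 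For completeness, a true pair $(p,q)$ with anchor $p$ puts $q$ in some $\C_{p,j}\in\C_p$, and any exact witness $w$ satisfies $\dist(p,w)\le 1$, forcing $w\in\C_{p,i}$ with $\dist(\rep_i,\rep_j)\le 1+\eps/2$, so $w$ is counted by \textsc{ComputeSumD} on $\ITSUM_{p,i}$.

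The main obstacle is proving that the early \texttt{break} inside the inner loop preserves completeness. The key observation I would formalize is a monotonicity lemma: for points $q\in\C_{p,j}$ with $I_q^-\le I_p^-\le I_p^-+\tau\le I_q^+$, as we traverse in decreasing order of $I_q^+$, the intersection $I_p\cap I_q = [I_p^-,\min(I_p^+,I_q^+)]$ is nested, so for $q'$ visited after $q$ we have $I_p\cap I_{q'}\subseteq I_p\cap I_q$. Because the witness set used in \textsc{ComputeSumD} depends only on $p$ (not on $q$), the accumulated total $t$ in the algorithm is monotone non-increasing along the traversal; adjusting for the $2|I_p\cap I_q|$ self-contribution offset shows the test $t\ge \tau+2|I_p\cap I_q|$ is likewise monotone. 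Hence failing at $q$ certifies that all later $q'$ in $\C_{p,j}$ also fail, which legitimizes the \texttt{break} and is precisely what makes the procedure output-sensitive.

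Finally, for the time analysis I would sum: $\DTCQ(p,\tau,\eps/2)$ costs $O(\log n + \eps^{-O(\ddim)})$ per anchor, and retrieving the sorted list in each $\C_{p,j}$ is $O(\eps^{-O(\ddim)}\log n)$. For each visited $q$, we run $O(\eps^{-O(\ddim)})$ \textsc{ComputeSumD} queries at $O(\log^2 n)$ each. The monotonicity argument ensures that per anchor we visit at most one ``failing'' $q$ per canonical subset, contributing $O(\eps^{-O(\ddim)})$ wasted visits per anchor and $O(n\eps^{-O(\ddim)}\log^2 n)$ in total; every other visit is charged to a reported pair, giving $O(\OUT\cdot\eps^{-O(\ddim)}\log^2 n)$. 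Adding the $O(n\log^3 n)$ preprocessing produces the bound $O(n\log^3 n+(n+\OUT)\cdot\eps^{-O(\ddim)}\log^2 n)$ with $K_\tau\le\OUT\le K_\tau^\eps$, completing the theorem.
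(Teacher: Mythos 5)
Your proposal is correct and follows essentially the same route as the paper: the same $\ITSUM$/$\DTCSUM$ construction, the same soundness and completeness arguments via the triangle inequality on canonical-ball representatives, and the same justification of the early \texttt{break} (the paper phrases it as a proof by contradiction using $I_p\cap I_q\subseteq I_p\cap I_w$ and the fact that the witness subsets queried depend only on $p$, which is exactly your monotonicity lemma, including the observation that the $2|I_p\cap I_q|$ offset preserves monotonicity). The time accounting also matches the paper's, so nothing further is needed.
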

%\vspace{-1em}
\subsection{\UNION}
\label{sec:aggregate:union}

\newchange{
Solving the general \problemthree-\UNION\ problem is challenging because of the inherent hardness of computing the union of intervals that intersect a query interval, i.e., we cannot design an efficient primitive for \UNION\ as the \textsc{ComputeSumD} primitive for \SUM.
In practice, even if the size of the witness set $U$ is large,
a smaller subset of $U$ may be all that is required for its union to reach the durability parameter.
With this observation, we approach the problem by designing an algorithm whose performance depends on $\kappa$, a constraint on the size of the witness set.
More precisely, given a durability parameter $\tau > 0$ and a positive integer $\kappa \in \mathbb{Z^+}$,
we say a pair $(p_1, p_2) \in P \times P$ is {\em $(\tau,\kappa)$-\UNION-durable} if
$\dist(p_1, p_2) \leq 1$ and
there exists $U \subseteq \{ u \in P \mid \dist(p_1,u), \dist(p_2, u) \le 1 \}$ such that
$|U| \le \kappa$ and $\left|\bigcup_{u\in U} I(u, {p_1} ,{p_2})\right| \ge \tau$.
An approximate version is defined by replacing the distance constraint $\le 1$ with $\le 1 + \eps$.
We present an $\O((n+\OUT)\cdot \kappa\eps^{-O(\ddim)})$-time algorithm for $\OUT \in \left[|K_{\tau, k}|, \big|K_{(1-1/e)\tau, \kappa}^{\eps}\big|\right]$,
with $K_{\tau,\kappa}$ denoting the set of $(\tau,\kappa)$-\UNION\ durable pairs
and $K_{(1-1/e)\tau, \kappa}^{\eps}$ denoting the set of $((1-1/e)\tau,\kappa)$-\UNION\ durable $\eps$-pairs.
}

\paragraph{$(\tau,\kappa)$-\UNION-durable pair.}
Next, we focus on finding $(\tau,\kappa)$-\UNION-durable pairs for some known $\kappa$,
which should work well in practical cases where a handful of witness points are able to provide sufficient coverage for the pair.
The overall algorithm has the high-level idea of leveraging the space decomposition as the \problemthree-\SUM\ case in Section~\ref{sec:aggregate:sum},
but it requires a primitive different from \textsc{ComputeSumD} and a different way of invoking this primitive across witness subsets.

\paragraph{High-level Idea.} Given a set of intervals $\I$ and a target interval $J$,
our approach is to find a subset of intervals $X\subseteq \mathcal{I}$ of size $\kappa$
that maximizes the \UNION-durability with respect to $J$,
namely $\big|\bigcup_{I \in X} (I\cap J)\big|$ (to compare with $\tau$).
There is an apparent connection to the {\em maximum $\kappa$-coverage} problem,
where given a family of sets over a set of elements, we want to choose $\kappa$ sets to cover the maximum number of elements.
Here, we can regard each set as $I \cap J$ for each $I \in \I$,
and the goal is to choose $\kappa$ such sets to cover as much of $J$ as possible.
The standard greedy algorithm gives an $(1-1/e)$-approximation for this problem~\cite{hochbaum1996approximating},
which inspires us to follow a similar approach. We leave details on data structures, pseudocode, correctness and complexity analysis to Appendix~\ref{appendix:aggregate}.
Our greedy approach chooses one interval at time to cover $J$,
and the choice is always the one that maximizes the resulting increase in coverage.
In more detail, let $X \subseteq \I$ denote the set of intervals already chosen,
which leaves $J \setminus \bigcup_{I \in X} I$, the uncovered parts of $J$, as a set $Y$ of intervals.
Consider the pair $(I_x, I_y)$, where $I_x \in \I \setminus X$, and $I_y \in Y$, with the largest overlap, i.e., $I_x \cap I_y$;
we greedily choose $I_x$ as the next interval to cover $J$.

To implement this greedy approach efficiently,
we build a data structure \DTCUNION\ similarly as \DTCSUM\ in Section~\ref{sec:aggregate:sum}.
\DTCUNION\ uses a different variant of the interval tree \ITUNION,
which, given a query interval $J$, finds the indexed interval with the largest overlap with $J$.
The overall algorithm reports all $(\tau,\kappa)$-\UNION\ durable pairs for each anchor point $p$ by querying \DTCUNION,
and for each candidate $(p,q)$, performs the greedy choice $k$ times to compute the \UNION-durability of $(p,q)$.
Each greedy choice involves querying the \ITUNION\ structures for the $O(\eps^{-O(\ddim)}\log^2 n)$ canonical subsets of witness points;
some additional elementary data structures help ensure that the greedy algorithm takes $O\left(\kappa\eps^{-O(\ddim)}\log^2 n\right)$ time. In Appendix~\ref{appendix:aggregate}, we show that the overall time is $O\left(n\log^{3}n + (n+\OUT)\cdot \eps^{-O(\ddim)} \kappa\log^2 n\right)$, where $|K_{\tau,\kappa}| \leq \OUT \leq \left|K_{(1-1/e)\tau, \kappa}^{\eps}\right|$. Putting everything together, we obtain:
\begin{theorem}
    \label{thm:offlineUNION}
    Given $(P,\dist, I)$, $\tau > 0$, $\eps > 0$, and integer $\kappa \in \mathbb{Z}^+$,
    the $\eps$-approximate \problemthree-\UNION\ problem can be solved in
    $O(n\log^{3}n + (n+\OUT)\cdot \eps^{-O(\ddim)} \kappa\log^2 n)$ time,
    where $n = |P|$, $\ddim$ is the doubling dimension of $P$, and $\OUT$ is the number of pairs reported.
\end{theorem}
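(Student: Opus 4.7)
The plan is to mirror the \problemthree-\SUM\ algorithm of Section~\ref{sec:aggregate:sum}, replacing the \textsc{ComputeSumD} primitive with a greedy maximum-coverage primitive built on a variant interval tree \ITUNION. I would build \DTCUNION\ exactly as \DTCSUM: a cover tree on $P$ whose each node $u$ carries an \ITUNION\ structure over the intervals of $P_u$. The key property I need from \ITUNION\ is that, given a query interval $I_y$, it returns in $O(\log^2 n)$ time the indexed interval $I$ maximizing $|I \cap I_y|$; this can be engineered by augmenting the standard interval tree so that each of its $O(\log^2 n)$ canonical node-sets (classified by the four geometric cases used for $\ITSUM$) supports, in $O(\log n)$ additional time, the optimization of an appropriate linear function of endpoints. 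As with \DTCSUM, this takes $O(n\log^3 n)$ construction time and $O(n\log^2 n)$ space.

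For each anchor $p \in P$ with $I_p^- \geq I_q^-$, I would invoke $\DTCQ(p,\tau,\eps/2)$ to obtain the $k = O(\eps^{-O(\ddim)})$ canonical subsets $\C_{p,1},\ldots,\C_{p,k}$, and traverse each $\C_{p,j}$ in decreasing order of $I_q^+$. For each candidate $(p,q)$, I would run the classical greedy maximum-coverage procedure over the implicit witness set $\bigcup_{i:\,\dist(\rep_i,\rep_j)\leq 1+\eps/2} \C_{p,i}$, selecting at each of the $\kappa$ rounds an interval $I^*$ maximizing the marginal coverage of $Y = (I_p \cap I_q) \setminus \bigcup_{I \in X} I$; if the final total coverage is at least $(1-1/e)\tau$, report $(p,q)$, otherwise stop processing $\C_{p,j}$ entirely. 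The rationale for stopping is that, by the $(1-1/e)$ approximation of greedy~\cite{hochbaum1996approximating}, $\text{greedy}(I_p \cap I_q) < (1-1/e)\tau$ implies $\mathrm{OPT}(I_p \cap I_q) < \tau$; monotonicity of $\mathrm{OPT}$ under shrinking targets then guarantees that every subsequent $q'$ with $I_{q'}^+ < I_q^+$ (so $I_p \cap I_{q'} \subsetneq I_p \cap I_q$) is not $(\tau,\kappa)$-\UNION-durable either, so skipping it is consistent with the theorem's output specification.

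The main obstacle I anticipate is implementing the greedy phase in $O(\kappa \eps^{-O(\ddim)} \log^2 n)$ time per candidate pair rather than the naive $O(\kappa^2 \eps^{-O(\ddim)}\log^2 n)$. I plan to exploit the invariant that $Y$ grows by at most one sub-interval per round (the chosen $I^*$ splits a single $I_y \in Y$ into at most two pieces), so $|Y| = O(\kappa)$ throughout, and to maintain a priority queue over $(\C_{p,i}, I_y)$ pairs keyed by the $\ITUNION$-returned max-overlap values. Only the $O(k)$ entries touching the two newly created pieces of $Y$ need to be refreshed per round, yielding $O(k \log^2 n)$ work per round and the desired $O(\kappa k \log^2 n)$ total. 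The ``do not re-pick the same $I^*$'' constraint can be handled by a standard lazy-deletion trick on the queue, with top-$\kappa$ augmentation of \ITUNION\ to supply the next-best interval on demand.

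For correctness, the ball-separation argument used for \SUM\ transfers verbatim: any interval consulted by the greedy step comes from a $\C_{p,i}$ whose center is within $1+\eps/2$ of $\rep_j$, so the underlying witness point is within distance $1+\eps$ of both $p$ and $q$, making every reported pair an $\eps$-pair. The greedy guarantee then implies that every $(\tau,\kappa)$-\UNION-durable pair achieves greedy value at least $(1-1/e)\tau$ and is reported, while every reported pair admits a size-$\kappa$ witness subset whose coverage is at least $(1-1/e)\tau$ and is therefore $((1-1/e)\tau,\kappa)$-\UNION-durable as an $\eps$-pair. Combining $O(n\log^3 n)$ construction with the per-anchor cost (each reported pair, plus at most one ``witness of failure'' per canonical subset per anchor, costs $O(\kappa \eps^{-O(\ddim)} \log^2 n)$) yields the stated $O(n\log^3 n + (n+\OUT)\cdot \eps^{-O(\ddim)} \kappa \log^2 n)$ bound with $|K_{\tau,\kappa}| \leq \OUT \leq |K_{(1-1/e)\tau,\kappa}^\eps|$.
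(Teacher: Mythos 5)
Your proposal matches the paper's proof in all essentials: the same $\DTCUNION$ structure (cover tree with a max-overlap interval-tree variant $\ITUNION$ per node), the same anchor-based traversal in decreasing order of $I_q^+$ with early termination justified by the $(1-1/e)$ greedy guarantee plus monotonicity of coverage under shrinking $I_p\cap I_q$, and the same heap-based implementation of greedy max-$\kappa$-coverage yielding $O(\kappa\,\eps^{-O(\ddim)}\log^2 n)$ per candidate pair and the output range $|K_{\tau,\kappa}|\leq \OUT\leq |K_{(1-1/e)\tau,\kappa}^{\eps}|$. The only deviation is a cosmetic one in heap organization (you key entries by (canonical subset, uncovered segment) pairs, whereas the paper keeps one entry per uncovered segment with its globally best interval precomputed via \textsc{MaxIntersection}); both give the same asymptotic cost.
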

%\vspace{-1em}
\section{Related Work}

In database and data mining, there is a large body of literature on finding patterns in temporal graphs~\cite{araujo2016discovery, gong2012community, lin2008facetnet, franzke2018pattern, semertzidis2016durable, yang2016diversified, redmond2016subgraph}.
% However, most of the proposed algorithms do not have provable guarantees on their performance, and they are often expensive, i.e., they do not run in (near-)linear time in the input size.
Hu et al.~\cite{hu2022computing} studied the problem of computing temporal join queries efficiently;
the problem of finding durable triangles is a special case of the problem they studied with self-joins.
While~\cite{hu2022computing, yang2016diversified} have provable guarantees,
the algorithms are expensive, requiring time super-linear in the number of edges to report all durable triangles.
Recently, Deng et al.~\cite{deng2023space} proposed algorithms to report or count triangles (and other simple patterns) in time super-linear in the graph size.
In contrast, we work with an implicit representation of the proximity graph
and design algorithms that run in time near-linear in the number of nodes and output size.
%in the offline setting. In the online setting, the running time is only sublinear with respect to the number of nodes and linear with respect to the newly reported patterns.

There is another line of work in computational geometry on detecting triangles and other simple patterns in intersection graphs.
Eppstein and Erickson~\cite{eppstein1994iterated} gave an $O(n\log n)$ algorithm to detect
if an intersection graph consisting of unit balls in $\mathbb{R}^d$ has a constant clique. % of constant size.
Kaplan et al.~\cite{kaplan2019triangles} can detect a triangle in a unit-disk graph in $\mathbb{R}^2$ in $O(n\log n)$ time
where edges can be weighted.
The approach in~\cite{chan2013klee} can detect in $\O(n^{d/2})$ time
if a clique of constant size exists in an intersection graph of general boxes in $\mathbb{R}^d$.
Chan~\cite{chan2022finding} recently improved the results on detecting cliques, cycles, and other simple patterns in intersection graphs,
where the nodes are boxes, general fat objects in $\mathbb{R}^d$, or segments in $\mathbb{R}^2$,
and two nodes are connected if the corresponding objects intersect.
For example, if nodes are fat objects, their algorithm can detect a constant cycle or clique in $O(n\log n)$ time.
The problems we focus on in this paper have major differences with this line of work:
(\romannumeral 1) previous methods only worked for detecting whether a pattern exists, while our goal is to report all patterns;
(\romannumeral 2) all previous works focused on non-temporal graphs, while we consider the more challenging {\em temporal graphs}, where nodes have lifespans;
(\romannumeral 3) we additionally considered an incremental reporting setting to support queries with different parameters.

The notion of durability has been studied in other queries,
such as durable top-$k$ queries~\cite{gao2018durable, gao2021durable} and durability prediction~\cite{gao2021efficiently}.
It also has been studied in computational topology, where the goal is to compute ``persistent'' (durable) topological features; see~\cite{edelsbrunner2022computational, dey2022computational}.
%\vspace{-1em}
\section{Conclusion}

In this paper, we have studied the problem of reporting durable patterns in proximity graphs.
We work with an implicit representation of the input graph,
and propose efficient algorithms that run in near-linear time in the number of nodes,
under any general metric with bounded doubling dimension.
For future work, we believe that some of our algorithms and data structures can also be used for
counting %the number of $\tau$-
durable patterns in near-linear time (instead of reporting them).
Second, while we have focused on simple patterns such as triangles and paths,
it would be interesting to explore near-linear time algorithms for more general and complex patterns.
Third, we have considered only the case when nodes have lifespans but otherwise remain stationary;
one could further consider the case when their positions change over time (hence inducing also lifespans on edges).
A possible direction is to use \emph{kinetic data structures} to maintain the evolving graph topology.
Finally, a challenging question is whether we can extend our approach to a general graph already with an explicit representation,
but without first computing an embedding.

\newpage
\bibliographystyle{abbrv}
\bibliography{ref, acmart}
\newpage
\appendix

%\section{Table of notation}
%\label{appndx:tableNotation}

\section{Cover tree for ball reporting queries}
\label{appndx:covertree}
We consider the case where the doubling dimension is constant or the expansion constant is bounded by a constant. Furthermore, we assume that the spread of the items $P$ is bounded by a polynomial on $n$.
Given a query item $q$ and an error threshold $\eps$ the goal is to find a family of sets $C=\{C_1, \ldots, C_m\}$, with $m=O(\eps^{-O(\ddim)})$, such that $C_i\subseteq P$, $C_i\cap C_j=\emptyset$, for every item $p\in P$ with $\dist(p,q)\leq 1$, $p\in C_j$, for an index $j\leq m$, and for every point $p\in \bigcup_{i\leq m}C_i$ it holds that $\dist(p,q)\leq 1+\eps$. Finally we require that the distance of any pair of items inside $C_i$ to be at most $\eps$.

%We show two ways to achieve it. One holds when the expansion constant is bounded and the other when either the expansion constant is bounded or the doubling dimension is constant.

When the spread is bounded, the cover tree consists of $O(\log n)$ levels. Let assume that the root has the highest level and the leaf nodes the lowest level. Each node $v$ in the cover tree is associated with a representative point $\rep_v\in P$.
For each node $v$ in level $i$ of the cover tree it holds that:
i) If $u$ is another node in level $i$ then $\dist(\rep_v, \rep_u)>2^i$.
ii) If $v$ is not the root node, it always has a parent $w$ in level $i+1$. It holds that $\dist(\rep_v, \rep_w)<2^{i+1}$.
iii) If $v$ is not a leaf node, $v$ has always a child $w$ such that $\rep_v=\rep_w$.
Assuming that the doubling dimension is $\ddim$ we have that each node $v$ of the cover tree has $O(2^{O(\ddim)})=O(1)$ children. The same, constant bound, holds for bounded expansion constant.
The standard cover tree has space $O(n)$ and can be constructed in $O(n\log n)$ time~\cite{beygelzimer2006cover, har2005fast}.

Notice that every node in the lowest level contains one item from $P$ and each item in $P$ appears in one leaf node. Let $P_v$ be the set of points stored in (the leaf nodes of) the subtree rooted at node $v$. We do not explicitly store $P_v$ in every node $v$ of the cover tree. Instead, for every node $v$ we add a pointer to the leftmost leaf node in the subtree rooted at $v$. If we also link all the leaf nodes, given a node $v$, we can report all points in $P_v$ following the pointers, in $O(|P_v|)$ time.
Our modified cover tree has space $O(n)$ and can be constructed in $O(n\log n)$ time.
For each node $v$ in level $i$, let $r_v=2^i$ be its separating radius and $e_v=2^{i+1}$ be its covering radius.

\begin{lemma}
If $p\in P_v$ then $\dist(p,\rep_v)< e_v$.
\end{lemma}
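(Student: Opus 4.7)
The plan is to prove the claim by induction on the level of $v$, working upward from the leaves of the cover tree. The key ingredient is the covering property, which bounds the distance between the representative of a node and the representative of each of its children.

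For the base case, I would take $v$ to be a leaf at some level $i_0$. Since each point of $P$ is stored in a unique leaf and the representative of a leaf is the point it stores, $P_v = \{\rep_v\}$. Hence $\dist(\rep_v, \rep_v) = 0 < 2^{i_0+1} = e_v$, and the claim holds trivially.

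For the inductive step, I would suppose that $v$ is at level $i$ and that the claim holds for every child of $v$ (which lies at level $i-1$). Any point $p \in P_v$ must lie in $P_u$ for some child $u$ of $v$. By the covering property applied to the parent/child pair $(v,u)$, we have $\dist(\rep_v, \rep_u) < 2^i$. By the inductive hypothesis applied at level $i-1$, we have $\dist(p, \rep_u) < e_u = 2^i$. The triangle inequality then yields
\[
\dist(p,\rep_v) \le \dist(p, \rep_u) + \dist(\rep_u,\rep_v) < 2^i + 2^i = 2^{i+1} = e_v,
\]
completing the induction. The only mild subtlety is preserving the strict inequality, which follows because the sum of two quantities each strictly less than $2^i$ is strictly less than $2^{i+1}$. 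There is no genuine obstacle here; the argument is essentially a telescoping use of the covering property along the root-to-leaf path that reaches $p$, and one could equivalently write it directly as $\dist(p,\rep_v) \le \sum_{j=i_0+1}^{i} 2^j < 2^{i+1}$, summing over the levels traversed on that path.
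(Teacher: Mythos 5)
Your proof is correct and follows essentially the same route as the paper's: induction on the level, combining the covering bound $\dist(\rep_v,\rep_u)<2^i$ for a child $u$ with the inductive bound $\dist(p,\rep_u)<e_u=2^i$ via the triangle inequality to get $\dist(p,\rep_v)<2^{i+1}=e_v$. The telescoping/geometric-sum remark at the end is a fine equivalent restatement but adds nothing beyond the paper's argument.
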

\begin{proof}
It follows by induction on the level of the tree. In the leaf nodes it holds trivially. We assume that it holds for all nodes in level $i-1$. We show that it holds for every node at level $i$. Let $v$ be a node at level $i$. By definition we have that if $w$ is a child of $v$ then $\dist(\rep_v, \rep_w)<r_v$. By the induction assumption, if $p\in P_w$ it holds that $\dist(p,\rep_w)< e_w$, so
$\dist(p,\rep_v)\leq \dist(\rep_v, \rep_w)+\dist(p,\rep_w)<r_v+e_w=2^{i+1}=e_v$.
\end{proof}

\paragraph{Query procedure}
Given a query point $q$ and an error threshold $\eps$, we start the query procedure in the modified cover tree we constructed above. In each level $i$ we visit the nodes $v$ such that $r_v>1$ and $\dist(q,\rep_v)\leq 1+e_v$. Let $V_i$ be the nodes in level $i$ we visit such that $r_v=1$ for $v\in V_i$.
Then we consider each of the node $v\in V_i$ and we get all nodes $u$ in the subtree of $v$ with $r_u=\eps/4$. Let $C'$ be the set of all nodes $u$ we found. We go through each node $w\in C'$ and we check whether $\dist(q,\rep_w)
\leq 1+\eps/2$. If yes, then we add $P_w$ in $C$. Otherwise, we skip it.

\begin{lemma}
The query procedure is correct.
\end{lemma}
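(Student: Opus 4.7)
The plan is to establish the five properties that together constitute correctness of the output family $C = \{C_1, \ldots, C_m\}$: completeness (every $p \in P$ with $\dist(p,q) \le 1$ lies in some $C_j$), approximate soundness ($\dist(p,q) \le 1+\eps$ for every $p \in \bigcup_j C_j$), disjointness of the $C_j$'s, intra-cluster diameter at most $\eps$, and the size bound $m = O(\eps^{-O(\ddim)})$.

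For completeness I would fix a point $p$ with $\dist(p,q) \le 1$ and trace the unique root-to-leaf path of the leaf storing $p$. The lemma already proved above gives $\dist(p, \rep_v) < e_v$ for every ancestor $v$, so the triangle inequality yields $\dist(q, \rep_v) \le 1 + e_v$; this is exactly the visit condition of the descent, so the procedure never prunes an ancestor of $p$. Following this chain down to the level with $r_v = 1$ places that ancestor into $V_i$, and descending further to the level with $r_u = \eps/4$ we have $\dist(p,\rep_u) < e_u = \eps/2$, hence $\dist(q,\rep_u) < 1 + \eps/2$, so this $u$ passes the final filter and $P_u$, which contains $p$, is added to $C$. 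For approximate soundness I would simply reverse the triangle inequality: if $u$ is added to $C$ and $p \in P_u$, then $\dist(p,\rep_u) < \eps/2$ combines with $\dist(q,\rep_u) \le 1+\eps/2$ to give $\dist(q,p) < 1+\eps$. The intra-cluster diameter follows in the same way, since any two points of $P_u$ sit within $\eps/2$ of $\rep_u$.

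For disjointness, observe that every collected $u$ has $r_u = \eps/4$, so all chosen $u$'s lie at the same level of the cover tree; distinct nodes at the same level have disjoint subtrees (each leaf has a unique parent chain), hence disjoint leaf-sets $P_u$. For the size bound, the separation property at level $r_u = \eps/4$ guarantees that the $\rep_u$'s are pairwise more than $\eps/4$ apart, and all of them lie in $\mathcal{B}(q, 1+\eps/2)$; a standard doubling-dimension packing argument then bounds the number of such well-separated points in a ball of radius $1+\eps/2$ by $O(\eps^{-O(\ddim)})$.

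The main delicate step is the ancestor-chase in the completeness argument: to claim that $p$'s leaf has a well-defined ancestor at each intermediate level, in particular at the level with $r_v = 1$ and at the level with $r_u = \eps/4$, one needs the nesting property of the cover tree together with the bounded-spread assumption, which ensures that these levels lie within the $O(\log n)$ range actually represented in the tree. I would make this existence explicit to avoid silently assuming the existence of intermediate ancestors, after which the inductive visit argument becomes routine.
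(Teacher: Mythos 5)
Your proof is correct and follows essentially the same route as the paper's: completeness via the triangle inequality applied along the ancestor chain of $p$'s leaf (using $\dist(p,\rep_v)<e_v$), and approximate soundness by reversing it at the level with radius $\eps/4$. You additionally verify disjointness, the intra-cluster diameter, and the $O(\eps^{-O(\ddim)})$ size bound, which the paper treats implicitly or defers to the subsequent query-time lemma, and your explicit remark about the existence of intermediate ancestors (via nesting and bounded spread) is a reasonable tightening of a step the paper leaves tacit.
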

\begin{proof}
Let $p\in P$ be an item such that $\dist(q,p)\leq 1$. We need to show that $p$ belongs in a set in $C$. Let $i$ be the level of the node $v$ such that $p\in P_v$ and $r_v=1$. Since $\dist(q,p)\leq 1$ it also holds that $\dist(q,\rep_v)\leq \dist(q,p)+\dist(p,\rep_v)\leq 1+e_v$. Hence, we will visit node $v$ in the query procedure and we will add it in set $V_i$.
Since $p\in P_v$ and $v\in V_i$, by definition, item $p$ lies in one of the nodes $w$ in $C'$. We have 
$\dist(q,\rep_w)\leq \dist(q,p)+\dist(p,\rep_w)\leq 1+\eps/2$. So we will keep $w$ in $C$.

Finally, notice that for each $p\in P_w$ for a node $w$ in $C$, we have  $\dist(q,p)\leq \dist(q,\rep_w)+\dist(p,\rep_w)\leq 1+\eps/2+\eps/2\leq 1+\eps$.
So the query procedure is correct.
\end{proof}

\begin{lemma}
The query procedure runs in $O(\log n+ \eps^{-O(\ddim)})$ time.
\end{lemma}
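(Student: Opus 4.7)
The plan is to bound the total work of the query procedure by the standard packing argument from the doubling dimension, applied level by level, after splitting the descent into a coarse phase (levels $i$ with $r_v = 2^i \geq 1$) and a fine phase (levels $i$ with $r_v < 1$ down to $r_v = \eps/4$).

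First I would establish the key per-level packing bound. At any level $i$, by the separation property the representatives of nodes at that level are pairwise at distance greater than $2^i$, and by the visiting rule every visited node $v$ satisfies $\dist(q,\rep_v)\leq 1+e_v = 1+2\cdot 2^i$. Covering the ball $\mathcal{B}(q,1+2\cdot 2^i)$ by balls of radius $2^i/2$ via the doubling-dimension property, each small ball contains at most one representative (by separation), so the number of visited nodes at level $i$ is $O\bigl((2^{-i}+2)^{\ddim}\bigr)$, which is $O(1)$ for $i\geq 0$ and $O(2^{-i\ddim})$ for $i<0$.

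Next I would sum across levels. Each visited node has $2^{O(\ddim)}=O(1)$ children by the doubling-dimension property, so the work spent at each level is proportional to the number of nodes visited there. The coarse phase spans $O(\log n)$ levels under the polynomial-spread assumption, each contributing $O(1)$, for a total of $O(\log n)$. The fine phase spans levels $i=0$ down to $i=\log_2(\eps/4)$, contributing $\sum_{j=0}^{\log_2(4/\eps)} O(2^{j\ddim})$, a geometric series dominated by its last term $O((4/\eps)^{\ddim})=\eps^{-O(\ddim)}$. Finally, the filter step walks through the collected set $C'$, whose size is bounded by the same packing argument at level $\log_2(\eps/4)$ and hence is $\eps^{-O(\ddim)}$, performing one constant-time distance check per node, for another $\eps^{-O(\ddim)}$ work.

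The main subtlety I expect to pin down carefully is that the fine-phase descent inside each subtree rooted at a node of $V_0$ must also be pruned by the same condition $\dist(q,\rep_u)\leq 1+e_u$; otherwise a single subtree could contain far more than $\eps^{-O(\ddim)}$ descendants at level $\log_2(\eps/4)$ and the desired bound would fail. Once this pruning is in place---which is natural given the visiting rule already used in the coarse phase, and does not affect correctness since any representative within distance $1$ of $q$ trivially satisfies it---the packing bound applies uniformly at every level, and combining the two phases yields the claimed $O(\log n + \eps^{-O(\ddim)})$ total running time.
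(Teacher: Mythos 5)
Your overall structure is sound, and your coarse-phase analysis is the same as the paper's: a packing argument shows that at each level $i$ with $2^i\geq 1$ only $2^{O(\ddim)}=O(1)$ nodes satisfy the visiting condition $\dist(q,\rep_v)\leq 1+e_v$, and under the polynomial-spread assumption there are $O(\log n)$ such levels. For the fine phase you diverge from the paper: you add a pruning condition $\dist(q,\rep_u)\leq 1+e_u$ to the descent below the level with $r_v=1$ and then bound each level by packing, summing a geometric series. The paper instead bounds the fine phase with no pruning at all: each cover-tree node has $2^{O(\ddim)}$ children, so the subtree of a node at the level with $r_v=1$ has at most $\bigl(2^{O(\ddim)}\bigr)^{\log_2(4/\eps)}=\eps^{-O(\ddim)}$ descendants at the level with $r_u=\eps/4$, and $C'$ is the union of $O(1)$ such subtrees.

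The one genuine error in your write-up is the claim that without pruning ``a single subtree could contain far more than $\eps^{-O(\ddim)}$ descendants at level $\log_2(\eps/4)$ and the desired bound would fail.'' This is false. Besides the branching-factor argument above, a direct packing argument also rules it out: by nesting and covering, every descendant $u$ of a node $v$ with $r_v=1$ has $\rep_u\in P_u\subseteq P_v$ and hence $\dist(\rep_u,\rep_v)<e_v=2$, while the separation property forces the representatives at the level with $r_u=\eps/4$ to be pairwise more than $\eps/4$ apart; packing such a separated set inside a ball of radius $2$ gives at most $\eps^{-O(\ddim)}$ nodes. So the algorithm exactly as stated already meets the bound, and your added pruning, while harmless for correctness, means you are analyzing a modified procedure rather than the one in the lemma. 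Apart from that misdiagnosis, your argument does yield the claimed $O(\log n+\eps^{-O(\ddim)})$ running time.
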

\begin{proof}
We first bound the number of nodes $v$ we visit in level $i$ with $r_v=1$. Recall that we only consider $v$ if $\dist(q,\rep_v)\leq 1+e_v\leq 3$. Equivalently, we can think of a ball $\mathcal{B}$ of radius $3$ and center $q$. Each node $v$ defines a ball $\mathcal{B}_v$ with center $\rep_v$ and radius $1$. Also notice that the centers of any two balls $\mathcal{B}_v, \mathcal{B}_u$ have distance at least $1$. The number of nodes we visit in level $i$ is the same as the number of balls $\mathcal{B}_v$ that intersect $\mathcal{B}$.
Using the bounded doubling dimension, it is easy to argue that ball $\mathcal{B}$ of radius $3$ can be covered by at most $O(2^{O(\ddim)})=O(1)$ balls of radius $1$ (a similar argument holds for bounded expansion constant). Hence, we can argue that in each level above $i$ we only visit $O(1)$ number of nodes. So in total we visit $O(\log n)$ nodes until we reach level $i$ with $2^i=1$. Next, since the doubling dimension is $\ddim$ each node can have $O(2^{O(\ddim)})$ children. It follows that the number of nodes we visit with $r_u=\eps/4$ in the subtree of any node $v$ (with $r_v=1$) in $C'$ is $O(\eps^{-O(\ddim)})$. Overall, the query time is $O(\log n + \eps^{-O(\ddim)})$.
\end{proof}

\begin{comment}
\section{Data structures}
\label{appndx:dataStructures}
\begin{table}[h]
    \centering
    \begin{tabular}{c|c|c|c}
    \toprule
         &  Preprocessing & Space & Query time\\
          \hline 
        Range tree~\cite{mark2008computational} &  $n \cdot \log^d n$ & $n \cdot \log^d n$ & $\log^d n$ \\ 
         \hline 
        Quad-tree~\cite{bern1994provably, har2011geometric}&  $n\log n$ & $n$ & $\log n + \eps^{-d}$ \\
         \hline 
        Cover tree~\cite{beygelzimer2006cover, har2005fast} & $n\log n$ & $n$ &  $\log n + \eps^{-O(\ddim)}$ \\
        \hline 
        Interval tree~\cite{mark2008computational} & $n\log n$ & $n$ & $\log n$ \\
        \bottomrule 
    \end{tabular}
    \vspace{0.3em}
    \caption{Time Complexities for different data structures over reporting queries: Given a query rectangle (for range tree), ball (for quad tree and cover tree),  interval (for interval tree), it asks to report all items in the query region. $n$ is the input size, i.e., the number of items. $\eps$ is the approximation ratio. $d$ is the dimensionality of the space. $\rho$ is the doubling dimension. All the complexity results are in a form of $O(\cdot)$.}
    \label{tab:data_structure}
\end{table}
\end{comment}

\section{Exact algorithms and data structures for $\ell_\infty$}
\label{appndx:linfty}
\subsection{$\tau$-durable ball query}
\label{sec:tau-durable-range}

\begin{definition}[$\tau$-durable range query]
 Given a set of items $P \subseteq \Re^d$, a durability parameter $\tau > 0$, a query point $p \in P$ and a rectangle $R$, the $\tau$-durable range query asks to find all points $q\in P\cap R$ such that $|I_p\cap I_q| \geq \tau$ and $I^-_p \in I_q$. 
 %the $\tau$-durable range query is defined as: 
%\begin{itemize}[leftmargin=*]
%    \item[-] Given a point $p \in \Re^d$, and a rectangle $R$, find all points $q\in P\cap R$ such that $|I_p\cap I_q| \geq \tau$ and $I^-_p \in I_q$.
%\end{itemize} 
\end{definition}
%Our input consists of a set of points $P \subseteq \Re^d$, where each point $p$ is associated with an interval $I_p$. The $\tau$-durable range query is defined as: 
%\begin{itemize}[leftmargin=*]
%    \item[-] Given a point $p \in \Re^d$, and a rectangle $R$, find all points $q\in P\cap R$ such that $|I_p\cap I_q| \geq \tau$ and $I^-_p \in I_q$.
%\end{itemize}
Intuitively, this query asks to find all points in $R$ whose intervals trimmed by $\tau$ from the left endpoint intersect $I_p^-+\tau$. This query can be solved by a multi-level data structure built on range tree (or cover tree) and interval tree. In particular, we construct a range tree $\RT$ over the set of points in $P$. For each node $u$ in the $d$-th level of $\RT$, we further construct an interval tree $\IT_u$ over the intervals of the points lying in the rectangle defined by $u$. Let $\DTR$ be the data structured constructed above. By slightly abusing the notation, let $\DTRQ(p, \tau, R)$ be the query procedure with parameters $p, \tau, R$.  The data structure $\DTR$ uses $O(n\log^{d+1} n)$ space and can be constructed in $O(n \cdot \log^d n)$ time. The query $\DTRQ(p, \tau, R)$ can be answered over $\DTR$ as follows. %\xiao{add some details here}
%\stavros{I added some more details here. Also let me say that I heavily use the notation $\DTRQ(p,\tau,R)$ later in the text. I added it after the itemize just to know what this notation is. Of course you can change it, but just a reminder on this notation (along with $\DTR'$ for constant doubling dimension metrics) because I use it a lot later}
%\begin{itemize}[leftmargin=*]
%    \item 
    {\bf (Step 1)} we first query the $\RT$ with the input rectangle $R$, and will obtain a set of $d$-th level nodes that lie inside $R$;
%    \item 
    {\bf (Step 2)} for each node $u$ returned, we query the $\IT_u$ with $I^-_p$, and will obtain all $q \in \IT_u$, such that $I_q^- +\tau\leq I_p^- +\tau\leq I_q^+$. 
%\end{itemize}
    By resorting to existing result on range reporting, we obtain:

\begin{lemma}
\label{lem:helper-1}
    A data structure can be built in $O(n \cdot \log^d n)$ time with $O(n\log^{d+1} n)$ space, such that any $\tau$-durable range query can be answered in $O(\log^{d+1} n + \OUT)$ time, where $\OUT$ is the number of query results. 
\end{lemma}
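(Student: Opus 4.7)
The plan is to verify Lemma~\ref{lem:helper-1} by establishing correctness of the two-step query procedure described just above the lemma, and then reading off the complexity bounds one level of the multi-level structure at a time. The crucial observation that underpins correctness is an algebraic simplification of the temporal predicate: the conjunction ``$|I_p \cap I_q| \ge \tau$ and $I_p^- \in I_q$'' is equivalent to the two linear conditions $I_q^- \le I_p^-$ and $I_q^+ \ge I_p^- + \tau$. Thus the durability test collapses to a two-sided stabbing, and I would organise the argument around that reduction.

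First I would check Step~1. By the standard analysis of $d$-dimensional range trees, querying $\RT$ with an axis-aligned rectangle $R \subseteq \Re^d$ returns $O(\log^d n)$ canonical $d$-th level nodes whose associated point sets are pairwise disjoint and whose union is exactly $P \cap R$. Next, for Step~2, I would argue that for each such canonical node $u$, the interval tree $\IT_u$ reports exactly those $q \in P_u$ with $I_q^- \le I_p^-$ and $I_q^+ \ge I_p^- + \tau$, which by the equivalence above is precisely the set of $q \in P_u$ satisfying the durability constraints. Disjointness of the canonical subsets guarantees that no point is reported twice, and the union of per-subset outputs is precisely the answer to the $\tau$-durable range query.

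The second task is the complexity bookkeeping. Step~1 visits $O(\log^d n)$ canonical nodes. For the per-node interval query, I would view each indexed interval as a point $(I_q^-, I_q^+)$ in the plane; the required query is then a top-left dominance query with corner $(I_p^-, I_p^- + \tau)$. Augmenting each $\IT_u$ so that the intervals at each interval-tree node are stored in a priority-search tree keyed by $(I_q^-, I_q^+)$ supports this two-sided query output-sensitively in $O(\log n + k_u)$ time per canonical node, giving query time $O(\log^{d+1} n + \sum_u k_u) = O(\log^{d+1} n + \OUT)$. For the space, the outer $d$-dim range tree contributes $O(n \log^d n)$, and the $\IT_u$ structures, whose total size is charged $O(\log n)$ per point per last-level canonical subset, contribute an additional $O(n \log^{d+1} n)$; construction folds into the range-tree build plus a $\log$-linear sort of intervals at each last-level node, totalling $O(n \log^d n)$.

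The main obstacle I anticipate is the Step~2 implementation: turning a ``contains the query point plus a right-endpoint tail'' condition into an output-sensitive $O(\log n + k_u)$ procedure without either blowing up the space beyond $O(n\log^{d+1} n)$ or falling back to a non-output-sensitive stab-then-filter scheme that would charge for intervals that stab $I_p^-$ but fail the $I_q^+ \ge I_p^- + \tau$ test. Once the priority-search-tree augmentation (or any equivalent 2D-dominance structure) is in place, the remaining analysis is the routine range-tree level accounting sketched above.
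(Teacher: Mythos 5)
Your proposal is correct and follows essentially the same route as the paper: a $d$-dimensional range tree whose last-level canonical nodes carry a secondary interval-tree structure, with the temporal predicate reduced to the two linear conditions $I_q^-\le I_p^-$ and $I_q^+\ge I_p^-+\tau$ (the reduction implicitly assumes $|I_p|\ge\tau$, which the paper enforces before ever issuing such a query). Your priority-search-tree remark merely makes explicit the two-sided dominance query that the paper delegates to ``existing results on range reporting,'' so there is no substantive difference.
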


 Using the same data structure, we can also determine whether $K>0$ or not, i.e., if there exists any query answer for a $\tau$-durable range query, in $O(\log^{d+1} n)$ time.

\subsection{\problemone}
\label{subsec:unitdisk}
Assume $G$ is a unit disk graph under the $\ell_\infty$-norm distance function. Recall that $P \in \Re^d$ is a set of points, where each point $p$ is associated with the lifespan interval $I_p$. 

\begin{figure}
    \centering
    \includegraphics[scale=0.9]{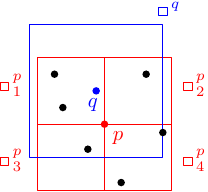}
    \caption{An illustration of Algorithm~\ref{alg:report-unit}: $p$ is visited.}
    \label{fig:offline}
\end{figure}

\begin{algorithm}[t]
\caption{{\sc ReportTriangle-I}$(p, \tau, \DTR)$}
\label{alg:report-unit}

%Construct the data structure $\DTR$ for $P$\;
%\ForEach{$p \in P$}{
$\square^p \gets \{\square^p_1, \square^p_2, \ldots, \square^p_{2^d}\}$ as canonical unit squares of $p$\;
\ForEach{$j \in [2^d]$}{
    $\mathcal{C}_{p,j} \gets \DTRQ\left(p, \tau, \square^p_{j}\right)$\;
    Report $(p,q,q')$ for every $(q,q') \in \mathcal{C}_{p,j} \times \mathcal{C}_{p,j}$ if $q \neq q'$\;
    \ForEach{$q \in \mathcal{C}_{p,j}$}{
        \ForEach{$j < k \le 2^d$}{
            $\mathcal{L}_{p,q} \gets \DTRQ\left(p, \tau, \square^q \cap \square^p_{k}\right)$\;
            Report $(p,q,q')$ for every $q' \in \mathcal{L}_{p,q}$\;
        }
    }
}

%}
\end{algorithm}

\paragraph{Algorithm.} As a preprocessing step, we construct the data structure $\DTR$ as described in Section~\ref{sec:tau-durable-range} over $P$. We visit every point in $P$. For each point $p$ visited, we enumerate all $\tau$-durable triangles that contain $p$ such that $\max\{I_q^-, I_{q'}^-\} + \tau\leq I_p^- +\tau\leq \min\{I_q^+, I_{q'}^+\}$ for every $(p,q,q')$ reported. Below, we focus on one specific point $p$, as described in Algorithm~\ref{alg:report-unit}.

We construct a square with side-length $2$ and center $p$, and we split it into $2^d$ canonical unit squares $\square^p= \left\{\square^p_1, \ldots, \square^p_{2^d}\right\}$.
Then we consider each $\square^p_j$ in order, by running a $\tau$-durable range query $\DTRQ\left(p, \tau, \square^p_j\right)$. Let $\C_{p,j}$ be the set of canonical subsets returned. We also use the same notation $\C_{p,j}$ for the set of points in the corresponding canonical subset. We further distinguish a $\tau$-durable triangle $(p,q,q')$ into two types: (1) $q,q' \in \square^q_j$; (2) $q \in \square^q_j, q' \in \square^q_k$ for $j \neq k$. We only need to report triangles in type (1) when $|\C_{p,j}|>1$, i.e., there are multiple points from $\square^p_j$ within durability threshold, hence every pair $(q,q') \in \C_{p,j} \times \C_{p,j}$ with $q \neq q'$ must form a $\tau$-durable triangle with $p$. To report triangles in type (2), for each point $q \in \C_{p,j}$ returned, we issue another $\tau$-durable range query $\DTRQ\left(p, \tau, \square^q\cap \square^p_{k}\right)$ for every $k \ge j$, whose answer is exactly the set of points that form a $\tau$-durable triangle in type (2) with $p,q$.

%If $|P_{p,j}|=1$, let $q\in P_{p,j}$, we run more range $\tau$-durable queries to find the $\tau$-durable triangles, if any, formed by $p, q$ and points in $P_{p,>j}=\{P_{p,j'}\mid j<j'\leq 2^d\}$. In particular for each $P_{p,j'}$, for $j'\geq j$, we run $\DTRQ(p, \tau, \square^q\cap \square^p_{j'})$. If $P_{pq}$ are the points found by the last query we report all triangles $\{p\}\times\{q\}\times P_{pq}$. If $|P_{p,j}|>1$ then notice that any pair of points $q,s$ in $P_{p,j}$ form a $\tau$-durable triangle with $p$. Hence we report all of the them, i.e., the unique triplets $p\times P_{p,j} \times P_{p,j}$. Then, for each point $q\in P_{p,j}$ we run more range $\tau$-durable queries as we did in the previous case to find the triangles formed by $p,q$ and points in $P_{p,>j}$.

\mparagraph{Correctness}
It is straightforward to see that if the union of all $\tau$-durable triangles reported for each point $p$ that intersect $I_p^-$, in the end, it is exactly the set of all $\tau$-durable triangles. Furthermore, each $\tau$-durable triangle is reported once. Finally, it is easy to show that all $\tau$-durable triangles are reported with respect to a point $p$ whose intervals intersect $I_p^-$. When reporting a triangle $(p,q,q')$, we always have $||p-q||\leq 1$, $||p-q'||\leq 1$, $||q-q'||\leq 1$ and $|I_p\cap I_q\cap I_{q'}|\geq \tau$ due to the definition of the unit squares $\square^p$ and data structure $\DTR$.

\mparagraph{Time Complexity} Implied by Lemma~\ref{lem:helper-1}, the data structure $\DTR$ can be constructed in $O(n\log^{d+1} n)$ time, and each $\tau$-durable range query can be answered in 
$O(\log^{d+1} n)$ time. Then, we count the number of such range queries issued in total. For each point $p$, it invokes $O(2^d)$ queries each for one distinct unit square, and $O(|\C_{p,j}|)$ queries each for one point $q \in \C_{p,j}$. We point out two critical observations: 
\begin{itemize}[leftmargin=*]
    \item there are at most $2^d$ different $j$ such that $|\C_{p,j}|=1$;
    \item if $|\C_{p,j}| \ge 2$, then all points in $\C_{p,j}$ form $\frac{1}{2} \cdot |\C_{p,j}| \cdot (|\C_{p,j}|-1)$ $\tau$-durable triangles together with $p$.
\end{itemize}
Combining these two observations yields the following inequality:
\begin{align*}
    \sum_{j \in [2^d]} |\C_{p,j}| & \le \sum_{j \in [2^d]: |\C_{p,j}|=1} 1 + \sum_{j \in [2^d]: |\C_{p,j}|\ge 2} |\C_{p,j}| \\
    & \le 2^d + \sqrt{2 \cdot |T_{p, \tau}|} + 1 = O(\sqrt{|T^p_{\tau}|})
\end{align*}
where $T^p_{\tau}$ is number of $\tau$-durable triangles that contain $p$.
Then, the total number of invocations would be upper bounded by 
\begin{align*}
    \sum_{p \in P} \left(2^d+ \sum_{j \in 2^d} |\C_{p,j}|\right) = &  n + \sum_{p \in P} \sqrt{|T^p_\tau|}  
    \le n + \sqrt{n \cdot |T_\tau|} 
    \le n + |T_\tau|
\end{align*}
where the second inequality follows Cauchy-Schwarz inequality.
Thus, the total running time is $O\left((n + |T_\tau|) \cdot \log^{d+1} n\right)$. We can further improve the analysis, by a fine-grained analysis: 
\begin{itemize}[leftmargin=*]
    \item If $n \leq |T_\tau|\leq n\log^{2d+2} n$, then $\sqrt{n \cdot |T_\tau|} \cdot \log^{d+1} n \le n \log^{2d+2} n$; 
    \item If $n \ge |T_\tau|$, then $\sqrt{n \cdot |T_\tau|} \cdot \log^{d+1} \le n \log^{d+1}n$;
    \item If $|T_\tau|\geq n\log^{2d+2} n$, then $\sqrt{n \cdot |T_\tau|} \cdot \log^{d+1} n\le n \log^{d+1}n$;
\end{itemize} 
Overall, the running time can be upper bounded by $O(n\log^{2d+2}n + |T_\tau|)$.
Compared with our previous result $O((n+|T_\tau|) \cdot \log^{d+1}n)$, we have removed the logarithmic factor from $|T_\tau|$, but increased the logarithmic factor on the input size $n$. This is significantly important, since $|T_\tau|$ can be as large as $O(n^3)$ in the worst case, while the increase on the logarithmic factor can be negligible.

\begin{theorem}
    \label{thm:offline}
Given $(P,\phi,I)$, and $\tau>0$, where $P$ is a set of $n$ points in $\Re^d$ and $\phi$ is the $\ell_{\infty}$ norm,
 \problemone\ can be solved in
    $O(n \log^{2d+2} n + |T_\tau|)$ time,
    where $T_\tau$ is the set of $\tau$-durable triangles formed by points in $P$.
\end{theorem}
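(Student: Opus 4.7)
The plan is to build the data structure $\DTR$ of Section~\ref{sec:tau-durable-range}, which is a $d$-dimensional range tree whose $d$-th level nodes carry interval trees over the lifespans of their stored points. By Lemma~\ref{lem:helper-1}, this takes $O(n\log^{d+1} n)$ preprocessing time and enables any $\tau$-durable range query to run in $O(\log^{d+1} n + k)$ time, where $k$ is the output size. With $\DTR$ in hand, I would iterate over each $p \in P$ and enumerate all $\tau$-durable triangles anchored at $p$, where $p$ is declared the anchor precisely when $I_p^- \geq \max\{I_q^-, I_s^-\}$; this eliminates duplicates across the three cyclic choices of anchor.

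The correctness of the enumeration procedure would follow from two elementary properties of the $\ell_\infty$ geometry. First, the closed unit ball $\mathcal{B}(p,1)$ is an axis-aligned box of side-length $2$, which I partition into $2^d$ canonical unit boxes $\square_1^p,\ldots,\square_{2^d}^p$; every $q$ with $\|p-q\|_\infty \leq 1$ falls into at least one $\square_j^p$. Second, any two points $q, q'$ lying in a common unit box automatically satisfy $\|q-q'\|_\infty \leq 1$. Therefore, for type-(1) triangles with $q, q' \in \square_j^p$, I can simply cross-product the query result $\mathcal{C}_{p,j} = \DTRQ(p,\tau,\square_j^p)$ with itself, and for type-(2) triangles spanning $\square_j^p$ and $\square_k^p$ with $j < k$, I further intersect with the unit box $\square^q$ around each returned $q$ and issue $\DTRQ(p,\tau,\square^q \cap \square_k^p)$. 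The durability condition $|I_p \cap I_q \cap I_s| \geq \tau$ reduces, under the anchor convention $I_p^- \ge I_q^-, I_s^-$, to $\min\{I_q^+, I_s^+\} \geq I_p^- + \tau$, which is exactly what the $\tau$-durable range query enforces for both $q$ and $s$.

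The main obstacle is proving the output-sensitive time bound without paying a $\log n$ factor on $|T_\tau|$. Each query costs $O(\log^{d+1} n)$ plus output, and for each $p$ the number of queries issued is $O(2^d + \sum_j |\mathcal{C}_{p,j}|)$. The key combinatorial observation is that whenever $|\mathcal{C}_{p,j}| \geq 2$, the points of $\mathcal{C}_{p,j}$ together with $p$ already produce $\binom{|\mathcal{C}_{p,j}|}{2}$ distinct $\tau$-durable triangles anchored at $p$; since only $2^d$ buckets exist, Cauchy-Schwarz on the bucket sizes yields $\sum_j |\mathcal{C}_{p,j}| = O(2^d + \sqrt{|T_\tau^p|})$, where $T_\tau^p$ is the set of such triangles. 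Summing over $p$ and applying Cauchy-Schwarz a second time gives $\sum_p \sqrt{|T_\tau^p|} \leq \sqrt{n\cdot|T_\tau|}$, so the overall query cost is $O\bigl((n + \sqrt{n\cdot|T_\tau|})\log^{d+1} n\bigr) + O(|T_\tau|)$. A short case analysis on whether $|T_\tau| \leq n\log^{2d+2} n$ or $|T_\tau| \geq n\log^{2d+2} n$ absorbs the mixed $\sqrt{n\cdot|T_\tau|}\log^{d+1} n$ term into either $n\log^{2d+2} n$ or $|T_\tau|$, delivering the stated bound $O(n\log^{2d+2} n + |T_\tau|)$.
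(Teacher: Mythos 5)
Your proposal is correct and follows essentially the same route as the paper's proof in Appendix~\ref{appndx:linfty}: the same $\DTR$ structure from Lemma~\ref{lem:helper-1}, the same partition of the side-length-$2$ box around each anchor $p$ into $2^d$ canonical unit boxes with the two triangle types handled by self-products and $\square^q \cap \square^p_k$ queries, and the same output-sensitive accounting via the observation that $|\C_{p,j}|\ge 2$ forces $\binom{|\C_{p,j}|}{2}$ reported triangles, two applications of Cauchy--Schwarz, and the final case analysis on $|T_\tau|$ versus $n\log^{2d+2} n$. No gaps.
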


\subsection{\problemtwo\ for the $\ell_\infty$ metric}
\label{sec:linfty-reporting}

\paragraph{Compute Activation Thresholds. } %We first build $\DTR$ over $P$. In particular, we construct an extended version of $\DTR$ to handle interval queries with one additional linear constraint on the endpoints as we will see later. Hence, $\DTR$'s complexities will be increased by $\log n$. Below, we defined two {\em activation thresholds} for each input point, with respect to the full and delta enumeration queries. 
For a set $P$ of $n$ points, we first build the data structure $\DTR$ as described in Section~\ref{sec:tau-durable-range} over $P$, and then invoke Algorithm~\ref{alg:detect}.
In particular, we construct an extended version of $\DTR$ to handle interval queries with one additional linear constraint on the endpoints as we will see later. Hence, the query time in $\DTR$ is increased by $\log n$.
%given a query point $p \in P$, it detects if $p$ participates in any $\tau'$-durable but not not $\tau$-durable triangle.
%
More specifically, we construct a square with side-length $2$ centered at $p$, and split it into $2^d$ canonical unit squares $\square^p=\{\square^p_1, \ldots, \square^p_{2^d}\}$.
Then we consider each $\square^p_j$ in order, by running a $\tau$-durable range query $\DTRQ\left(p, \tau, \square^p_j\right)$.  Let $\C_{p,j}$ be the set of points returned. More specifically, using the extended version of $\DTR$, we further partition $\C_{p,j}$ into two different subsets of points,  $\NewerC_{p,j}$ and $\OlderC_{p,j}$. Intuitively, for every point $q$ returned from $\C_{p,j}$, $\tau' \le |I_q \cap I_p| < \tau$ if $q \in \NewerC_{p,j}$, and $|I_q \cap I_p| \ge \tau$ otherwise. We can find $O(\log^{d+2} n)$ canonical subsets containing $\NewerC_{p,j}, \OlderC_{p,j}$ using $\DTR$.
If there is a unit square $\square_j^p$ that contains at least 2 points and at least one of them is in $\NewerC_{p,j}$ then we return that there exists a $\tau'$-durable triangle that is not $\tau$-durable.
Otherwise, similar to Algorithm~\ref{alg:report-unit}, we check if any point of $\NewerC_{p,j}$ forms a $\tau'$-durable triangle, which is not $\tau$-durable, with any other point in $C_{p,j}$. If this is the case then we return true. Otherwise, we return false. %Please see an example in Figure~\ref{fig:update}.
As mentioned, we perform at most $O(\log n)$ guesses and it takes $O(\log^{d+2}n)$ time to run Algorithm~\ref{alg:detect} for each guess of $\tau'$.
%
%Finally, we note that it is sufficient to run a binary search on the right endpoints of the intervals, i.e., $I_p^-+\tau$ is a right endpoint $I_q^+$, for a point $q\in P$. Assume this is not. If $I_p^-+\tau$ intersects at least another interval $I_q$ then we can always increase $\tau$ until we meet a right endpoint, and still intersect the same set of intervals. If $I_p^-+\tau$ does not intersect any interval it is straightforward that $I_p^-+\tau$ will not intersect any other interval (trimmed by $\tau$) for larger values of $\tau$, so we should always decrease $\tau$.
%
\renewcommand{\S}{\mathcal{S}}
%%
%Let $\mathcal{S}_\alpha$ be the initial form of the search binary tree. Even though, we will update $\mathcal{S}$ as we see later, we should have access to the original tree $\mathcal{S}_0$.
%

%\paragraph{Initialize $\S_\alpha$ and $\S_\beta$.} Initially, we set $\mathcal{S}_\beta = \emptyset$.
%Note that $\S_\alpha$ built for input points stay unchanged, but $\S_\beta$ built for $P_\tau$ will keep being updated, with the change of $\tau$.
%After computing $\alpha_p$ for every input point $p \in P$, 

%Next, we describe the query procedure. During the query procedure we make sure that the $\tau_p$ values are updated: after asking a $\tau'$-durable we update the values $\tau_p$ 
\begin{algorithm}[t]
\caption{{\sc DetectTriangle-I}$(p, \DTR, \tau, \tau')$}
\label{alg:detect}
$\square^p \gets \{\square^p_1, \square^p_2, \ldots, \square^p_{2^d}\}$ as canonical unit squares of $p$\;
\ForEach{$j \in [2^d]$}{
    $\mathcal{C}_{p,j} \gets \DTRQ(p, \tau, \square^p_{j})$\;
    $\NewerC_{p,j} = 
    \left\{q \in \C_{p,j}: I^-_q \le I_p^- \textrm{ and } I_p^- + \tau' \le I^+_q < I_p^- +\tau\right\}$\;
    $\OlderC_{p,j} = \left\{q \in \C_{p,j}: I_q^-\leq I_p^- \textrm{ and } I_q^+\geq I_p^-+\tau \right\}$\;
    \lIf{$|\mathcal{C}_{p,j}| \ge 2$ and $|\NewerC_{p,j}|\geq 1$}{\Return \textbf{true}}
    \ForEach{$q \in \NewerC_{p,j}$}{
        \ForEach{$1 \le k \le 2^d$}{
          \lIf{$\DTRQ\left(p, \tau, \square^q \cap \square^p_{k}\right)\neq \emptyset$}{\Return \textbf{true}}
        }
    }
}
\Return \textbf{false}\;
\end{algorithm}

 \paragraph{Reporting Algorithm.} Suppose $\tau_{i+1}, \tau_i$ are the durability parameters of current and last query separately. For completeness, we set $\tau_0 = -\infty$. We distinguish the following two cases.
 
 \paragraph{Case 1: $\tau_{i+1} > \tau_i$.}
 We note that this case degenerates to the offline setting of reporting all $\tau_{i+1}$-durable triangles. We can simply issue a range query to $\mathcal{S}_\alpha$ to find $P_{\tau_{i+1}}$, i.e., the points that participate in at least one $\tau$-durable triangle, and then invoke Algorithm~\ref{alg:report-unit} for each point $p\in P_{\tau_{i+1}}$. Additionally, we reconstruct  $\S_\beta$ from scratch, by adding every point $p \in P_{\tau_{i+1}}$ into $\S_\beta$ with the value of $\beta^\tau_p$. 
 
 \paragraph{Case 2: $\tau_{i+1} < \tau_i$.}
 In this case, our target is to report all triangles in $T_{\tau_{i+1}} \setminus T_{\tau_i}$.
We first point out that any triangle in $T_{\tau_{i+1}} \setminus T_{\tau_i}$ falls into one of the following two cases: 
\begin{itemize}[leftmargin=*]
    \item there exists one point, say $p$, with $\tau_{i+1} \leq \alpha_p < \tau_{i}$.
    \item there exists one point, say $p$ with $\tau_{i+1} \le \beta^{\tau_{i}}_p < \tau_i < \alpha_p$.
\end{itemize}

To report triangles in the former case, we find all points $p$ with $\tau_{i+1} \leq \alpha_p < \tau_i$ from $\mathcal{S}_\alpha$.
Let $P_{\tau_{i+1}}^\alpha$ be the set of such points $p$. For each $p\in P_{\tau_{i+1}}^\alpha$, we invoke %and add these points in %and add these points in $P_{\tau_{i+1}}$. Then for each point $p \in P_{\tau_{i+1}}$, we invoke 
Algorithm~\ref{alg:report-unit} to report all $\tau_{i+1}$-durable triangles that contain $p$.

To report triangles in the latter case, we find all points $p$ with $\tau\leq \beta^{\tau_i}_p$ from $\S_\beta$. %denoted as $P'_\tau$. We add them in $P_\tau$. For each point $p \in P'_\tau$, we 
Let $P_{\tau_{i+1}}^\beta$ be the set of such points $p$.
%denoted as $P'_\tau$. We add them in $P_\tau$. For each point $p \in P'_\tau$, we 
For each such a point $p\in P_{\tau_{i+1}}^\beta$ we
invoke Algorithm~\ref{alg:report-delta} to report all triangles in $T_{\tau_{i+1}} \setminus T_{\tau_i}$ that contain each such a point $p$.
Notice that $P_{\tau_{i+1}}=P_{\tau_{i+1}}^\alpha\cup P_{\tau_{i+1}}^\beta$
%which is a generalization of Algorithm~\ref{alg:report-unit} and Algorithm~\ref{alg:detect}.
%
%\begin{itemize}
    %\item We run a query in $\S_\alpha$ finding all points $p$ with $\tau\leq \alpha_p<{\tau_{i}}$ and we add these points in $P_\tau$. Then we invoke Algorithm~\ref{alg:report-unit} to report all $\tau$-durable triangles participated by $p$.
    %\item if $\tau <\alpha_p < {\tau_{i}}$, we invoke Algorithm~\ref{alg:report-unit} to report all $\tau$-durable triangles participated by $p$.
    %\item We run a query in $\S_\beta$ finding all points $p$ with $\tau\leq \beta^{\tau_{i}}_p$. We denote this set of points $P'_\tau$ and we add them in $P_\tau$. For each point $p \in P'_\tau$, we invoke Algorithm~\ref{alg:report-delta} for the delta enumeration.
    %otherwise, $\alpha_p > {\tau_{i}}$. We further find the set of points in $P_\tau$ such that $\beta^{\tau_{i}}_p > \tau$ by querying $\S_\beta$, denoted as $P'_\tau$. For each point $p \in P'_\tau$, we invoke Algorithm~\ref{alg:report-delta} for the delta enumeration.
%\end{itemize}
%
%
\begin{algorithm}[t]
\caption{{\sc ReportDeltaTriangle-I}$(p, \DTR, \tau_i,  \tau_{i+1})$}
\label{alg:report-delta}
$\square^p \gets \{\square^p_1, \square^p_2, \ldots, \square^p_{2^d}\}$ as canonical unit squares of $p$\;
\ForEach{$j \in [2^d]$}{
    $\C_{p,j} \gets \DTRQ(p, \tau_{i+1}, \square^p_{j})$\;
    $\NewerC_{p,j} = 
    \left\{q \in \C_{p,j}: I^-_q \le I_p^-,  I_p^- + \tau_{i+1} \le I^+_q < I_p^- +{\tau_{i}}\right\}$\; $\OlderC_{p,j} = \left\{q \in \C_{p,j}: I_q^-\leq I_p^-, I_q^+\geq I_p^-+{\tau_{i}} \right\}$\;
    Report $(p,q,q')$ for every $(q,q') \in \NewerC_{p,j} \times \NewerC_{p,j}$ if $q \neq q'$\;
    Report $(p,q,q')$ for every $q \in \NewerC_{p,j}$ and $q' \in \OlderC_{p,j}$\;
    \ForEach{$q \in \NewerC_{p,j}$}{
        \ForEach{$k \in [2^d]$ with $k < j$}{
            $\mathcal{L}_{p,q} \gets \DTRQ(p, \tau_{i+1}, \square^q \cap \square^p_{k})\setminus\NewerC_{p,k}$\;
            Report $(p,q,q')$ for every $q' \in \mathcal{L}_{p,q}$\;
        }
        \ForEach{$k \in [2^d]$ with $k > j$}{
            $\mathcal{L}_{p,q} \gets \DTRQ(p, \tau_{i+1}, \square^q \cap \square^p_{k})$\;
            Report $(p,q,q')$ for every $q' \in \mathcal{L}_{p,q}$\;
        }
    }
}

%}
\end{algorithm}
%
%
%We second issue a range query to $S_\beta$ to find the set of points that have $\beta^{{\tau_{i}}}_p \ge \tau_{i+1}$, i.e., participates in at least one $\tau$-durable but not ${\tau_{i}}$-durable triangle, denoted as $P_{\tau/{\tau_{i}}}$. It should be noted that $P_{\tau/{\tau_{i}}} \neq P_\tau - P_{\tau_{i}}$. 
%
%For each point $p \in P_{\tau/{\tau_{i}}}$, we cannot just enumerate $\tau$-durable triangles participated by $p$, since some of those triangles may also be ${\tau_{i}}$-durable triangles, i.e., do not belong to delta results.
%
%We first define $\bar{P}_\tau = P_\tau\setminus P_{\tau'}$. We note that since $\mathcal{S}$ is updated, it is possible that a point $p$ that was in $P_{\tau'}$ in the previous query, after updating its $\tau_p$, it might still be in $\bar{P}$ in the current query.
%
%for every $p\in \bar{P}_\tau$ we need to make sure that the $\tau$-durable triangle $p,q,s$ we report, is not $\tau'$-durable, otherwise we would report it twice.
%Next, we extend Algorithm~\ref{alg:report-unit} and Algorithm~\ref{alg:detect} to Algorithm~\ref{alg:report-delta} for the delta enumeration. 

We construct a square with side-length $2$ and center $p$, and we split it into $2^d$ canonical unit squares $\square^p=\{\square^p_1, \ldots, \square^p_{2^d}\}$.
We also partition each $\C_{p,j}$ into two different subsets of points,  $\NewerC_{p,j}$ and $\OlderC_{p,j}$.
%Intuitively, for every point $q$ returned from $\C_{p,j}$, $\tau \le |I_q \cap I_p| < {\tau_{i}}$ if $q \in \NewerC_{p,j}$, and $|I_q \cap I_p| \ge {\tau_{i}}$ otherwise. We can find $O(\log^{d+2} n)$ canonical subsets containing $\NewerC_{p,j}, \OlderC_{p,j}$ using $\DTR$.
%
We further distinguish a $\tau_{i+1}$-durable but non ${\tau_{i}}$-durable triangle $(p,q,q')$ that contains $p$ into three types: (1) $q,q' \in \NewerC_{q,j}$; (2) $q \in \NewerC_{q,j}$, $q' \in \OlderC_{q,j}$; (3) $q \in \NewerC_{q,j}, q' \in \OlderC_{q,k}$ for $j \neq k$. We only need to report triangles in type (1) when $|\NewerC_{p,j}|>1$, i.e., there are multiple points from $\C_{p,j}$ with durability threshold in $[\tau_{i+1}, {\tau_{i}})$, hence every pair in the Cartesian product of $\NewerC_{p,j} \times \NewerC_{p,j}$ must form a $\tau_{i+1}$-durable but not ${\tau_{i}}$-durable triangle with $p$. We only need to report triangles in type (2) when $|\NewerC_{p,j}|\ge 1$ and $|\OlderC_{p,j}| \ge 1$, i.e., there exists a pair of points from $\C_{p,j}$ with durability threshold in $[\tau_{i+1}, {\tau_{i}})$ and $[{\tau_{i}}, +\infty)$ separately, hence every pair in the Cartesian product of $\NewerC_{p,j} \times \OlderC_{p,j}$ must form a $\tau_{i+1}$-durable but not ${\tau_{i}}$-durable triangle with $p$. To report triangles in type (3), for each point $q \in \NewerC_{p,j}$ returned, we issue another $\tau_{i+1}$-durable range query $\DTRQ\left(p, \tau_{i+1}, \square^q\cap \square^p_{k}\right)$ for every $k \neq j$, whose answer is exactly is the set of points that form a $\tau_{i+1}$-durable but not ${\tau_{i}}$-durable triangle with $p,q$.

After the end of the enumeration, we update $\S_\beta$.
By definition, it suffices to compute $\beta^{\tau_{i+1}}_p$ for every point $p \in P_{\tau_{i+1}}$. We distinguish two cases for each point $p \in P_{\tau_{i+1}}$: 
\begin{itemize}[leftmargin=*]
    \item if ${\tau_{i+1}} \leq \alpha_p < {\tau_{i}}$, we compute $\beta^{\tau_{i+1}}_p$ as described in the beginning of Section~\ref{sec:linfty-reporting}.
    \item otherwise, $\alpha_p > {\tau_{i}}$. We further distinguish two more cases:
    \begin{itemize}
        \item if $\beta^{\tau_{i}}_p < {\tau_{i+1}}$, we do not need to change the activation threshold for $p$, since $\beta^{\tau_{i}}_p = \beta^{\tau_{i+1}}_p$ in this case.
        \item otherwise, $\beta^{\tau_{i}}_p \geq {\tau_{i+1}}$, we compute $\beta^{\tau_{i+1}}_p$. 
    \end{itemize}
\end{itemize}

%\begin{figure}
%    \centering
%    \includegraphics[scale=0.8]{img/update.pdf}
%    \caption{An illustration of $\NewerC_{p,j}$ and $\OlderC_{p,j}$. Assume $q_1,q_2 \in \square^p_j \cap P$. Here, $q_1 \in \NewerC_{p,j}$ and $q_2 \in \OlderC_{p,j}$.}
%    \label{fig:update}
%\end{figure}

\paragraph{Correctness.} %\xiao{Could you reorganize this part? We need to show the correctness of Algorithm 3 and Algorithm 4 separately. }
%\stavros{I made some changes. However Alg 3 is very related to Alg 4. In particular, any arguments we make to show that Alg 3 is correct, the same arguments are used to show the correctness of Algorithm 4. This is because, if we have a correct method finding when a point should be activated (Alg3) then we also report for each point p the $\tau_{i+1}$ durable triangles that are not $\tau_i$-durable (Alg 4).}
%
We show that the values $\beta_p^{\tau_{i+1}}$ are updated correctly.
Let $\tau'$ be the parameter in the binary search that we checked in Algorithm~\ref{alg:detect} %(recall that we search for the activation threshold $\tau'<=\tau=\tau_{i+1}$ running a binary search). 
Point $p$ can only form $\tau'$-durable triangles with points $q$ whose intervals $I_q$ intersect $I_p^-$ and either $I_q^+<I_p^-+\tau$ or $I_q^+\geq I_p^-+\tau$. $\bigcup_{j}\NewerC_{p,j}$ is the set of points satisfying the first inequality, and $\bigcup_j\OlderC_{p,j}$ the set of points in the second inequality. For every pair $q,s\in \OlderC_{p,j}$, we do not activate point $p$ with durability $\tau'$. If indeed $||q-s||_\infty<1$ and $q,s\in \OlderC_{p,j}$, then $(p,q,s)$ is a $\tau$-durable triangle. So our algorithm will not activate a point $p$ because of a previously reported $\tau$-durable triangle. It is also straightforward to see that $p$ should be activated at time $\tau'$ if there is a pair of points $q, s\in \NewerC_{p,j}\cup \OlderC_{p,j}$ such as either $q$ or $s$ belongs in $\NewerC_{p,j}$.

We next show the correctness of Algorithm~\ref{alg:report-delta}.
The case with ${\tau_{i+1}}>{\tau_{i}}$ follows from the offline setting.
%We next focus on the case with ${\tau_{i+1}}<{\tau_{i}}$.
The same arguments for Algorithm~\ref{alg:detect} apply here, such that we only consider points in $P_{\tau_{i+1}}$ that should be activated, and  Algorithm~\ref{alg:report-delta} only reports new $\tau_{i+1}$-durable triangles.
%Similarly, our query procedure will not report a triplet $(p,q,s)$ if it was already reported as a ${\tau_{i}}$-durable triangle.
%More specifically, we only consider points in $P_{\tau_{i+1}}$ that should be activated, and report all new ${\tau_{i+1}}$-durable triangles that were not ${\tau_{i}}$-durable. 

\paragraph{Time Complexity.}
For both cases ${\tau_{i+1}} > {\tau_{i}}$ and ${\tau_{i+1}}<{\tau_{i}}$, we spend $O\left(|T_{\tau_{i+1}} \setminus T_{\tau_i}| \cdot \log^{d+2} n \right)$ time to report all ${\tau_{i+1}}$-durable triangles, and $O(\log^{d+3} n)$ time to compute $\beta^{\tau_{i+1}}_p$ for each point $p \in P_{\tau_{i+1}}$. As defined, each point $p \in P_{\tau_{i+1}}$ must participate in at least one new ${\tau_{i+1}}$-durable triangle, which is also reported in this case. Hence, the overall time is bounded by $O\left(|T_{\tau_{i+1}} \setminus T_{\tau_i}| \cdot \log^{d+3} n\right)$. %Similarly to the offline case, the running time can be improved to $O\left(\log^{2d+4} n + |P_{\tau_{i+1}}| \cdot \log^{d+2}n\right)$, if $|T_{\tau_{i+1}} \setminus T_{\tau_i}| \leq n$, or $O\left(|P_{\tau_{i+1}}| \cdot \log^{2d+4}n\right)$ if $|T_{\tau_{i+1}} \setminus T_{\tau_i}| \leq |P_{\tau_{i+1}}|\log^{2d+4}n$, or $O\left(\log^{d+3} n + |T_{\tau_{i+1}} \setminus T_{\tau_i}|\right)$ otherwise.

\begin{theorem}
\label{thm:online}

Given $(P,\dist, I)$, where $P$ is a set of $n$ points in $\Re^d$ and $\phi$ is the $\ell_\infty$ norm,
  a data structure of size $O(n\log^{d+2}n)$ can be constructed in $O(n\log^{d+3}n)$ time such that, 
\problemtwo\ can be solved in
  $O\left(|T_{\tau_{i+1}} \setminus T_{\tau_i}| \cdot \log^{d+3} n\right)$ time.
\end{theorem}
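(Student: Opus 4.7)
The plan is to combine the multi-level range-tree data structure $\DTR$ from Lemma~\ref{lem:helper-1}, augmented with one extra level of interval tree that enforces the additional linear constraint on interval right endpoints, with two auxiliary binary search trees $\Smax$ and $\Scur$ that index points by their activation thresholds. Because we are under the $\ell_\infty$ norm, the ball-of-radius-$1$ constraint $\dist(p,q)\le 1$ is exactly captured by $q$ lying in one of the $2^d$ canonical unit squares surrounding $p$, so every range query on $\DTR$ produces an exact (not approximate) enumeration of candidates, allowing us to aim for the exact version of \problemtwo.

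For preprocessing, I would first build the extended $\DTR$ in $O(n\log^{d+1}n)$ time using $O(n\log^{d+2}n)$ space, paying one extra $\log n$ factor over Lemma~\ref{lem:helper-1} for the additional endpoint-constraint level. Then for each $p \in P$ I would compute the maximum activation threshold $\alpha_p = \beta_p^{+\infty}$ by binary-searching over the $O(n)$ candidate durabilities $\{I_q^+ - I_p^- : q \in P,\, I_q^+ \ge I_p^-\}$ and invoking \textsc{DetectTriangle-I} (Algorithm~\ref{alg:detect}) at each guess. Each guess runs in $O(\log^{d+2}n)$ time, so computing $\alpha_p$ for a single point costs $O(\log^{d+3}n)$ and aggregating over all points gives the claimed $O(n\log^{d+3}n)$ preprocessing time. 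All points keyed by $\alpha_p$ are inserted into $\Smax$, while $\Scur$ is initialized empty, corresponding to an initial query with $\tau_0 = +\infty$.

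For a query with parameters $\tau_{i+1} < \tau_i$, I would observe that every triangle in $T_{\tau_{i+1}} \setminus T_{\tau_i}$ is anchored by some point $p$ satisfying either $\tau_{i+1} \le \alpha_p < \tau_i$ (newly eligible) or $\tau_{i+1} \le \beta_p^{\tau_i}$ with $\alpha_p \ge \tau_i$ (already eligible but crossing a lower threshold). A single range query on each of $\Smax$ and $\Scur$ retrieves precisely these anchors. For each anchor $p$ I run \textsc{ReportDeltaTriangle-I} (Algorithm~\ref{alg:report-delta}), which issues $\tau_{i+1}$-durable range queries against the $2^d$ canonical unit squares, splits each result set into $\NewerC_{p,j}$ (intervals ending in $[I_p^- + \tau_{i+1}, I_p^- + \tau_i)$) and $\OlderC_{p,j}$ (ending at or after $I_p^- + \tau_i$) via the extended level of $\DTR$, and enumerates only the triangle types involving at least one vertex from $\NewerC$; pairs drawn entirely from $\OlderC \times \OlderC$ are skipped because they correspond to triangles already in $T_{\tau_i}$. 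After reporting, I refresh $\Scur$ by recomputing $\beta_p^{\tau_{i+1}}$ for each activated anchor via the same binary search used in preprocessing. The symmetric case $\tau_{i+1} > \tau_i$ reduces to the offline problem on the subset $P_{\tau_{i+1}} \subseteq P_{\tau_i}$ and is handled by Algorithm~\ref{alg:report-unit}.

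The main obstacle is establishing genuine output sensitivity, i.e., that the number of invocations of the $O(\log^{d+3}n)$-time binary search is bounded by $|T_{\tau_{i+1}} \setminus T_{\tau_i}|$ rather than by $|P|$. This rests on the definitions of $\alpha_p$ and $\beta_p^{\tau_i}$: any anchor retrieved from $\Smax$ or $\Scur$ under the stated conditions is, by definition of its activation threshold, guaranteed to witness at least one triangle of duration in $[\tau_{i+1}, \tau_i)$, so each binary-search call can be charged to a distinct newly reported result. Combining this charging with the per-triangle reporting cost of $O(\log^{d+2}n)$ inherited from the underlying range queries yields the overall query-time bound $O(|T_{\tau_{i+1}} \setminus T_{\tau_i}| \cdot \log^{d+3}n)$ claimed in the theorem.
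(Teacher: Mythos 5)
Your proposal is correct and follows essentially the same route as the paper: the extended range-tree/interval-tree structure $\DTR$, activation thresholds computed by binary search over the $O(n)$ candidate durabilities via \textsc{DetectTriangle-I}, the two search trees $\Smax$ and $\Scur$, the split of anchors into newly-eligible points ($\tau_{i+1}\le\alpha_p<\tau_i$) and previously-eligible points crossing a lower threshold, and the charging of each $O(\log^{d+3}n)$ binary search to a distinct newly reported triangle. The only cosmetic deviation is that you invoke \textsc{ReportDeltaTriangle-I} uniformly for both anchor categories where the paper uses \textsc{ReportTriangle-I} for the newly-eligible ones; since such an anchor has no $\tau_i$-durable triangle, its $\OlderC\times\OlderC$ pairs contribute nothing and the two choices report the same set.
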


\section{Dynamic Setting}
\label{appndx:dynamic}
\newcommand{\problemDone}{\textsf{DynamicOffDurable}}
In this setting, we assume that we do not know the point set $P$ upfront.
We start with an empty point set $P'=\emptyset$, and some input parameters $\tau, \eps$. The goal is to construct a data structure such that, if all points are inserted and deleted according to their lifespans, it supports the following operations: i) if a point is deleted the data structure is updated efficiently, and ii) if a point $p$ is inserted, the data structure is updated efficiently and $\tau$-durable triangles (if any) of the form $(p,q,s)$ are reported such that $I_p^-\geq \max\{I_q^-,I_s^-\}$ along with some $\tau$-durable $\eps$-triangles that contain $p$. %In other words, each time we insert a new point $p$, the data structure should find all $\tau$-durable triangles, and might also report some $\tau$-durable $\eps$-triangles, where $p$ has the largest left endpoint.
We call it the \problemDone\ problem.

We note that the data structure we need in this dynamic setting is a dynamic version of $\DTC$.
In particular, we slightly modify the standard techniques to convert our static data structure to a dynamic one with amortized update guarantees~\cite{overmars1981worst, erickson2011static, overmars1987design}.

\newcommand{\dynDTC}{\DTC^{\textsf{dyn}}}
We call the new dynamic data structure $\dynDTC$. Let $P'$ be the current instance of $O(n)$ ``active'' points.
$\dynDTC$ consists of $K=O(\log n)$ subsets of points $G_1, \ldots, G_K$ such that for each $i\leq K$, $G_i\subseteq P'$ and $\bigcup_{i\leq K} G_i=P'$. For each $i\leq K$, we have a static data structure $\DTC$, called $\DTC^i$, over $G_i$.
It holds that for each $i\leq K$, $|G_i|$ is either $2^i$ or $0$.
The total space of the data structure is $O(n\log n)$.

Assume that we have to remove $p\in P$. We identify the group $i$ such that $p\in G_i$. We also identify the leaf node $u$ of the cover tree $\mathcal{T}_i$ that $p$ belongs to. Both of these operations can easily be executed in $O(\log n)$ time with some auxiliary data structures. Then we traverse $\mathcal{T}_i$ from $u$ to the root removing $p$ from the linked interval trees. Notice that the structure of $\DTC^i$ does not change, instead only the information stored in the nodes of the interval trees containing $p$ are updated.

Next, assume that a new point $p$ is inserted at time $I_p^-$. We first place $p$ in a temporary min heap $H$ with value $I_p^-+\tau$ (the time instance that $p$ can participate in $\tau$-durable triangles). When we reach time $I_p^-+\tau$, we derive $p$ from $H$. At this point $p$ is an active point for time more than $\tau$. We  insert $p$ in $\dynDTC$ as follows. We find the smallest $i$ such that $G_i=\emptyset$. We move all points $A=\bigcup_{j<i}G_j$ into $G_i$ and construct $\DTC^i$ over $A\cup \{p\}$.
At this point, we also need run a query to find all $\tau$-durable triangles that contain $p$ (having $I_p^-$ as the largest left endpoint). We run the offline query $\DTCQ^i(p,\tau,\eps/2)$ in each $\DTC^i$ with $G_i\neq \emptyset$. As we had in the offline case, for each $i$ we get a set of $O(\eps^{-\ddim})$ canonical nodes of the cover tree. Each canonical node corresponds a ball of radius at most $\eps/4$. In the dynamic case there are in total $O(\eps^{-\ddim}\log n)$ canonical nodes, since there are $O(\log n)$ groups. In order to find the durable triangles, we run Algorithm~\ref{alg:offline-2} from the offline case considering $O(\eps^{-\ddim}\log n)$ canonical nodes instead of $O(\eps^{-\ddim})$. Hence, the running time is $O(\eps^{-\ddim}\log^3 n + \OUT_p)$, where $\OUT_p$ is the number of $\tau$-durable triangles anchored by $p$ (with $I_p^-$ being the largest left endpoint) along with a number of additional $\tau$-durable $\eps$-triangles anchored by $p$. Equivalently, we can argue that $\OUT_p$ is the $\tau$-durable triangles that $p$ participates in at the moment $I_p^-+\tau$ along with some additional $\tau$-durable $\eps$-triangles that $p$ participates in.
Finally, we re-construct $\dynDTC$ from scratch after $n/2$ updates.

Following the analysis in~\cite{overmars1981worst, erickson2011static, overmars1987design} and observing that each point can change at most $O(\log n)$ groups and the construction time of the offline $\DTC$ is $O(n\log^2 n)$, we have that the insertion of a point takes $O(\log^3 n)$ amortized time. The deletion takes $O(\log^2 n)$ time.

\begin{theorem}
\label{appndx:thm:dynDS}
Given $(P,\phi,I)$, $\tau > 0$, and $\eps > 0$,
    $\eps$-approximate \problemDone\ can be solved using a data structure of $O(n\log n)$ space, $O(\log^3 n)$ amortized update time, and $O(\eps^{O(-\ddim)}\log^3 n+\OUT_p)$ time to report all new $\tau$-durable triangles (along with some $\eps$-triangles) anchored by $p$, where $\OUT_p$ is the output size after inserting point $p$, where $n = |P|$, and $\ddim$ is the doubling dimension of $P$.
\end{theorem}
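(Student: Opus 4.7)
The plan is to apply the standard logarithmic method (Bentley--Saxe decomposition) to the static structure $\DTC$ of Lemma~\ref{lem:helper-2}. Maintain groups $G_1,\dots,G_K$ with $K=O(\log n)$, each of size either $0$ or $2^i$, and on every nonempty $G_i$ store a static $\DTC^i$; since each static structure uses $O(|G_i|\log |G_i|)$ space, the total space is $O(n\log n)$. To keep $n$ bounded I will rebuild the whole structure from scratch after every $n/2$ updates, which the amortization absorbs.

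For a deletion of a point $p$, auxiliary pointers let me locate in $O(\log n)$ time the group index $i$ with $p\in G_i$ and the leaf of the cover tree in $\DTC^i$ storing $p$. Walking from this leaf to the root and removing $p$ from the interval tree stored at each node touches $O(\log n)$ nodes and spends $O(\log n)$ per interval tree, giving $O(\log^2 n)$ per deletion. For an insertion, I first park $p$ in a min-heap keyed by $I_p^-+\tau$; a point is pulled from the heap only when global time reaches its key, which ensures it enters $\dynDTC$ exactly when it can anchor a $\tau$-durable triangle. When $p$ is pulled, I find the smallest $i$ with $G_i=\emptyset$, set $G_i \leftarrow \{p\}\cup\bigcup_{j<i} G_j$, empty the lower groups, and rebuild $\DTC^i$ from scratch in $O(|G_i|\log^2|G_i|)$ time. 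By the standard accounting, each point participates in $O(\log n)$ rebuilds and each rebuild charges $O(\log^2 n)$ per point, yielding $O(\log^3 n)$ amortized insertion cost.

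At the moment $p$ is added, I also run $\DTCQ^{i}(p,\tau,\eps/2)$ against every nonempty $\DTC^i$ and take the union of the returned families of canonical balls. This gives one combined family $\mathcal{C}_p$ of $k=O(\eps^{-\ddim}\log n)$ pairwise-disjoint canonical subsets, each of diameter $\le\eps/2$ and each containing only points with $\dist(p,\cdot)\le 1+\eps/2$ and intervals satisfying the $\tau$-durability test of Section~\ref{sec:tau-durable-ball}. I then run Algorithm~\ref{alg:offline-2} on $\mathcal{C}_p$: enumerate pairs inside each canonical subset, and for pairs of canonical subsets whose centers are within $1+\eps/2$ enumerate their Cartesian product. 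The correctness argument already given for the offline algorithm transfers verbatim: every $\tau$-durable triangle anchored by $p$ falls inside $\mathcal{C}_p$, and every reported triangle is a $\tau$-durable $\eps$-triangle anchored by $p$. The running time of the query part decomposes into $O(\eps^{-\ddim}\log n)$ ball queries (total $O(\eps^{-O(\ddim)}\log^2 n)$ time via the $\log n$ heights in $\DTC^i$), a pairwise ball-center test over $O(k^2)=O(\eps^{-2\ddim}\log^2 n)$ pairs, and $O(\OUT_p)$ enumeration cost, all within $O(\eps^{-O(\ddim)}\log^3 n+\OUT_p)$.

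The main obstacle is the per-insertion query bookkeeping: a naive combination of the $O(\log n)$ families $\DTCQ^i$ could multiply together yielding $\log^2 n$ or worse blow-ups in pair tests, so the accounting must group distance tests by ball-pairs rather than by groups, and must ensure the Cartesian-product enumeration for pairs of canonical subsets that fire is strictly charged to the output $\OUT_p$ (using the same anchor convention $I_p^-\ge \max\{I_q^-,I_s^-\}$ of Section~\ref{sec:offline} to avoid duplicate reporting across groups). Once this charge is made, combining the $O(\log^3 n)$ amortized update cost with the $O(\eps^{-O(\ddim)}\log^3 n+\OUT_p)$ per-insertion query cost proves the theorem.
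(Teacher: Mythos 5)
Your proposal is correct and follows essentially the same route as the paper's Appendix on the dynamic setting: the logarithmic (Bentley--Saxe) decomposition into $O(\log n)$ groups of static $\DTC$ structures, the min-heap delaying each point's activation until time $I_p^-+\tau$, leaf-to-root interval-tree updates for deletions, periodic global rebuilds, and running the offline \textsc{ReportTriangle} procedure over the combined family of $O(\eps^{-O(\ddim)}\log n)$ canonical subsets gathered from all nonempty groups. The accounting you give (each point in $O(\log n)$ rebuilds at $O(\log^2 n)$ per point, plus $O(k^2)$ center tests and output-charged enumeration) matches the paper's analysis.
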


\section{Extensions}
\label{appndx:extensions}

We show how we can extend our results in any $\ell_\alpha$ norm and we describe how to report other patterns (except of triangles) of constant size. Furthermore, we show how to report all durable star patterns.
While we only describe the results in the offline setting, all of them can be extended to the online setting using the approach as shown in Section~\ref{sec:online}.

\subsection{$\ell_\alpha$ metric}

%\subsection{$\ell_\alpha$-norm}
In order to find all $\tau$-durable triangles in any $\ell_\alpha$ metric we use a quadtree $\mathcal{T}$ instead of a cover tree over the input points $P$. Each node $u$ of the cover tree is associated with a square $\square_u$. Let $P_u-P\cap\square_u$. Given a point $p\in P$ we find a set of $O(\log n + \eps^{-d})$ canonical nodes $C$ in $\mathcal{T}$ such that for every $u\in C$, the diameter of $\square_u$ is at most $\eps/2$ and $||p-\square_u||_\alpha\leq 1+\eps/4$. Using the same procedure we followed for constant doubling dimensions over the canonical subsets $\C_p$, we get the following result.

\begin{theorem}
Given $(P,\phi,I)$, $\eps>0$, and $\tau>0$, where $P$ is a set of $n$ points in $\Re^d$ and $\phi$ is any $\ell_{\alpha}$ norm,
the $\eps$-approximate \problemone\ can be solved in
 $O\left(n(\eps^{-d}\log n + \eps^{-2\cdot d}+\log^2 n) + \OUT\right)$ time, 
    %together with some $(\tau, 1+\eps)$ triangles of $P$,
    where $\OUT$ is the number of triangles reported, satisfying $|T_\tau|\leq \OUT\leq |T^{\eps}_\tau|$.
\end{theorem}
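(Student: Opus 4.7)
The plan is to mirror the algorithm of Theorem~\ref{thm:offline-gen}, replacing the cover tree inside $\DTC$ with a compressed quadtree tailored to the $\ell_\alpha$ norm, and then applying \textsc{ReportTriangle} (Algorithm~\ref{alg:offline-2}) unchanged on the canonical subsets produced by the quadtree. First I would build a compressed quadtree $\mathcal{T}$ over $P$ in $O(n\log n)$ time and $O(n)$ space, where every node $u$ is associated with an axis-aligned cube $\square_u$ storing $P_u = P\cap \square_u$. At every node of $\mathcal{T}$ I attach an interval tree over the lifespans of the points in $P_u$, exactly as done for $\DTC$; this augmentation costs $O(n\log^2 n)$ preprocessing time and $O(n\log n)$ space.

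Next I would show that this quadtree-based $\DTC$ supports the $\eps$-approximate $\tau$-durable ball query (Definition~\ref{def:def1}) in the $\ell_\alpha$ norm by extracting, for a query point $p$, a collection of canonical cells whose diameters are at most $\eps/2$ and whose centers lie within $\ell_\alpha$-distance $1+\eps/4$ of $p$. Standard quadtree arguments (see e.g.~\cite{har2011geometric,bern1994provably}) give such a family of $O(\log n + \eps^{-d})$ canonical cells in $O(\log n + \eps^{-d})$ time. At each canonical cell we then intersect with the interval tree to enforce $I_q^-+\tau \le I_p^-+\tau \le I_q^+$, producing the $\C_{p,j}$'s in the required implicit form at a logarithmic cost.

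With this primitive in place, the algorithm becomes: for each $p\in P$ with $|I_p|\ge \tau$, call \textsc{ReportTriangle}$(\DTC,p,\tau,\eps)$ exactly as in Section~\ref{sec:offline}. Correctness is inherited verbatim: each reported triangle is an $\eps$-triangle because every pair $q,s$ lies either in one canonical cell (distance $\le \eps/2$) or in two canonical cells whose centers are within $1+\eps/2$ (so $\phi(q,s)\le \eps/4+\eps/4+1+\eps/2\le 1+\eps$), and the interval tree filter guarantees $|I_p\cap I_q\cap I_s|\ge \tau$; conversely, any $\tau$-durable triangle anchored at $p$ is captured because the two other endpoints necessarily fall into canonical cells of $p$ and the center-distance check is passed. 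Duplicate control via the anchor convention is unchanged.

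For the running time, the preprocessing is $O(n\log^2 n)$. For each of the $n$ anchor points, the ball query returns $k = O(\log n + \eps^{-d})$ canonical cells in $O(\log n + \eps^{-d})$ time, the pairwise center-distance checks (type-(2) triangles) contribute $O(k^2) = O(\log^2 n + \eps^{-2d})$, and the remaining work is proportional to the number of triples emitted. Summing over $p$ yields the claimed bound
\[
O\!\left(n(\eps^{-d}\log n + \eps^{-2d} + \log^2 n) + \OUT\right).
\]
The one subtlety I expect to be the main obstacle is verifying in detail the $O(\log n + \eps^{-d})$ bound on the number of canonical quadtree cells needed for the $\eps$-approximate $\ell_\alpha$-ball around $p$; this amounts to a standard packing argument on a compressed quadtree showing that cells of diameter $\Theta(\eps)$ within distance $O(1)$ of $p$ number $O(\eps^{-d})$, while the ancestor chain adds an $O(\log n)$ term due to compression. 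Everything else is a direct transposition of the cover-tree argument.
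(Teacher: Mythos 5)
Your proposal matches the paper's own argument essentially verbatim: the paper also replaces the cover tree inside $\DTC$ with a quadtree over $P$, extracts $O(\log n + \eps^{-d})$ canonical cells of diameter at most $\eps/2$ within $\ell_\alpha$-distance $1+\eps/4$ of the query point, and then reuses the \textsc{ReportTriangle} procedure and its correctness analysis unchanged. The one subtlety you flag (the packing/compression argument behind the $O(\log n+\eps^{-d})$ cell count) is exactly what the paper also leaves to standard quadtree references, so there is nothing further to reconcile.
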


\subsection{Other patterns}
In this subsection we show how we can extend the offline algorithm to report i) durable cliques of constant size, ii) durable paths of constant size, and iii) durable $k$-star patters. The algorithms can also be extended to handle incremental queries, similarly to $tau$-durable triangles.

\mparagraph{Cliques}
Let $S\subseteq P$ be a subset of points.
$S$ is a $\tau$-durable $m$-clique if i) $|S|=m$, ii) for every pair $p,q\in S\times S$, $\dist(p,q)\leq 1$, and iii) $|\cap_{p\in S}I_p|\geq \tau$.
Similarly, $S$ is called a $\tau$-durable $\eps$-$m$-clique if i, iii remain the same and for every pair $p,q\in S\times S$, $\dist(p,q)\leq 1+\eps$.

We consider that $m=O(1)$.
The algorithm to report all $\tau$-durable $m$-cliques is similar to Algorithm~\ref{alg:offline-2}. The only difference is that instead of considering all pairs $\C_{p,i}$, $\C_{p,j}$ of the nodes in the cover tree, we consider all possible subsets of size $m$. Let $\C_{p,j_1},\ldots, \C_{p,j_m}$ be the family of $m$ subsets. If all pairwise distances among the representative points are at most $1+\eps/2$ then we report all $m$-cliques $\C_{p,j_1}\times\ldots\times \C_{p,j_m}$. The correctness follows straightforwardly from Section~\ref{sec:offline}. In particular we report all $\tau$-durable $m$-cliques and we might also report a few $\tau$-durable $\eps$-$m$-cliques. The running time is also asymptotically the same with Algorithm~\ref{alg:offline-2}.

\mparagraph{Paths}
Let $S\subseteq P$ be a subset of points.
$S$ is a $\tau$-durable $m$-path if i) $|S|=m$, ii) there is an ordering of the points such that the distance of two consecutive points is at most $1$, and iii) $|\cap_{p\in S}I_p|\geq \tau$.
Similarly, $S$ is called a $\tau$-durable $\eps$-$m$-path if i), iii) remain the same, and the distance between consecutive points is at most $1+\eps$.

We consider that $m=O(1)$.
The algorithm to report all $\tau$-durable $m$-cliques is similar to Algorithm~\ref{alg:offline-2}. The only difference is that instead of considering all pairs $\C_{p,i}$, $\C_{p,j}$ of the nodes in the cover tree, we consider all possible subsets of size $m$. Let $\C_{p,j_1},\ldots, \C_{p,j_m}$ be the family of $m$ subsets. We try all possible $O(m!)=O(1)$ orderings and we check if we find an ordering 
$\C_{p,j_1},\ldots, \C_{p,j_m}$ such that $\dist(\rep_{j_1},\rep_{j_2})\leq 1+\eps/2$, $\dist(\rep_{j_2},\rep_{j_3})\leq 1+\eps/2$, \ldots $\dist(\rep_{j_{m-1}},\rep_{j_m})\leq 1+\eps/2$. If this is true then we report all $m$-paths $\C_{p,j_1}\times\ldots\times \C_{p,j_m}$. The correctness follows straightforwardly from Section~\ref{sec:offline}.
In particular we report all $\tau$-durable $m$-paths and we might also report a few $\tau$-durable $\eps$-$m$-paths.
The running time is also asymptotically the same with Algorithm~\ref{alg:offline-2}.

\mparagraph{$k$-star patterns}
Let $S\subseteq P$ be a subset of points.
$S$ is a $\tau$-durable $m$-star if i) $|S|=m$, ii) there is a central point $p\in S$ such that $\dist(p,q)\leq 1$ for every other $q\in S$, and iii) $|\cap_{p\in S}I_p|\geq \tau$.
Similarly, $S$ is called a $\tau$-durable $\eps$-$m$-star if i), iii) remain the same, and the distance between $p$ and any other point in $S$ is at most $1+\eps$.

We consider that $m=O(1)$.
The algorithm to report all $\tau$-durable $m$-cliques is similar to Algorithm~\ref{alg:offline-2}.
For each point $p\in P$, we run a query $\DTCQ(p,\tau, \eps/2)$, but instead of querying points within distance $1$ from $p$, i.e., points in ball $\mathcal{B}(p,1)$, we query pints within distance $2$ from $p$, i.e., points in ball $\mathcal{B}(p,2)$. We need that change because $p$ might belong to an $m$-star pattern $S$ having the largest left endpoint $I_p^-\geq \max_{q\in S}I_q^-$, while not being the central point. In this case, it is always true that $S\subseteq P\cap \mathcal{B}(p,2)$. Hence, we get $\C_p=\{\C_{p,1},\ldots, \C_{p,k}\}$, for $k=O(\eps^{-O(\ddim)})$ canonical nodes of the cover tree that approximately cover $\mathcal{B}(p,2)$. Then we visit each node $\C_{p,j}$. We initialize a counter $c=0$. For every other node $\C_{p,h}\in \C_p$ we check whether $\dist(\rep_j,\rep_h)\leq 1+\eps/2$. If yes then we update $c=c+|\C_{p,h}|$. In the end, if $c>m$ there exist $|\C_{p,j}|$, $m$-star patterns to report. Hence, for each point $q\in \C_{p,j}$ we report $q$ as the central point and then we visit all nodes $\C_{p,h}$ with $\dist(\rep_j,\rep_h)\leq 1+\eps/2$ to report all points in $C_{p,h}$.
The correctness follows straightforwardly from Section~\ref{sec:offline} and the fact that for each $p$ we report all $m$-star patterns that $p$ belongs to (not necessarily as the central point) having the maximum left endpoint on its corresponding temporal interval. In particular we report all $\tau$-durable $m$-star patterns and we might also report a few $\tau$-durable $\eps$-$m$-star patterns. The running time is also asymptotically the same with Algorithm~\ref{alg:offline-2}.

%\section{Missing material in Section~\ref{sec:online}}
%\label{appndx:corrMaintain}

\section{Missing Material in Section~\ref{sec:aggregate}}
\label{appendix:aggregate}

%\subsection{SUM}

\subsection{UNION}
\begin{algorithm}
\caption{{\sc ReportUNIONPair}$(\DTCUNION, p, \tau, \eps,\kappa)$}
\label{alg:union2}

$\C_{p}: \{\C_{p,1}, \C_{p,2}, \cdots, \C_{p,k}\} \gets \DTCQ(p,\tau,\eps/2)$,
    with $\rep_i$ denoting the representative point of the cover tree node for $\C_{p,i}$,
    and $\ITUNION_{p,i}$ denoting the annotated interval tree for this cover tree node\;
\ForEach{$j \in [k]$}{
    \ForEach{$q \in \C_{p,j}$ in descending order of $I_q^+$}{
        $I'\gets I_q\cap I_q$\;
        $\bar{I}=\textsc{MaxIntersection}(I')$\;
        $H\gets \textsf{newHeap}(\{(\bar{I},I', |I\cap I'|)\})$\;
        %$U\gets newTree(\{I'\})$\;
        $t\gets 0$\;
        \For{$h=1 \ldots \kappa$}{
            $(I_x,I_y,|I_x\cap I_y|)=H.removeTop()$\;
            $t\gets t+|I_x\cap I_y|$\;
            \ForEach{$I_z\in I_y\setminus I_x$}{
                $\bar{I}=\textsc{MaxIntersection}(I_z)$\;
                $H.\textsf{insert}(\bar{I}, I_z, |\bar{I}\cap I_z|)$\;
            }
        }
        \If{$t\geq (1-1/e)\tau$}{
            \KwSty{report} $(p,q)$\;
        }\Else{
            \KwSty{break}\;
        }

    }
}

\KwSty{Subroutine} $\textsc{MaxIntersection}(I_{in})$ \Begin{
            $\mu \gets -1$\;
            $\hat{I}\gets \emptyset$\;
            \ForEach{$i \in [k]$}{
                \If{$\dist(\rep_i, \rep_j) \le 1 +\frac{\eps}{2}$}{
                    $\hat{I}_i \gets \textsc{ComputeMaxUnionD}(\ITUNION_{p,i}, I_{in})$\;
                }
                 \If{$|\hat{I}_i\cap I_{in}|>\mu$}{
                $\mu=|\hat{I}_i\cap I_{in}|$\;
                $\hat{I}=\hat{I}_i$\;
                }
            }
            \Return ($\hat{I}$)\;
}
\end{algorithm}

In Algorithm~\ref{alg:union2} we show the algorithm for finding all $(\tau,\kappa)$-\UNION\ durable pairs. The high level idea of the algorithm follows from Algorithm~\ref{alg:sum}.

Instead of $\ITSUM$, we have the primitive data structure $\ITUNION$. Given a query interval $I_{in}$ the goal is to find the interval $\hat{I}\in \mathcal{I}$ such that $|\hat{I}\cap I_{in}|$ is maximized, where $\mathcal{I}=\{I_p\mid p\in P\}$. This can be found using a variant of the interval tree $\ITUNION$ as follows: First, among all intervals that intersect $I_{in}^-$ it finds the interval $\hat{I}_a\in\mathcal{I}$ with the largest right endpoint. Second, among all intervals that intersect $I_{in}^+$ it finds the interval $\hat{I}_b\in\mathcal{I}$ with the smallest left endpoint. Third, among all intervals that lie completely inside $I_{in}$ it finds the longest interval $\hat{I}_c\in\mathcal{I}$. In the end, we return $\hat{I}\in\{\hat{I}_a, \hat{I}_b, \hat{I}_c\}$ with the longest intersection  $|\hat{I}\cap I_{in}|$. Similarly to $\ITSUM$, the data structure $\ITUNION$ has space $O(n\log n)$, it can be constructed in $O(n\log^2 n)$ time, and given a query interval $I_{in}$, it returns $\hat{I}$ in $O(\log^2 n)$ time.

Using $\ITUNION$, we construct $\DTCUNION$ similarly to $\DTCSUM$. The only difference is that for each node in the cover tree there exist an $\ITUNION$ data structure (instead of $\ITSUM$). The procedure $\textsc{ComputeMaxUnionD}   
(\ITUNION_{p,i}, I_{in})$ returns the interval $\hat{I}_i$ that has the largest intersection with $I_{in}$ and its corresponding point in $P$ lies in $\C_{p,i}$.

Next, we describe 
%Having the procedure $\textsc{ComputeMaxUnionD}(\ITUNION_{p,i}, I_{in})$, we describe
the subroutine $\textsc{MaxIntersection}(I_{in})$. It simply considers all canonical nodes in $\C_p$
running $\textsc{ComputeMaxUnionD}(\ITUNION_{p,i}, I_{in})$ for each $\C_{p,i}\in \C_p$ such that $\dist(\rep_i,\rep_j)\leq 1+\eps/2$. Hence $\textsc{MaxIntersection}(I_{in})$ visits all canonical nodes that are close to both $p$ and $q$, and among these nodes $\C_{p,i}$, it returns the interval $\hat{I}=\argmax_{\hat{I}_i}|\hat{I}_i\cap I_{in}|$, i.e., the interval with the largest intersection with $I_{in}$. We also notice that $\textsc{ComputeMaxUnionD}$ can be easily modified so that we always skip $I_p$ and $I_q$ from the procedure of finding the interval in $\mathcal{I}$ with the largest intersection with $I_{in}$.
From the proofs in the previous sections we have that $\dist(p,q)\leq 1+\eps/2$ while the witness set has distance at most $1+\eps$ from both $p$ and $q$. Overall, $\textsc{MaxIntersection}(I_{in})$ finds an interval $\hat{I}=I_{s}$ such that $$|I_s\cap I_{in}|\geq \max_{s'\in P: \dist(p,s'), \dist(q,s')\leq 1}|I_{s'}\cap I_{in}|,$$ and $$\dist(p,s), \dist(q,s)\leq 1+\eps.$$

Finally, we describe {\sc ReportUNIONPair}$(\DTCUNION, p, \tau, \eps,\kappa)$. After finding $\C_p$ and for each $\C_{p,j}$ we visit $q\in \C_{p,j}$ in descending order of $I_q^+$ as we did in Algorithm~\ref{alg:sum}. For each pair $(p,q)$ we check, we run the greedy algorithm for the max $k$-coverage problem. First, we find the interval $\bar{I}\in \I$ that has the largest intersection with $I'=I_p\cap _q$. We create a max heap $H$ and we insert the pair $(\bar{I}, I')$ with value $|\bar{I}\cap I'|$. Then the algorithm proceeds in $\kappa$ iterations. In each iteration, it finds the pair $(I_x,I_y)$ in the max heap $H$ with the maximum $|I_x\cap I_y|$. $I_x$ is an interval from $\mathcal{I}$, while $I_y$ is an uncovered segment of $I'$. Hence, in each iteration, it finds the interval $I_x$ that covers the largest uncovered area of $I'$. Then we add $|I_x\cap I_y|$ in variable $t$ that maintains the overall union the algorithm has computed. An interval $I_x$ might split $I_y$ into two smaller uncovered segments or into one smaller uncovered segment of $I'$. In each case, $I_z\subseteq I_y$ represents one uncovered segment created after adding $I_x$. We run $\textsc{MaxIntersection}(I_z)$ and we find the interval $\bar{I}\in \mathcal{I}$ that covers the largest portion of $I_z$ and we insert the pair $(\bar{I}, I_z)$ in max heap $H$ with value $|\bar{I}\cap I_z|$. We repeat the same procedure for $\kappa$ iteration. In the end we check whether $t\geq (1-1/e)\tau$. If yes, we report the pair $(p,q)$, otherwise we skip $\C_{p,j}$ and continue with the next canonical node. Overall, this algorithm gives an implementation of the greedy algorithm for the max $\kappa$-coverage problem in our setting, using efficient data structure to accelerate the running time.

Putting everything together, the correctness of this algorithm follows by the correctness of the greedy algorithm for the max $\kappa$-coverage problem, the correctness of $\textsc{MaxIntersection}(I_{in})$, and the correctness of Algorithm~\ref{alg:sum}.
For a pair $(p,q)$ if we find that $t\geq(1-1/e)\tau$, then $(p,q)$ is definitely an $((1-1/e)\tau,\kappa)$-\UNION\ $\eps$-pair.
As we argued in Algorithm~\ref{alg:sum}, assume that for a $q\in \C_{p,j}$ we find that $t<(1-1/e)\tau$. Then it is safe to skip $\C_{p,j}$ because there is no other $(\tau,\kappa)$-\UNION\ durable pair to report. 
Notice that the approximation factor for the greedy algorithm is $1-1/e$, so if the greedy implementation returns $t<(1-1/e)\tau$, we are sure that the pair $(p,q)$ is not $(\tau,\kappa)$-\UNION\ durable. Hence, any other $w\in\C_{p,j}$ with $I_w^+<I_q^+$ will also not be $(\tau,\kappa)$-\UNION\ durable.
In any case, our algorithm returns all $(\tau,\kappa)$-\UNION\ durable pairs and might return some additional $((1-1/e)\tau,\kappa)$-\UNION\ durable $\eps$ pairs. Hence, it holds that $|K_{\tau,\kappa}| \leq \OUT \leq |K_{(1-1/e)\tau, \kappa}^{\eps}|$.

Next, we analyze the running time of our algorithm. As pointed out, $\ITUNION$ is constructed in $O(n\log^2 n)$ time and it finds the interval that covers the largest uncovered area of a query interval in $O(\log^2 n)$ time.
Hence, $\DTCUNION$ is constructed in $O(n\log^3 n)$ time.
The subroutine $\textsc{MaxIntersection}(I_{in})$ calls $\textsc{ComputeMaxUnionD}(\ITUNION_{p,i}, I_{in})$, $O(\eps^{-O(\ddim)})$ times. For each pair $(p,q)$ we check, the subroutine $\textsc{MaxIntersection}(I_{in})$ is called $O(\kappa)$ times, while all update operations in the max heap $H$ takes $O(\kappa\log n)$ time. For each point $p$ we might check at most $O(\eps^{-O(\ddim)})$ pairs that are not reported, one for each canonical node in $\C_p$.
Overall, Algorithm~\ref{alg:union2} runs in $O(n\log^3 n+(n+\OUT)\eps^{-O(\ddim)}\kappa\log^2 n)$ time.

\end{document}